\documentclass[12pt]{article}
\usepackage{amssymb}
\usepackage{amsmath,bm}
\usepackage{graphics,mathrsfs}
\usepackage{graphicx,epsfig}
\usepackage{subfigure}
\usepackage{placeins}
\usepackage{indentfirst}
\usepackage{setspace}
\usepackage{enumerate}
\usepackage{enumitem}
\usepackage{caption}
\usepackage{varioref}
\usepackage{booktabs,tabularx}
\usepackage{color}
\usepackage{natbib}
\usepackage{epstopdf}
\bibliographystyle{harvard}
\setcitestyle{authoryear,open={(},close={)}}
\usepackage{amsthm}
\parindent 0.5 cm
\makeatletter \oddsidemargin  -.1in \evensidemargin -.1in
\textwidth 11.5cm \topmargin 0.0cm \textheight 19.5cm
\setlength\textheight{8.6in} \setlength\textwidth{6.8in}
\setlength\topmargin{0.0in}
\begin{document}
	\newcommand{\bea}{\begin{eqnarray}}
		\newcommand{\eea}{\end{eqnarray}}
	\newcommand{\nn}{\nonumber}
	\newcommand{\bee}{\begin{eqnarray*}}
		\newcommand{\eee}{\end{eqnarray*}}
	\newcommand{\lb}{\label}
	\newcommand{\nii}{\noindent}
	\newcommand{\ii}{\indent}
	\newtheorem{theorem}{Theorem}[section]
	\newtheorem{example}{Example}[section]
	\newtheorem{cor}{Corollary}[section]
	\newtheorem{definition}{Definition}[section]
	\newtheorem{lemma}{Lemma}[section]
	\newtheorem{remark}{Remark}[section]
	\newtheorem{proposition}{Proposition}[section]
	\numberwithin{equation}{section}
	\renewcommand{\theequation}{\thesection.\arabic{equation}}
	\renewcommand\bibfont{\fontsize{10}{12}\selectfont}
	\setlength{\bibsep}{0.0pt}
	\title{\bf Statistical inference for dependent competing risks data under adaptive Type-II progressive hybrid censoring }
	\author{Subhankar Dutta\thanks { Subhankar Dutta (Corresponding author : subhankar.dta@gmail.com)}~ and Suchandan Kayal
	}
	
	
	\maketitle
	\maketitle 
	\noindent {\it Department of Mathematics, National Institute of		Technology Rourkela, Rourkela-769008, India} \\
	\begin{center}
		\textbf{Abstract}
	\end{center}

In this article, we consider statistical inference based on dependent competing risks data from  Marshall-Olkin bivariate Weibull distribution. The maximum likelihood estimates of the unknown model parameters have been computed by using Newton-Raphson method under adaptive Type II progressive hybrid censoring with partially observed failure causes. Existence and uniqueness of maximum likelihood estimates are derived. Approximate confidence intervals have been constructed via the observed Fisher information matrix using asymptotic normality property of the maximum likelihood estimates. Bayes estimates and highest posterior density credible intervals have been calculated under gamma-Dirichlet prior distribution by using Markov chain Monte Carlo technique. Convergence of Markov chain Monte Carlo samples is tested. In addition, a Monte Carlo simulation is carried out to compare the effectiveness of the proposed methods.  Further, three different optimality criteria have been taken into account to obtain the most effective censoring plans. Finally, a real-life data set has been analyzed to illustrate the operability and applicability of the proposed methods.\\
\\
\textbf{Keywords:}  Adaptive Type II progressive hybrid censoring;  Dependent competing risk;  Gamma-Dirichlet Prior; HPD credible interval; MCMC convergence; Optimality.  \\
\\
\textbf{MATHEMATICS SUBJECT CLASSIFICATIONS 2010:} 62F10; 62F15; 62E15.

\section{Introduction}
In life-testing experiments, it is very common that due to complexity of the internal structure and  external environment, failure of any product occurs due to various competing causes. In the literature, such types of models are dubbed as the competing risk models. Competing risks data frequently appear in many fields, such as engineering, biology, social science and medical science. In reliability analysis, to get around the issue of model identifiability, many researchers assume that the causes of failure are independent. See, for example \cite{panwar2015competing}, \cite{ashour2017inference}, \cite{koley2017generalized}, \cite{wang2018inference}, \cite{wang2019inference}, \cite{mahto2021inference}, \cite{dutta2022inference}, \cite{dutta2022bayesian} and \cite{dutta2023inference}. However, in general, it is not easy to verify the assumption that the causes of failure are independent to each other. Further, due to complex system design and operating conditions in real-life experiments, the causes of failure affect each other and become inter-dependent. Thus, in practical situations, the assumption of dependent causes of failure is more reasonable than the assumption of independent causes of failure. When analysing data to address dependent competing risks models, it has been shown that the Marshall-Olkin bivariate distributions are frequently used. For details, one may refer to \cite{kundu2013bayes} in this matter. In recent years, statistical inference of dependent competing risks model has gained much attention from many researchers. \cite{feizjavadian2015analysis} discussed the classical estimation of the parameters of Marshall-Olkin bivariate Weibull (MOBW) distribution in the presence of a dependent competing risks model under progressive hybrid censored sample. \cite{liang2019inference} considered inference of the Marshall-Olkin bivariate exponential (MOBE) distribution for dependent competing risks model under generalized progressive hybrid censoring scheme. \cite{du2021statistical} considered the inference for dependence competing risk model by using MOBE distribution under adaptive Type-II progressive hybrid censoring.\\

Censoring has become a commonly used phenomenon in survival analysis and reliability studies. Compared to complete data, if one gets the exact failure time of only a few experimental units in a lifetime experiment, such data are called censored. Nowadays, the quality and lifetime of any industrial product have become higher because of modern science and technology. Thus, it has become more challenging to increase product qualities with reliability testing in a short period with low expenditure. In this context, censoring is sensible and important to study the failure data  in a reliability experiment. A variety of censoring schemes have already been put forth by several researchers in order to improve the efficiency of an experiment. In survival analysis, Type-I and Type-II censoring schemes are most commonly used.  The mixture of these two censoring schemes is known as hybrid censoring schemes. These conventional censoring schemes do not allow removal of experimental units at intermediate points except at the termination point. To overcome such disadvantage, progressive Type-II censoring scheme is introduced, which allows an experimenter to take out units from an experiment at different phases. To improve the efficiency of the experiments, \cite{kundu2006analysis} introduced progressive hybrid censoring scheme (PHCS), which is a combination of the conventional hybrid and progressive censoring schemes. The drawback of PHCS is that the effective sample size may become zero, which is as similar as Type-I censoring scheme. To get beyond such obstacle, \cite{ng2009statistical} introduced adaptive Type-II progressive hybrid censoring scheme (AT-II PHCS), where the experimental time may exceed the predetermined time $T$ and the effective sample size $m$ is prefixed. In AT-II PHCS, a progressive scheme $(R_1,R_2,\cdots,R_m)$ is provided. Consider that the time of $i$-th failure of units is represented as $X_{i:m:n}$. In AT-II PHCS, if $X_{m:m:n} < T$ then the experiment is terminated at $X_{m:m:n}$ with the censoring scheme $(R_1,R_2,\cdots,R_m)$. Otherwise, if $X_{m:m:n}> T,$ then there exists  $d$ such that $X_{d:m:n}<T<X_{d+1:m:n}$ and the experiment terminates at $X_{m:m:n}$ with no removal by setting $R_i=0$, where $i=d+1,\cdots,m-1$ and $R_{m}= n-m-\sum_{i=1}^{d}R_{i}$. This adaptation ensures that the experiment will end once the predetermined number of failures has been obtained. In recent years, AT-II PHCS has received extensive consideration in the statistical literature. \cite{hemmati2011bayesian} discussed the competing risks model based on exponential distributions under AT-II PHCS. \cite{ismail2014inference} considered the classical estimation of parameters of Weibull distribution under step-stress partially accelerated life test model based on AT-II PHCS. \cite{nassar2017estimation} discussed frequentist and Bayesian estimation for the parameters of inverse Weibull (IW) distribution based on AT-II PHCS. \cite{panahi2020estimation} discussed the estimation of parameters of inverted exponentiated Rayleigh distribution under AT-II PHCS.\\

We recall that the AT-II PHCS can be applied in quality inspections of manufacturing plants that require enough failure information in a brief time period. For instance, we cannot observe an adequate number of failures from ten thousand or more electronic product keystrokes in between the specific time and the test price increases there. AT-II PHCS is relevant to apply in such cases. Sometimes the failure risks may be correlated and affects each other to make an individual failure early. Hence the competing risks are usually dependent. For example, \cite{lin1999nonparametric} investigated colon cancer data in which cancer recurrence or death were the causes of failure. Obviously, these two causes are dependent and correlated with each other. Note that the MOBE distribution may not be applicable when bivariate data shows a non-constant hazard rate function or a unimodal marginal probability density function. The MOBW distribution is more appropriate to fit such models. Furthermore, when dependent competing risks exist, a MOBW distribution has a correlation control parameter that may be applied. the duration of paired organ (eyes, lungs and kidneys) failure can be analyzed by using MOBW model in survival analysis. Also in reliability analysis, this model fits to the duration of paired components (aircraft engines) failure. Motivated by such reasons, we have considered a dependent competing risks model using MOBW distributions to discuss statistical inferences based on AT-II PHCS. \\

In this study, our main goal is to analyze and investigate the statistical inference of the parameters of MOBW distribution using both frequentist and Bayesian approaches under AT-II PHC dependent competing risks model. In the main results, maximum likelihood estimates (MLEs) are observed to be inaccessible in closed form. In order to obtain MLEs, a suitable numerical method has been applied. Using the asymptotic normality properties of the MLEs, approximate confidence intervals (ACIs) have been derived. Bayes estimators have been obtained under two different types of loss functions. Since the Bayes estimates cannot be derived in closed form, Markov Chain Monte Carlo (MCMC) method has been implemented.  Highest posterior density (HPD) credible intervals are also constructed using MCMC samples. MCMC convergence is also discussed. Further, a Monte Carlo simulation study has been conducted to assess the effectiveness of the proposed various estimates based on the absolute bias (AB), mean squared error (MSE), average width (AW), and coverage probability (CP). In addition, we have proposed optimal life-testing plans based on three different optimality criteria. \\

The remaining portion of the paper is structured as follows. In Section 2, the model description has been provided. The maximum likelihood estimates of the unknown parameters with their corresponding approximate confidence intervals have been derived in Section 3. In Section 4, the MCMC approach for Bayesian analysis has been demonstrated, and the associated HPD credible intervals have been constructed. MCMC convergence is also tested there. A Monte Carlo simulation study has been carried out in Section 5. In Section 6, three different optimality criteria have been discussed to choose an optimal progressive censoring plan. In Section 7, a real data set has been analyzed to verify the effectiveness of the proposed methods in practical situations. Finally, some concluding remarks have been added in Section 8.

\section{Model and data description}
\subsection{MOBW distribution}
In this paper, MOBW distribution has been addressed for statistical inference. First, we describe this model very briefly. \cite{marshall1967multivariate} proposed a multivariate lifetime model known as the Marshall-Olkin model, that can be explained as follows. For $i=0,~1,~2$, assume that $V_i$ $\sim$ $Weibull$ $(\alpha, \lambda_{i})$ be independent to each other. The cumulative distribution function (CDF), probability density function (PDF), and survival function of $V_i$ are given by
\begin{align}
	\nonumber F_{WE}(v;\alpha, \lambda_i)= 1- e^{-\lambda_i v^{\alpha}}, ~~ f_{WE}(v;\alpha, \lambda_i)= \alpha \lambda_i v^{\alpha-1} e^{-\lambda_i v^{\alpha}},~~
	\nonumber \mbox{and} ~~S_{WE}(v;\alpha, \lambda_i) = e^{-\lambda_i v^{\alpha}},
\end{align}
where $v>0,~ \alpha,~ \lambda_i>0,$ $i=0,~1,~2.$
Define $Y_{1}= min \{V_0,V_1\}$ and $Y_{2}= min \{V_0,V_2\}$, then $(Y_1,Y_2)$ has MOBW distribution with shape parameter $\alpha$ and scale parameters $\lambda_0$, $\lambda_1$ and $\lambda_2$. Henceforth, we denote $(Y_1,Y_2) \sim MOBW (\lambda_0,\lambda_1,\lambda_2,\alpha)$. Here, $Y_1$ and $Y_2$ are independent if $\lambda_0=0$; otherwise, they are dependent. Hence $\lambda_0$ is the parameter describing correlation between these two variables $Y_1$ and $Y_2$. Let us assume that $\lambda_{012}= \lambda_0+\lambda_1+\lambda_2$ and $\lambda_{ab}= \lambda_a+\lambda_b$, for $a,~b=0,~1,~2$ with $a \neq b$.
\begin{theorem}\label{th2.1}
	If we consider, $(Y_1,Y_2)\sim MOBW(\lambda_0,\lambda_1,\lambda_2,\alpha)$, then the joint survival function of $(Y_1,Y_2)$ is given by
	\begin{align}
		 S_{Y_1,Y_2}(y_1,y_2)
		= \begin{cases}
			S_{WE}(y_1;\alpha, \lambda_1)~ S_{WE}(y_2;\alpha, \lambda_{02}), ~~~\mbox{if}~~ y_1<y_2\\
			S_{WE}(y_1;\alpha, \lambda_{01})~ S_{WE}(y_2;\alpha, \lambda_2),~~~\mbox{if}~~ y_2<y_1 \\
			S_{WE}(y;\alpha, \lambda_{012}),~~~~~~~~~~~~~~~~~~~~~~~\mbox{if}~~y=y_1=y_2.
		\end{cases} \label{2.1}	
	\end{align} 
\end{theorem}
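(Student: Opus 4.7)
The plan is to compute the joint survival function directly from the definitions $Y_1 = \min\{V_0, V_1\}$ and $Y_2 = \min\{V_0, V_2\}$, exploit the fact that $V_0, V_1, V_2$ are mutually independent, and then split into the three cases in (\ref{2.1}) according to the order of $y_1$ and $y_2$.

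The first step is to rewrite the event. Since $\{Y_1 > y_1\} = \{V_0 > y_1\} \cap \{V_1 > y_1\}$ and $\{Y_2 > y_2\} = \{V_0 > y_2\} \cap \{V_2 > y_2\}$, I combine the two conditions on $V_0$ into $\{V_0 > \max(y_1,y_2)\}$. This yields
\begin{align*}
S_{Y_1,Y_2}(y_1,y_2) = P\bigl(V_0 > \max(y_1,y_2)\bigr)\, P(V_1 > y_1)\, P(V_2 > y_2)
\end{align*}
by independence of $V_0, V_1, V_2$. Using the Weibull survival function $S_{WE}(v;\alpha,\lambda_i) = e^{-\lambda_i v^\alpha}$ stated earlier, each factor becomes an explicit exponential.

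The second step is to branch on the sign of $y_2 - y_1$. If $y_1 < y_2$, then $\max(y_1,y_2) = y_2$, and the $V_0$-factor $e^{-\lambda_0 y_2^\alpha}$ multiplies with $e^{-\lambda_2 y_2^\alpha}$ to give $e^{-\lambda_{02} y_2^\alpha} = S_{WE}(y_2;\alpha,\lambda_{02})$, while the $V_1$-factor gives $S_{WE}(y_1;\alpha,\lambda_1)$, matching the first case. The case $y_2 < y_1$ is symmetric: the $V_0$-factor merges with the $V_1$-factor through $\lambda_0 + \lambda_1 = \lambda_{01}$. For $y_1 = y_2 = y$, all three exponents share $y^\alpha$ and collapse to $e^{-\lambda_{012} y^\alpha} = S_{WE}(y;\alpha,\lambda_{012})$.

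There is no genuine obstacle here; the only point requiring a moment of care is that the three cases are driven entirely by which variable realizes $\max(y_1,y_2)$, so the presence of the common shock $V_0$ is what produces the singular component on the diagonal $y_1 = y_2$. I would present the argument as a single display that identifies $S_{Y_1,Y_2}$ with $S_{WE}(\max(y_1,y_2);\alpha,\lambda_0)\,S_{WE}(y_1;\alpha,\lambda_1)\,S_{WE}(y_2;\alpha,\lambda_2)$ and then specialize to the three regions, so that (\ref{2.1}) follows by a one-line simplification in each case.
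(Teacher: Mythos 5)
Your proposal is correct and follows essentially the same route as the paper's Appendix A.1: express $\{Y_1>y_1,Y_2>y_2\}$ as $\{V_1>y_1,V_2>y_2,V_0>\max(y_1,y_2)\}$, factor by independence into $S_{WE}(y_1;\alpha,\lambda_1)\,S_{WE}(y_2;\alpha,\lambda_2)\,S_{WE}(\max(y_1,y_2);\alpha,\lambda_0)$, and specialize to the three regions. No gaps; your case-by-case simplification is exactly the intended argument.
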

\begin{proof}
	See Appendix $A.1$.
\end{proof}
\begin{cor} \label{cor1}
	If $(Y_1,Y_2)\sim MOBW(\lambda_0,\lambda_1,\lambda_2,\alpha)$, then the joint PDF of $(Y_1,Y_2)$ is given by
	\begin{align}
		f_{Y_1,Y_2}(y_1,y_2)= \begin{cases}
			f_{WE}(y_1;\alpha, \lambda_1)~ f_{WE}(y_2;\alpha, \lambda_{02}), ~~~\mbox{if}~~ y_1<y_2\\
			f_{WE}(y_1;\alpha, \lambda_{01})~ f_{WE}(y_2;\alpha, \lambda_2),~~~\mbox{if}~~ y_2<y_1 \\
			\frac{\lambda_{0}}{\lambda_{012}}f_{WE}(y;\alpha, \lambda_{012}),~~~~~~~~~~~~~~~~~~\mbox{if}~~y=y_1=y_2.
		\end{cases} \label{2.2}
	\end{align}
\end{cor}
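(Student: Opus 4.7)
The plan is to derive the two pieces of the joint distribution separately: the absolutely continuous part on the off-diagonal region $\{y_1 \neq y_2\}$ by differentiating the joint survival function from Theorem \ref{th2.1}, and the singular part concentrated on the diagonal $\{y_1 = y_2\}$ by going back to the defining stochastic representation $Y_1 = \min\{V_0,V_1\}$, $Y_2 = \min\{V_0,V_2\}$. This split is necessary because the MOBW law is not absolutely continuous with respect to two-dimensional Lebesgue measure, so the third case of \eqref{2.2} cannot be obtained by mere differentiation of \eqref{2.1}.

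For the off-diagonal region I would use the standard identity $f_{Y_1,Y_2}(y_1,y_2) = \partial^2 S_{Y_1,Y_2}(y_1,y_2)/\partial y_1\partial y_2$. In the region $y_1<y_2$, $S_{Y_1,Y_2}$ factors as $S_{WE}(y_1;\alpha,\lambda_1)\,S_{WE}(y_2;\alpha,\lambda_{02})$, so differentiating once in $y_1$ and once in $y_2$ turns each Weibull survival factor into the corresponding density, yielding $f_{WE}(y_1;\alpha,\lambda_1)\,f_{WE}(y_2;\alpha,\lambda_{02})$. An identical computation, with the roles of $\lambda_1\leftrightarrow\lambda_{01}$ and $\lambda_2\leftrightarrow\lambda_{02}$ interchanged, handles the region $y_2<y_1$.

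For the singular part I would note that the event $\{Y_1 = Y_2\}$ coincides with $\{V_0 \leq \min(V_1,V_2)\}$, on which $Y_1 = Y_2 = V_0$. Using independence of $V_0,V_1,V_2$,
\begin{align*}
P(Y_1 = Y_2,\ Y_1 \leq y) &= \int_0^y f_{WE}(v;\alpha,\lambda_0)\,S_{WE}(v;\alpha,\lambda_1)\,S_{WE}(v;\alpha,\lambda_2)\,dv\\
&= \int_0^y \alpha \lambda_0 v^{\alpha-1} e^{-\lambda_{012} v^{\alpha}}\,dv
= \frac{\lambda_0}{\lambda_{012}}\,F_{WE}(y;\alpha,\lambda_{012}).
\end{align*}
Differentiating in $y$ gives the diagonal density $(\lambda_0/\lambda_{012})\,f_{WE}(y;\alpha,\lambda_{012})$ with respect to Lebesgue measure on the line $y_1=y_2$, matching the third case of \eqref{2.2}. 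Incidentally, setting $y\to\infty$ recovers $P(Y_1=Y_2) = \lambda_0/\lambda_{012}$, exactly the proportion of mass missing from the off-diagonal piece.

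There is no real obstacle here; the only subtlety worth stressing is conceptual, namely the mixed (continuous-plus-singular) nature of the MOBW law, which must be acknowledged for the decomposition to make sense. A brief consistency check — integrating the first two cases of \eqref{2.2} over $\{y_1<y_2\}$ and $\{y_2<y_1\}$ and verifying that the total is $1-\lambda_0/\lambda_{012}$ — would confirm that the two pieces together account for the whole probability mass.
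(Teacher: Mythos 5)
Your proof is correct, and for the absolutely continuous part it coincides with the paper's: both of you obtain the first two cases by differentiating the joint survival function of Theorem \ref{th2.1} once in each variable on the regions $\{y_1<y_2\}$ and $\{y_2<y_1\}$ (the sign you use, $+\partial^2 S_{Y_1,Y_2}/\partial y_1\partial y_2$, is the right one, since the two minus signs cancel). For the singular part the routes genuinely differ. The paper evaluates $\int_0^\infty\int_0^{y_2} f_{WE}(y_1;\alpha,\lambda_1)\,f_{WE}(y_2;\alpha,\lambda_{02})\,dy_1\,dy_2=\lambda_1/\lambda_{012}$ and the analogous integral giving $\lambda_2/\lambda_{012}$, and then appeals to the full-probability (complementarity) argument to assign the remaining mass $\lambda_0/\lambda_{012}$ to the diagonal with density $(\lambda_0/\lambda_{012})\,f_{WE}(y;\alpha,\lambda_{012})$; as written, that argument directly pins down only the total diagonal mass, the Weibull shape of the singular density being asserted rather than derived. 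You instead return to the latent representation, identify $\{Y_1=Y_2\}$ with $\{V_0\le\min(V_1,V_2)\}$, and compute $P(Y_1=Y_2,\ Y_1\le y)=(\lambda_0/\lambda_{012})\,F_{WE}(y;\alpha,\lambda_{012})$ explicitly, which delivers both the mass and the exact form of the one-dimensional density on the diagonal in one stroke; your closing consistency check is precisely the pair of integrals on which the paper's argument rests. Both arguments are valid: yours is a bit more self-contained and rigorous for the third case, while the paper's is shorter because it leans on the known structure of the Marshall--Olkin construction.
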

\begin{proof}
	See Appendix $A.2$. 
\end{proof}
\begin{figure}[htbp!]
	\begin{center}
		\subfigure[]{\label{c1}\includegraphics[height=1.5in,width=2.2in]{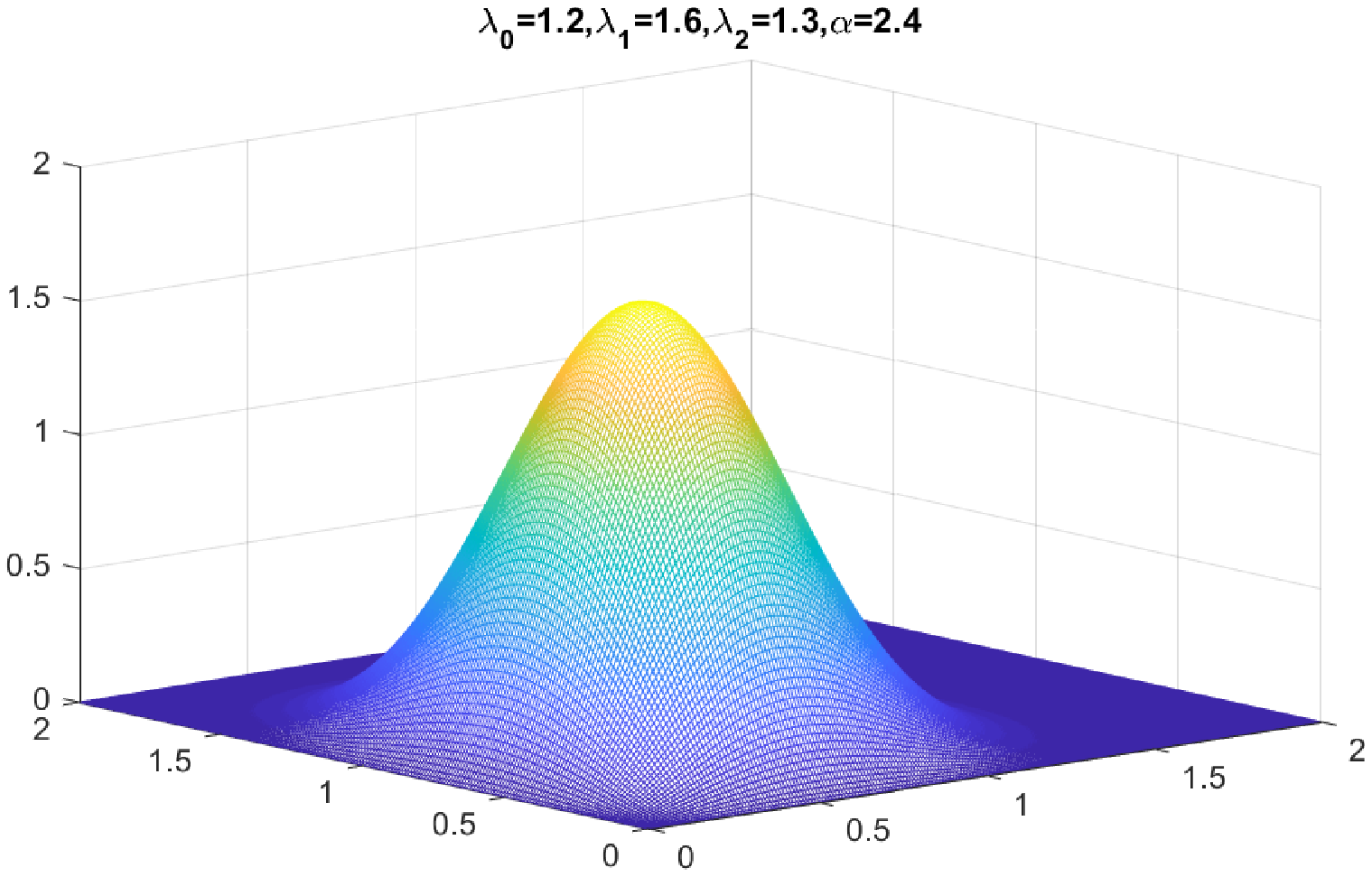}}
		\subfigure[]{\label{c1}\includegraphics[height=1.5in,width=2.2in]{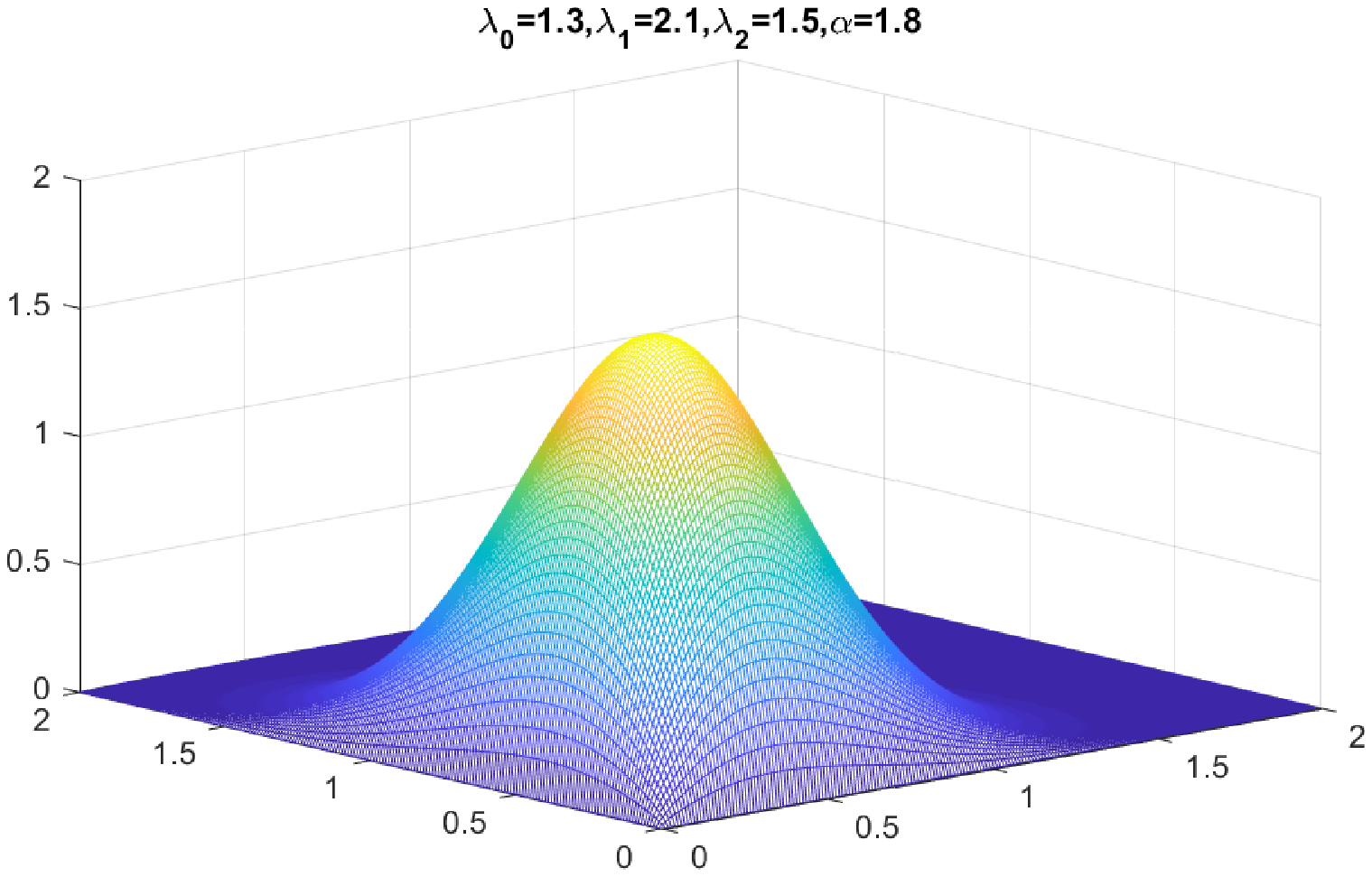}}
		\subfigure[]{\label{c1}\includegraphics[height=1.5in,width=2.2in]{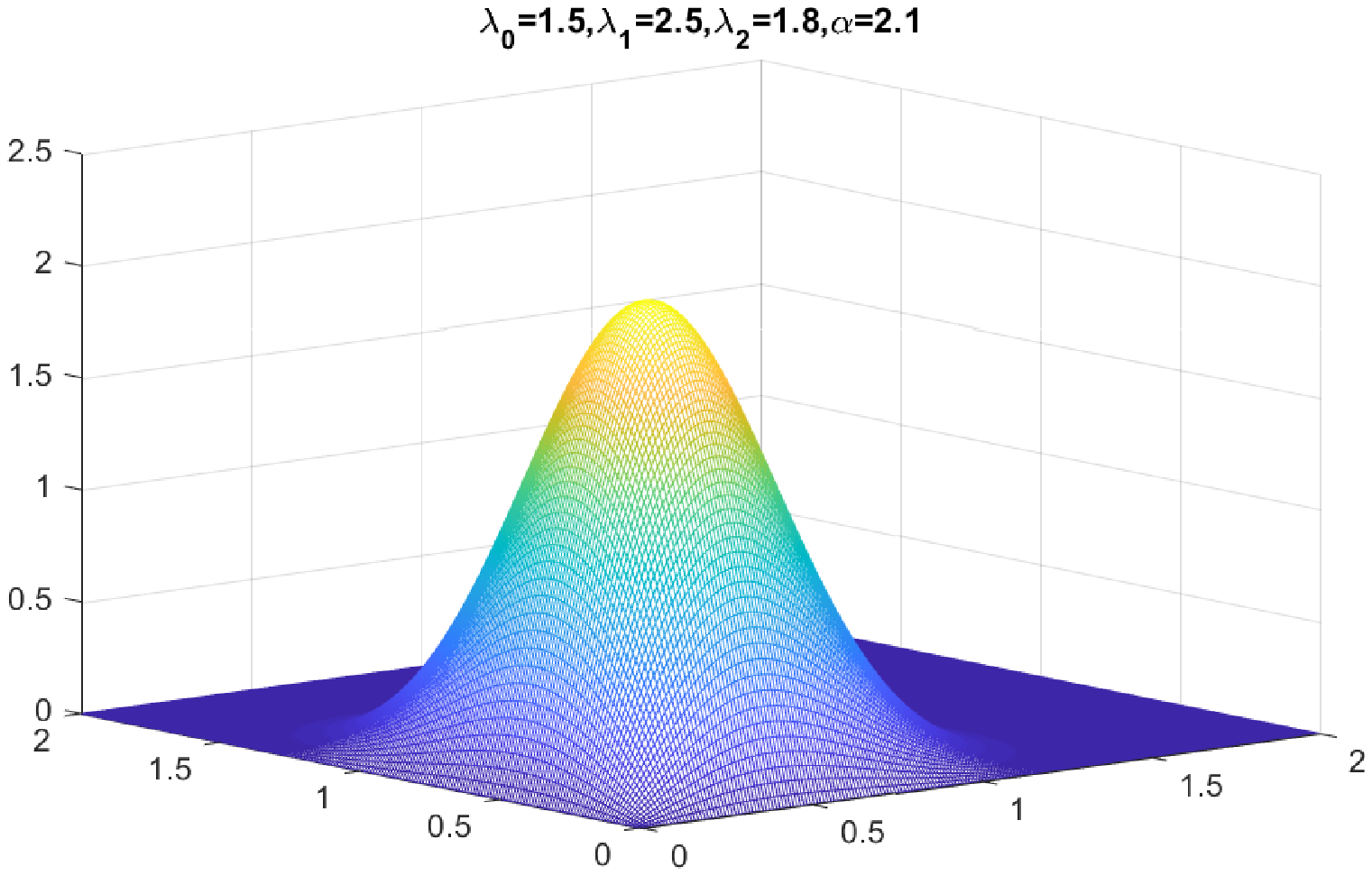}}
		\caption{Surface plots for joint PDF of MOBW. }
	\end{center}
\end{figure}
MOBW is one of the important Marshall-Olkin type distributions due to having singular and absolute continuous parts. The joint PDF $f_{Y_1,Y_2}(y_1,y_2)$ given by $(\ref{2.2})$ yields that densities with two dimensional Lebesgue measure are represented by first two expressions and a density with single dimensional Lebesgue measure is represented by third one. The unimodality property of the bivariate density function of  MOBW distribution can be easily observed from the surface plots given in Fig. $1$. In addition, for $\alpha=1$, MOBW distribution becomes Marshall-Olkin bivariate exonential distribution. Further, MOBW distribution has Weibull models as marginals which have monotone hazard rate functions and decreasing or unimodal PDFs. Due to having such statistical properties  and simpler mathematical structure, analysis of MOBW distribution under AT-II PHCS has been taken into account here may be considered as a proper model of discussion.  
\subsection{Data description}
Suppose that $n$ identical individuals are placed on a life test with lifetime $Y_{i}$ and assume that two causes of failures are known with $Y_i=min\{Y_{1i},Y_{2i}\}$, where $Y_{ki}$ is the latent failure time of $i$-th individual because of $k$-th ($k=1,2$) cause of failure. Then competing risk data under AT-II PHCS can be expressed as
\begin{center}
	\textbf{Case I:} $(y_{1:m:n},\delta_1,R_1), \cdots, (y_{m:m:n},\delta_m,R_m)$, where $y_{m:m:n} <T$~~~~~~~~~~~~~~\\
	\textbf{Case II:} $(y_{1:m:n},\delta_1,R_1), \cdots, (y_{d:m:n},\delta_d,R_d)$, where $y_{d:m:n}<T<y_{d+1:m:n}$,
\end{center}
where $y_{i:m:n}$ is the AT-II PHC sample of $Y_i$ and  the associated failure indicator $\delta_i$ can be expressed as
\begin{align}
	\delta_i = \begin{cases}
		0, ~~\mbox{if}~y_{1i}=y_{2i}\\
		1, ~~\mbox{if}~y_{1i}<y_{2i}\\
		2, ~~\mbox{if}~y_{1i}>y_{2i}\\
		3, ~~\mbox{unknown cause of failure}
	\end{cases} \label{2.3}
\end{align}
and $d$ is a number for which $y_{d:m:n}<T<y_{d+1:m:n}$.

\begin{remark}
	Nowadays the cause of failures can be identified in most cases due to the improvement of quality in research. However, due to complex external and internal interferences, there are some practical situations where each and every failure mode can not be discovered. Therefore, it is more reasonable to consider the unknown cause of failures in the study of the competing risks model.
\end{remark}

\section{Classical estimation}
\subsection{Maximum likelihood estimation}
Suppose that $Y_{i:m:n}$ is the $i$-th failure time of experimental units from a sample of $n$ units which are produced from two Weibull populations that are dependent on each other and have competing risks. For convenience, $y_{i:m:n}$ has been expressed as $y_i$ and the joint PDF is given by
\begin{align}
	\nonumber L(data) = C \prod_{i=0}^{m}& \big[f_{Y_1,Y_2}(y_i,y_i)\big]^{\delta_{i0}} \bigg[-\frac{\partial S_{Y_{1},Y_{2}}(y_1,y_2)}{\partial y_1}\bigg]_{(y_i,y_i)}^{\delta_{i1}} \bigg[-\frac{\partial S_{Y_{1},Y_{2}}(y_1,y_2)}{\partial y_2}\bigg]_{(y_i,y_i)}^{\delta_{i2}}\\
	 \times & f(y_i;\alpha,\lambda_{012})^{\delta_{i3}}	\prod_{i=1}^{J} S_{Y_{1},Y_{2}}(y_1,y_2)^{R_i} S_{Y_{1},Y_{2}}(y_m,y_m)^{R^{*}} \label{3.1}
\end{align}
where $C= \prod_{i=1}^{m}\prod_{j=1}^{m}(R_{j}+1)$, $\delta_{ij}=I(\delta_i=j)(j=0,1,2,3)$ such that $\sum_{j=0}^{3}\delta_{ij}=1$,
\begin{align}
\nonumber	J=\begin{cases}
		m, ~~\mbox{if}~~ y_m <T \\
		d, ~~~\mbox{if}~~ y_d<T<y_{d+1}
	\end{cases}	
~~\mbox{and}~~
	\nonumber	R^{*}=\begin{cases}
		0, ~~~~~~~~~~~~~~~~~~~~~~~\mbox{if}~~ y_m <T \\
		n-m-\sum_{i=1}^{d} R_{i}, ~~\mbox{if}~~ y_d<T<y_{d+1}.
	\end{cases}	
\end{align}
Suppose that $(Y_{i1},Y_{i2})\sim MOBW(\lambda_0,\lambda_1,\lambda_2,\alpha)$, $i=1,\cdots,n$ are independent and identically distributed. Using equations $(\ref{2.2})$ and $(\ref{3.1})$, the likelihood function is expressed as
\begin{align}
	\nonumber L(\lambda_0,\lambda_1,\lambda_2,\alpha)=&~ C \alpha^{m} \prod_{i=1}^{m} \lambda_0^{\delta_{i0}} \lambda_1^{\delta_{i1}} \lambda_2^{\delta_{i2}} \lambda_{012}^{\delta_{i3}}~ y_i^{\alpha-1} e^{-\lambda_{012}\big(\sum_{i=1}^{m}y_i^{\alpha}+\sum_{i=1}^{J}R_i y_i^{\alpha} + R^{*}y_m^{\alpha}\big)} \\
	=&~ C\alpha^{m} \lambda_0^{m_0} \lambda_1^{m_1} \lambda_2^{m_2} \lambda_{012}^{m_3} \prod_{i=1}^{m} y_i^{\alpha-1} e^{-\lambda_{012}\big(\sum_{i=1}^{m}y_i^{\alpha}+\sum_{i=1}^{J}R_i y_i^{\alpha} + R^{*}y_m^{\alpha}\big)}, \label{3.2}
\end{align}
where $ m_j=\sum_{j=0}^{3} \delta_{ij}$ represents the number of failures due to four different cases given in $(\ref{2.3})$ such that $\sum_{j=0}^{3} m_j=m$. Then the log likelihood function can be expressed as
\begin{align}
 \nonumber	\log L = & ~m \log \alpha + m_0 \log \lambda_0 + m_1 \lambda_1 + m_2 \log \lambda_2 + m_3 \log \lambda_{012}\\
	&~ + (\alpha-1) \sum_{i=1}^{m} \log y_i - \lambda_{012}~ A(\alpha), \label{3.3}
\end{align}
where $A(\alpha)= \sum_{i=1}^{m}y_i^{\alpha}+\sum_{i=1}^{J}R_i y_i^{\alpha} + R^{*}y_m^{\alpha}$.
Now, taking derivative of $(\ref{3.3})$ with respect to $\alpha$, $\lambda_0$, $\lambda_1$ and $\lambda_2$ and equating to zero, we get
\begin{align}
	\begin{cases}
	\frac{\partial \log L}{\partial \lambda_0}= &~ \frac{m_0}{\lambda_0} + \frac{m_3}{\lambda_{012}} -A(\alpha) =0,\\
	\frac{\partial \log L}{\partial \lambda_1}= &~ \frac{m_1}{\lambda_1} + \frac{m_3}{\lambda_{012}} -A(\alpha) =0,\\
	\frac{\partial \log L}{\partial \lambda_2}= &~ \frac{m_2}{\lambda_2} + \frac{m_3}{\lambda_{012}} -A(\alpha) =0,\\
	\frac{\partial \log L}{\partial \alpha} =&~ \frac{m}{\alpha} + \sum_{i=1}^{m} \log y_i - \lambda_{012}~ A^{\prime}(\alpha)=0,
\end{cases} \label{3.4}
\end{align}
where $A^{\prime}(\alpha)= \sum_{i=1}^{m} y_i^{\alpha} \log y_i + \sum_{i=1}^{J}R_i y_i^{\alpha} \log y_i + R^{*} y_m^{\alpha} \log y_m$. The MLEs of $\lambda_j$, $j=0,~1,~2$ for known shape parameter $\alpha$, are provided by the subsequent theorem.

\begin{theorem} \label{th3.1}
	Suppose $m_j >0~(j=0,1,2)$. Then the MLE of $\lambda_j$ given $\alpha$ is obtained as
	\begin{align}
		\widehat{\lambda}_j= \frac{m_j}{m_{012}} \frac{m}{A(\alpha)}, ~~~j=0,1,2 \label{3.5}
	\end{align}
where $m_{012}=m_0+m_1+m_2$.
\end{theorem}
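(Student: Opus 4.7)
The plan is to start from the three score equations in the system (\ref{3.4}) for $\lambda_0,\lambda_1,\lambda_2$ with $\alpha$ held fixed, and to exploit the symmetric structure in which $m_3/\lambda_{012}$ and $A(\alpha)$ appear in every equation. Specifically, each score equation can be rewritten as
\begin{equation*}
\frac{m_j}{\lambda_j} \;=\; A(\alpha) - \frac{m_3}{\lambda_{012}}, \qquad j=0,1,2,
\end{equation*}
so the right-hand side does not depend on $j$. This immediately forces the ratios $m_j/\lambda_j$ to share a common value, call it $K>0$.

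From $\lambda_j = m_j/K$ for $j=0,1,2$, summing yields $\lambda_{012} = m_{012}/K$. Substituting this back into the common relation $K = A(\alpha) - m_3/\lambda_{012}$ gives $K = A(\alpha) - m_3 K/m_{012}$, which I would solve for $K$ to obtain
\begin{equation*}
K \cdot \frac{m_{012}+m_3}{m_{012}} \;=\; A(\alpha) \quad\Longrightarrow\quad K \;=\; \frac{m_{012}\, A(\alpha)}{m},
\end{equation*}
using the identity $m_0+m_1+m_2+m_3=m$. Plugging this value of $K$ back into $\lambda_j = m_j/K$ produces the closed-form expression claimed in (\ref{3.5}).

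Two small checks would round off the argument. First, the hypothesis $m_j>0$ for $j=0,1,2$ is needed to guarantee $\widehat{\lambda}_j>0$ and to ensure the score equation is not degenerate (if $m_j=0$, the corresponding equation would force $\lambda_j\to\infty$ or collapse). Second, one should verify that the critical point is indeed a maximizer rather than a saddle. I would do this by observing that, for fixed $\alpha$, the log-likelihood in (\ref{3.3}) is a sum of concave terms in $(\lambda_0,\lambda_1,\lambda_2)$: each of $m_0\log\lambda_0$, $m_1\log\lambda_1$, $m_2\log\lambda_2$ and $m_3\log\lambda_{012}$ is concave in its argument (logarithm composed with a linear map), and $-\lambda_{012}A(\alpha)$ is linear. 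Hence the objective is jointly concave on the positive orthant, so the unique stationary point found above is the global maximizer.

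The main obstacle is not analytic but organizational: recognizing that the three coupled equations decouple once one notices the common value $K = m_j/\lambda_j$, and handling the $m_3\log\lambda_{012}$ coupling term cleanly via the identity $\lambda_{012}=m_{012}/K$. The rest reduces to one line of algebra and a concavity remark.
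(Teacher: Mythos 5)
Your proposal is correct, and its first half is just a fully written-out version of what the paper asserts tersely: the paper also sets the three partial derivatives in (\ref{3.4}) to zero for fixed $\alpha$ and states that one "easily obtains" $\widehat{\lambda}_j=\frac{m_j}{m_{012}}\frac{m}{A(\alpha)}$; your common-ratio device $K=m_j/\lambda_j$, summing to get $\lambda_{012}=m_{012}/K$ and then solving $K=A(\alpha)-m_3K/m_{012}$ with $m_{012}+m_3=m$, is a clean way to make that algebra explicit. Where you genuinely diverge is in the verification that the stationary point is the maximizer. The paper does not argue via concavity of the Hessian structure; instead it applies the elementary inequality $\log v\leq v-1$ with $v=\lambda_j/\widehat{\lambda}_j$ and $v=\lambda_{012}/\widehat{\lambda}_{012}$ to bound each term $m_j\log\lambda_j$ and $m_3\log\lambda_{012}$, and then, using $m=\widehat{\lambda}_{012}A(\alpha)$, obtains the explicit chain $\log L(\lambda_0,\lambda_1,\lambda_2,\alpha)\leq\log L(\widehat{\lambda}_0,\widehat{\lambda}_1,\widehat{\lambda}_2,\alpha)$ with equality iff $(\lambda_0,\lambda_1,\lambda_2)=(\widehat{\lambda}_0,\widehat{\lambda}_1,\widehat{\lambda}_2)$. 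Your route — joint concavity of $\sum_j m_j\log\lambda_j+m_3\log\lambda_{012}-\lambda_{012}A(\alpha)$ on the positive orthant (strict in the $\lambda_j$ directions since $m_j>0$), so the unique stationary point is the global maximum — is equally valid and arguably more structural; the paper's argument is really the same concavity fact in disguise, since $\log v\leq v-1$ is exactly the tangent-line bound for the logarithm, but it has the advantage of producing the global inequality and the equality condition directly without appealing to Hessian or convexity language. Your side remark that $m_j=0$ would "force $\lambda_j\to\infty$" is slightly off (with $m_j=0$ the likelihood is maximized on the boundary $\lambda_j\to 0$, since $\lambda_j$ then enters only through $\lambda_{012}^{m_3}e^{-\lambda_{012}A(\alpha)}$), but this is incidental and does not affect the proof of the stated theorem.
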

\begin{proof}
	See Appendix $A.3$.
\end{proof}
Using $(\ref{3.4})$ and $(\ref{3.5})$, we have
\begin{align}
	\frac{m}{\alpha} + \sum_{i=1}^{m} \log y_i - (\widehat{\lambda}_0+\widehat{\lambda}_1+\widehat{\lambda}_2 )~ A^{\prime}(\alpha)=0. \label{3.11}
\end{align}
Since  $(\ref{3.11})$ can not be solved explicitly, an iteration method as discussed in \cite{kundu2007hybrid} will be employed to obtain the MLE of $\alpha$. Using $(\ref{3.11})$, the MLE $\widehat{\alpha}$ can be obtained as follows
\begin{align}
	\alpha= h(\alpha),\label{3.12}
\end{align}
where
\begin{align}
	\nonumber h(\alpha)= \frac{m}{(\widehat{\lambda}_0+\widehat{\lambda}_1+\widehat{\lambda}_2 )~ A^{\prime}(\alpha)-\sum_{i=1}^{m} \log y_i}.
\end{align}
Assume an initial value for $\alpha$ as $\alpha^{(0)}$. Substitute $\alpha^{(0)}$ in $(\ref{3.12})$ to obtain $\alpha^{(1)}= h(\alpha^{(0)})$. This procedure will be continued until we get an $\alpha^{(k)}$ such that $|\alpha^{(k)}-\alpha^{(k-1)}| < \epsilon$, where $k$ is a positive integer and $\epsilon$ is very small positive real number. This $\alpha^{(k)}$ will be considered as the MLE of $\alpha$. Then replacing $\widehat{\alpha}$ in  $(\ref{3.5})$ we get
\begin{align}
	\nonumber \widehat{\lambda}_j= \frac{m_j}{m_{012}} \frac{m}{A(\widehat{\alpha})}, ~~~j=0,~1,~2 .
\end{align}

\subsection{Approximate confidence interval}
In this subsection, the $100(1-\gamma)\%$ approximate confidence intervals (ACIs) of unknown parameters $\alpha$, $\lambda_0$, $\lambda_1$ and $\lambda_2$ are constructed based on large sample approximation. Under some mild regularity conditions, it can be shown that
\begin{align}
	\begin{pmatrix}
		\widehat{\lambda}_{0} \\
		\widehat{\lambda}_1 \\
		\widehat{\lambda}_2\\
		\widehat{\alpha}\\
	\end{pmatrix}-\begin{pmatrix}
	{\lambda}_{0} \\
	{\lambda}_1 \\
	{\lambda}_2\\
	{\alpha}\\
\end{pmatrix} \rightarrow N\big(0,I^{-1}(\widehat{\lambda}_{0},\widehat{\lambda}_{1},\widehat{\lambda}_{2},\widehat{\alpha})\big), \label{3.13}
\end{align}
where the observed Fisher information matrix $I(\widehat{\lambda}_{0},\widehat{\lambda}_{1},\widehat{\lambda}_{2},\widehat{\alpha})= \bigg[- \frac{\partial^2 \log L}{\partial \Theta_i \partial \Theta_j}\bigg]_{\Theta=\widehat{\Theta}},$ $i,~j=0,~1,~2,~3$ and $\Theta= (\Theta_1,\Theta_2,\Theta_3,\Theta_4)= (\lambda_{0},\lambda_{1},\lambda_{2},\alpha)$.  All these elements $\frac{\partial^2 \log L}{\partial \Theta_i \partial \Theta_j}$ can be expressed as \\
$\frac{\partial^2 \log L}{\partial \Theta_i^2}=-\frac{m_i}{\lambda_{i}^2}-\frac{m_3}{\lambda_{012}^2}$, $\frac{\partial^2 \log L}{\partial \Theta_i \partial \Theta_j}= -\frac{m_3}{\lambda_{012}^2}$, $\frac{\partial^2 \log L}{\partial \Theta_i \partial \Theta_3}= \frac{\partial^2 \log L}{ \partial \Theta_3 \partial \Theta_i}= -A^{\prime}(\alpha)$, for $i,j=0,1,2$ and $i \neq j$; \\
$\frac{\partial^2 \log L}{\partial \Theta_3^2}=-\frac{m}{\alpha^2}-\lambda_{012} A^{\prime\prime}(\alpha)$, where $A^{\prime\prime}(\alpha)=\sum_{i=1}^{m} y_i^{\alpha} (\log y_i)^2 + \sum_{i=1}^{J}R_i y_i^{\alpha} (\log y_i)^2 + R^{*} y_m^{\alpha} (\log y_m)^2$.
For $0<\gamma<1$, the ACIs  of the parameters with significance level $\gamma$  are constructed as
\begin{align}
	 \widehat{\alpha}\underline{+}z_{\gamma/2}\sqrt{Var(\widehat{\alpha})}~~\mbox{and}~~ \widehat{\lambda}_j\underline{+}z_{\gamma/2}\sqrt{Var(\widehat{\lambda}_j)},~\mbox{for}~ j=0,1,2 \label{3.14}
\end{align}
where $z_{\gamma/2}$ is the upper $\gamma/2$-th quantile of the standard normal distribution.
\section{Bayesian estimation}
 In the study of reliability analysis, sometimes it has been observed that the classical estimation through the MLE approach may fail when the data don't provide enough sampling details. In order to overcome this problem, prior information can be used in conjunction with Bayesian analysis. This section deals with Bayesian approach to estimate unknown parameters and the accompanying HPD credible intervals.
\subsection{Prior information}
 In order to obtain Bayes estimates, prior knowledge is crucial, particularly when the data from sampling-based information about the unknown parameters is insufficient. As similar as \cite{kundu2013bayes}, let us consider that $\lambda_{012} \sim $ Gamma $(a,b)$ with PDF as follows
 \begin{align}
 	\pi_{0}(\lambda_{012}|a,b)= \frac{b^a}{\Gamma (a)} \lambda_{012}^{a-1} e^{-b\lambda_{012}} ,~~ a>0,b>0,\lambda_{012}>0. \label{4.1}
 \end{align}
According to \cite{cai2017analysis}, let us consider that for given $\lambda_{012}$, $(\frac{\lambda_{1}}{\lambda_{012}},\frac{\lambda_{2}}{\lambda_{012}})$ follows a Dirichlet prior, say $\pi_{1}(\cdot|d_0,d_1,d_2)$ and that is given by
\begin{align}
	\pi_1(\frac{\lambda_{1}}{\lambda_{012}},\frac{\lambda_{2}}{\lambda_{012}}|\lambda_{012},d_0,d_1,d_2)=\frac{\Gamma (d_{012})}{\Gamma (d_0) \Gamma (d_1) \Gamma (d_2)} \prod_{i=0}^{2} \bigg(\frac{\lambda_{i}}{\lambda_{012}}\bigg)^{d_i}, ~~\lambda_{0}>0,\lambda_{1}>0,\lambda_{2}>0, \label{4.2}
\end{align}
  where $d_0$, $d_1$ and $d_2>0$ are the Dirichlet distribution's parameters with $d_{012}=d_0+d_1+d_2$ and known as the gamma-Dirichlet distribution, which is an extension to higher dimension of a gamma-beta distribution. Through the Jacobian transformation, the joint density function of $\lambda_{0}$, $\lambda_{1}$ and $\lambda_{2}$ can be expressed as
  \begin{align}
  \pi_{1}(\lambda_{0},\lambda_{1},\lambda_{2}) = \frac{\Gamma (d_{012})}{\Gamma (a)} \prod_{i=0}^{2} \frac{b^{d_i}}{\Gamma (d_i)} \lambda_{i}^{d_i-1} e^{-b\lambda_{i}} (\lambda_{012}b)^{a-d_{012}}. \label{4.3}
  \end{align}
  This type of prior distribution can be explained different situations depending on the proper hyperparameter values whether the parameters are independent or not. Hence it becomes highly flexible. When $a=d_{012}$, then the parameters $\lambda_{0}$,  $\lambda_{1}$ and $\lambda_{2}$ in $(\ref{4.3})$ are independent. Since the shape parameter $\alpha$ is not known, then the conjugate prior of the joint prior of $(\lambda_{0},\lambda_{1},\lambda_{2},\alpha)$ never existed and there is no way to presume any particular variation of the prior on $\alpha$, say $\pi_2(\alpha)$. Only the density function $\pi_{2}(\alpha)$ with a non-negative support $(0,\infty)$ may possibly be log-concave. Moreover, the joint prior of $(\lambda_{0},\lambda_{1},\lambda_{2})$ is not affected by $\pi_2(\alpha)$. Based on these assumptions, the joint prior of the model parameters, say $\pi(\lambda_{0},\lambda_{1},\lambda_{2},\alpha)$ is obtained as
  \begin{align}
  \pi(\lambda_{0},\lambda_{1},\lambda_{2},\alpha)=\pi_1(\lambda_{0},\lambda_{1},\lambda_{2})\cdot \pi_2(\alpha), \label{4.4}
  \end{align}
   where it should be noted that, for the purpose of data analysis the choices of the hyperparameters is important  for any specific form of $\pi_2(\alpha)$.

   \subsection{Posterior analysis}
   After combining $(\ref{3.2})$ and $(\ref{4.4})$, it is possible to express the joint posterior density function of $\lambda_{0}$, $\lambda_{1}$, $\lambda_{2}$ and $\alpha$ as 
   \begin{align}
  \nonumber \pi(\Theta|data)&= \frac{L(\Theta;data)\pi(\Theta)}{\int_{0}^{\infty}\cdots\int_{0}^{\infty}L(\Theta;data)\pi(\Theta)~d\Theta}\\
    & \propto~ \alpha^m \bigg(\prod_{j=0}^{2} \lambda_{j}^{m_j+d_j-1}\bigg) \lambda_{012}^{m_3+a-d_{012}} \bigg(\prod_{i=1}^{m}y_{i}^{\alpha-1}\bigg) e^{-\lambda_{012}(b+A(\alpha))}\cdot \pi_2(\alpha).\label{4.5}
   \end{align}
   In order to obtain Bayes estimates, the posterior expected value of a loss function has to be minimized. In such case, the choice of different loss function reflects different estimation error effects. Thus, we choose two different loss functions and these are squared error loss function (SELF) and LINEX loss function (LLF). Here, LLF is asymmetric in nature, whereas SELF is symmetric. The loss functions SELF and LLF are defined as
   \begin{align}
   L_{SE}(W,\widehat{W})=\big(\widehat{W}-W\big)^2,  \label{4.6}
   \end{align}
   and \\
   \begin{align}
   L_{LI}(W,\widehat{W})= e^{p\big(\widehat{W}-W\big)}-p\big(\widehat{W}-W\big)-1,~~p\neq 0,\label{4.7}
   \end{align}
   respectively, where $\widehat{W}$ is the estimate of a parametric function $W$. Under these SELF and LLF, the Bayes estimates of $W(\theta)$ can be obtained as
   \begin{align}
   \widehat{W}_{SE}(\Theta)= \frac{\int_{0}^{\infty}\cdots\int_{0}^{\infty} W(\Theta) \pi(\Theta|data) d\Theta}{\int_{0}^{\infty}\cdots\int_{0}^{\infty}\pi(\Theta|data) d\Theta},\label{4.8}
   \end{align}
   and
   \begin{align}
   \widehat{W}_{LI}(\Theta)= -\bigg(\frac{1}{p}\bigg) \log \bigg(\frac{\int_{0}^{\infty}\cdots\int_{0}^{\infty}e^{-p W(\Theta)}\pi(\Theta|data) d\Theta}{\int_{0}^{\infty}\cdots\int_{0}^{\infty}\pi(\Theta|data) d\Theta}\bigg),\label{4.9}
   \end{align}
   respectively. These above equations $(\ref{4.8})$ and $(\ref{4.9})$ can not be solved explicitly. Thus, the Gibbs sampling technique can be used to generate MCMC samples which can be used to obtain the Bayes estimates and the corresponding HPD credible intervals. The following results are useful to generate the MCMC samples with the help of Gibbs sampling technique. 
   \begin{theorem}\label{th4.1}
   	If $m>1$ and $y_1,\cdots,y_m$ are AT-II PHC competing risks samples, then the marginal posterior density function of $\alpha$ is
   	\begin{align} \label{4.10}
   	\pi_{1}^{*}(\alpha|data) \propto ~ \pi_2(\alpha) \alpha^m \bigg(\prod_{i=1}^{m}y_{i}^{\alpha-1}\bigg) \bigg(b+A(\alpha)\bigg)^{-(a+m)},
   	\end{align}
   	and the joint posterior density of $(\lambda_{0},\lambda_{1},\lambda_{2})$ for the provided $\alpha$ and data, is
   	\begin{align} \label{4.11}
   	\pi_{2}^{*}(\lambda_{0},\lambda_{1},\lambda_{2}|\alpha,data) \propto ~ \lambda_{012}^{m_3+a-d_{012}}\bigg(\prod_{j=0}^{2}Gamma\big(\lambda_{j};m_j+d_j,b+A(\alpha)\big)\bigg),
   	\end{align}
   	which follows gamma-Dirichlet distribution, denoted by $GD(a+m,b+A(\alpha),m_0+d_0,m_1+d_1,m_2+d_2)$.
   \end{theorem}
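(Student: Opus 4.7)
The plan is to start from the joint posterior in (4.5) and derive the two claims by marginalizing out $(\lambda_0,\lambda_1,\lambda_2)$ and then conditioning, respectively. The key tool is the change of variables from $(\lambda_0,\lambda_1,\lambda_2)$ to $(\lambda_{012},p_0,p_1)$, where $\lambda_{012}=\lambda_0+\lambda_1+\lambda_2$ and $p_j=\lambda_j/\lambda_{012}$ (with $p_2=1-p_0-p_1$). A standard Jacobian computation gives $d\lambda_0\, d\lambda_1\, d\lambda_2 = \lambda_{012}^{2}\, d\lambda_{012}\, dp_0\, dp_1$, which separates the $\lambda$-part of (4.5) into a gamma factor in $\lambda_{012}$ and a Dirichlet factor in $(p_0,p_1,p_2)$.

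First I would substitute $\lambda_j=\lambda_{012}p_j$ into the $\lambda$-dependent kernel of (4.5). This rewrites $\prod_{j=0}^{2}\lambda_j^{m_j+d_j-1}$ as $\lambda_{012}^{m_{012}+d_{012}-3}\prod_{j=0}^{2} p_j^{m_j+d_j-1}$. Combining the exponents on $\lambda_{012}$ coming from this factor, the explicit $\lambda_{012}^{m_3+a-d_{012}}$ in (4.5), and the Jacobian $\lambda_{012}^{2}$, and using $m_{012}+m_3=m$, the total power of $\lambda_{012}$ collapses to $\lambda_{012}^{m+a-1}$. Hence the $\lambda$-kernel factorizes as the product of $\lambda_{012}^{m+a-1}e^{-\lambda_{012}(b+A(\alpha))}$ and the Dirichlet kernel $\prod_{j=0}^{2} p_j^{m_j+d_j-1}$ in disjoint variables.

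Next I would integrate. The integral of $\prod_{j=0}^{2} p_j^{m_j+d_j-1}$ over the simplex is a Dirichlet normalizing constant, independent of $\alpha$, and $\int_{0}^{\infty}\lambda_{012}^{m+a-1}e^{-\lambda_{012}(b+A(\alpha))}\,d\lambda_{012}=\Gamma(a+m)/(b+A(\alpha))^{a+m}$. Multiplying by the $\alpha$-only factor $\pi_2(\alpha)\alpha^{m}\prod_{i=1}^{m} y_i^{\alpha-1}$ extracted from (4.5) and absorbing constants into the proportionality yields (4.10). For the second claim I would divide the joint posterior (4.5) by $\pi_1^{*}(\alpha|data)$; the $\alpha$-dependent factors cancel and the remainder is proportional, in $(\lambda_0,\lambda_1,\lambda_2)$, to $\lambda_{012}^{m_3+a-d_{012}}\prod_{j=0}^{2}\lambda_j^{m_j+d_j-1}e^{-\lambda_j(b+A(\alpha))}$, where each $\lambda_j$-factor is the kernel of a $Gamma(m_j+d_j,\,b+A(\alpha))$ density. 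Matching this against the definition of the gamma-Dirichlet law in Section 4.1 identifies the distribution as $GD(a+m,\,b+A(\alpha),\,m_0+d_0,\,m_1+d_1,\,m_2+d_2)$, giving (4.11).

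The main obstacle will be the exponent bookkeeping. One must correctly track the $-3$ contribution from extracting $\lambda_{012}$ out of $\prod_{j=0}^{2}\lambda_j^{m_j+d_j-1}$, the $+2$ from the Jacobian, the identity $m_{012}+m_3=m$, and the role of $d_{012}=d_0+d_1+d_2$, in order to confirm that all powers of $\lambda_{012}$ telescope to $\lambda_{012}^{m+a-1}$. Once this is verified, the rest is just recognizing the Dirichlet and gamma integrals and invoking the definition of the gamma-Dirichlet distribution already introduced in Section 4.1.
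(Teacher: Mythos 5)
Your proposal is correct and follows essentially the same route as the paper's Appendix A.4: the same change of variables $(\lambda_0,\lambda_1,\lambda_2)\mapsto(\lambda_{012},\lambda_j/\lambda_{012})$ with Jacobian $\lambda_{012}^{2}$, the same factorization into a gamma integral $\Gamma(a+m)\,(b+A(\alpha))^{-(a+m)}$ times an $\alpha$-free Dirichlet constant for (\ref{4.10}), and the same division of the joint posterior by the marginal of $\alpha$ for (\ref{4.11}). Your exponent bookkeeping ($m_{012}+d_{012}-3$ from the kernel, $m_3+a-d_{012}$ explicit, $+2$ from the Jacobian, totaling $m+a-1$) is exactly the telescoping the paper's computation performs implicitly, so no gap remains.
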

 \begin{proof}
 	See Appendix $A.4$.
 \end{proof}

  Since it is assumed that $\pi_{2} (\alpha)$ is log-concave, then it is considered as
  \begin{align} \label{4.12}
  	\pi_2(\alpha) \propto \alpha^{a_{1}-1} e^{-b_{1}\alpha},
  \end{align}
   where $a_{1}>0,b_{1}>0$. Now, combining $(\ref{4.10})$ and $(\ref{4.12})$, the marginal posterior density of $\alpha$ is given by
   \begin{align} \label{4.13}
   	\pi_1^{*}(\alpha|data) \propto \alpha^{m+a_1-1} e^{-\alpha\big(b_1-\sum_{i=1}^{m}\log y_i\big)} \big(b+A(\alpha)\big)^{-(a+m)}.
   \end{align}
\begin{theorem}\label{th4.2}
	 The marginal posterior density $\pi_1^{*}(\alpha|data)$ expressed by $(\ref{4.13})$ is log-concave.
	 \end{theorem}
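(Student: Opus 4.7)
The plan is to show that $\log \pi_1^{*}(\alpha|data)$ has a non-positive second derivative on $(0,\infty)$, by splitting it into three pieces and handling each separately. Taking logs of (\ref{4.13}) gives
\begin{align*}
\ell(\alpha) := \log \pi_1^{*}(\alpha|data) = (m+a_1-1)\log\alpha - \alpha\bigl(b_1-\textstyle\sum_{i=1}^{m}\log y_i\bigr) - (a+m)\log\bigl(b+A(\alpha)\bigr) + \text{const}.
\end{align*}
The first term is concave because $m+a_1-1 > 0$ (using $m\geq 1$, $a_1>0$), and the middle term is linear hence concave. So the whole argument reduces to showing that $\alpha \mapsto \log\bigl(b+A(\alpha)\bigr)$ is convex, which will make $-(a+m)\log(b+A(\alpha))$ concave since $a+m>0$.

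For the convexity of $\log(b+A(\alpha))$, I would rewrite $A(\alpha)$ in the unified form $A(\alpha) = \sum_{k} w_k z_k^{\alpha}$, where the $w_k \geq 0$ and $z_k > 0$ collect the coefficients and time points appearing in the three sums that define $A$; then $A'(\alpha)=\sum_k w_k z_k^\alpha \log z_k$ and $A''(\alpha)=\sum_k w_k z_k^\alpha (\log z_k)^2$, both absolutely convergent. A direct computation gives
\begin{align*}
\frac{d^{2}}{d\alpha^{2}}\log\bigl(b+A(\alpha)\bigr) = \frac{A''(\alpha)\bigl(b+A(\alpha)\bigr) - \bigl(A'(\alpha)\bigr)^{2}}{\bigl(b+A(\alpha)\bigr)^{2}},
\end{align*}
so convexity boils down to the inequality $A''(\alpha)\bigl(b+A(\alpha)\bigr) \geq (A'(\alpha))^{2}$. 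Since $b\geq 0$, it suffices to prove the stronger $A(\alpha)A''(\alpha)\geq (A'(\alpha))^{2}$, and this is exactly Cauchy--Schwarz applied to the vectors with entries $\sqrt{w_k z_k^\alpha}$ and $\sqrt{w_k z_k^\alpha}\log z_k$.

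Putting the three pieces together, $\ell''(\alpha)\leq 0$ on $(0,\infty)$, so $\pi_1^{*}(\alpha|data)$ is log-concave. The only step that requires care is verifying the Cauchy--Schwarz bound once the coefficients $R_i$, $R^{*}$ and the mass at $y_m$ have been absorbed into the common form $\sum_k w_k z_k^\alpha$; after that, everything else is bookkeeping on signs of coefficients ($m+a_1-1\geq 0$, $a+m>0$, $w_k\geq 0$). I would mention in passing that the first-term coefficient $b_1-\sum_i \log y_i$ need not have a definite sign, which is fine because that contribution is linear and hence concave regardless.
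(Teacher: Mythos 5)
Your proposal is correct and follows essentially the same route as the paper's Appendix A.5: both reduce log-concavity to the sign of the second derivative of $\log\pi_1^{*}$, with the only nontrivial step being the Cauchy--Schwarz inequality $A(\alpha)A''(\alpha)\geq\bigl(A'(\alpha)\bigr)^{2}$ combined with $b\geq 0$ and $A''(\alpha)\geq 0$. Your term-by-term decomposition and the explicit weighted-sum form $A(\alpha)=\sum_k w_k z_k^{\alpha}$ merely make explicit what the paper leaves implicit.
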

 \begin{proof}
 	See Appendix $A.5$.
 \end{proof}
MCMC samples of the unknown parameters can be generated to compute Bayes estimates and associated HPD credible intervals by using Theorem ${\ref{th4.1}}$ and ${\ref{th4.2}}$. To generate samples the following algorithm is proposed : \\\\
\textbf{Step 1:} Generate $\alpha$ by using the method proposed by \cite{devroye1984simple} from  $(\ref{4.13})$. \\
\textbf{Step 2:} For each given $\alpha$, generate the samples for $(\lambda_{0},\lambda_{1},\lambda_{2})$ from $GD(a+m,b+A(\alpha),m_0+d_0,m_1+d_1,m_2+d_2)$. \\
\textbf{Step 3:} Repeat  Steps 1-2, upto $M$ times to get the samples $\{(\alpha^{1},\lambda_{0}^{1},\lambda_{1}^{1},\lambda_{2}^{1}), \cdots, (\alpha^{M},\lambda_{0}^{M},\lambda_{1}^{M},\lambda_{2}^{M})\}$.\\\\
Using the MCMC samples, Bayes estimates of $W(\Theta)$ under SELF and LLF are expressed as
\begin{align*}
	\widehat{W}_{SE}(\Theta)= \frac{1}{M} \sum_{i=1}^{M} W(\Theta^{i}),
\end{align*}
and
\begin{align*}
	\widehat{W}_{LI}(\Theta)= -\frac{1}{p} \log \bigg(\frac{1}{M} e^{-pW(\Theta^{i})}\bigg),
\end{align*}
respectively, where $\Theta^{i}=(\alpha^{i},\lambda_{0}^{i},\lambda_{1}^{i},\lambda_{2}^{i})$. Replacing $W(\Theta)$ by the parameters, Bayes estimates of the parameters can be obtained.

\subsection {Highest posterior density credible interval}
In order to obtain HPD credible intervals with $\gamma$ level of significance for the model parameters, the following steps are employed using the MCMC samples generated in Section $4.2$. \\\\
\textbf{Step 1:}  Generated MCMC samples of the parameters have been arranged in ascending order as $(\alpha^{(1)},\cdots,\alpha^{(M)})$, $(\lambda_{0}^{(1)},\cdots,\lambda_{0}^{(M)})$, $(\lambda_{1}^{(1)},\cdots,\lambda_{1}^{(M)})$. \\
\textbf{Step 2:} HPD credible intervals of $\Theta$ with $\gamma$ significance level are given by
\begin{align*}
	(\Theta^{(j)},\Theta^{(j+(1-\frac{\gamma}{2})M)}),
\end{align*}
 where $j$ can be obtained from the equation $\Theta^{(j+(1-\frac{\gamma}{2})M)})-\Theta^{(j)}=min_i ~(\Theta^{(i+(1-\frac{\gamma}{2})M)})-\Theta^{(i)})$, where $1 \leq i \leq M-(1-\frac{\gamma}{2})M$.
 
 \subsection{MCMC convergence criterion}
In this subsection, a convergence test suggested by \cite{gelman1992inference} has been used to discuss the convergence of MCMC samples. The following steps have been used to check whether the MCMC samples are convergent or not.\\\\ 
\textbf{Step I:} $M$ number of samples have been generated for the parameters $\Theta=(\theta_1,\theta_2,\theta_3,\theta_4)=(\alpha,\lambda_0,\lambda_1,\lambda_2)$ as $(\theta_1^{1},\cdots,\theta_1^M)$, $(\theta_2^{1},\cdots,\theta_2^M)$, $(\theta_3^{1},\cdots,\theta_3^M)$ and $(\theta_4^{1},\cdots,\theta_4^M)$. Then consider these samples as $M$ number of chains of the parameters as $\{(\theta_1^1,\theta_2^1,\theta_3^1,\theta_4^1), \cdots, (\theta_1^M,\theta_2^M,\theta_3^M,\theta_4^M)\}$ \\
\textbf{Step II:} Calculate the mean for each chain of parameters as 
\begin{align*}
	\widehat{\theta}_j= \frac{1}{4} \sum_{i=1}^{4} \theta_{i}^{j}, 
\end{align*}
where $i\in \{1,2,3,4\}$ and $j \in \{1,\cdots,M\}$. \\ 
\textbf{Step III:} For each chain of parameters, the intra-chain variance can be computed as
\begin{align*}
	\sigma_{j}^2= \frac{1}{3} \sum_{i=1}^{4} \big(\theta_{i}^j-\widehat{\theta}_j\big)^2.
\end{align*} 
\textbf{Step IV:} Compute mean as $\widehat{\theta}= \frac{1}{M}\sum_{j=1}^{M}\widehat{\theta}_j$ and variance of all chains as $W=\frac{1}{M}\sum_{j=1}^{M}\sigma_{j}^2$. \\
\textbf{Step V:} Calculate the variance of the average mean over all chains as
\begin{align*}
	B= \frac{4}{M-1} \sum_{j=1}^{M} \big(\widehat{\theta}_j-\widehat{\theta}\big)^2.
\end{align*} 
\textbf{Step VI:} Calculate $V=\frac{3}{4}W+\frac{M+1}{4M}B$ and $G=\sqrt{V/W}$. \\\\
 Finally, the MCMC sample converges if $G \approx 1$, otherwise not.  
\section{Simulation study}
In this section, some results based on Monte Carlo simulations have been provided to evaluate the performance of the proposed methods given in the preceding sections. Based on the following considerations, the performance of the estimates has been compared.
\begin{itemize}
	\item \textbf{Absolute Bias (AB)}: $\frac{1}{N} \sum_{i=1}^{N}|\Theta_i-\widehat{\Theta}_i|$, where $\Theta_i$ represents parameters, whereas $\widehat{\Theta}_i$ represents their estimates and $N$ is the number of iterations. The lower value of AB suggests that the experimental data and prediction model are more accurately correlated.
	\item \textbf{Mean squared error (MSE)}: $\frac{1}{N} \sum_{i=1}^{N}\big(\Theta_i-\widehat{\Theta}_i\big)^2$. The greater performance of the estimations is indicated by the smaller value of MSE.
	\item \textbf{Average width (AW)}: Average width of $100(1-\gamma)\%$ interval estimates has been assessed. Better performance of the interval estimates is correlated with narrower width.
	\item\textbf{Coverage probability (CP)}: The probability of containing the true values of the parameters in between the interval estimates.
\end{itemize}
 To generate AT-II PHCS with competing risks,
the following algorithm has been constructed.\\	
\\
------------------------------------------------------------------------------------------------------------------------------
\textbf{Algorithm}\\
------------------------------------------------------------------------------------------------------------------------------
1. According to inverse distribution function method min $\{Y_1,Y_2\}\sim MOBW (\lambda_{0},\lambda_{1},\lambda_{2},\alpha)$, then the progressive Type-II censored samples following MOBW distribution have been generated.\\
2. According to the given time threshold $T$, the value of $J$ has been obtained such that $Y_{j:m:n}<T<Y_{j+1:m:n}$. \\
3. Then failure risks $\delta_i= 0,1,2,3$, where $i=1,\cdots,m$ have been assigned to the obtained censored data.\\
------------------------------------------------------------------------------------------------------------------------------
In that case when only partial failure reasons have been observed during an experiment, then it has been considered that each observation is independent of each other with the probability of the unknown reason for each failure as $P(\delta_i=3)=q$. The probability of failure risks can be obtained as 
\begin{align}
	\nonumber P(\delta_i=0)= \frac{\lambda_{0}}{g(\lambda)},~~P(\delta_i=1)= \frac{\lambda_{2}\lambda_{01}}{g(\lambda)}, ~ \mbox{and}~ P(\delta_i=2)= \frac{\lambda_{1}\lambda_{02}}{g(\lambda)},
\end{align}
 where $g(\lambda)= \lambda_{0}+\lambda_{1}\lambda_{02}+\lambda_{2}\lambda_{01}$. 
 
 The Monte Carlo simulation study has been performed with the initial guess of the parameters as $\lambda_{0}=0.5,~ \lambda_{1}=1,~ \lambda_{2}=1.5,$ and $\alpha=1$. Set $(n,m)=(50,30),~ (50,40),~(80,40),~(80,60)$ with different values of $T=(0.5,1)$ and $q=(0.1,0.3)$. Three following censoring schemes are considered here.\\\\
 \textbf{Scheme I:} $R_1=\cdots=R_{m-1}=0, R_m=(n-m)$.\\
 \textbf{Scheme II:} $R_1=(n-m),R_{2}=\cdots=R_{m}=0$.\\
 \textbf{Scheme III:} $R_{2m-n+1}=\cdots=R_{m}=1$, and $R_{i}=0$, others.
\\

  In order to obtain Bayes estimates two different loss functions such as SELF and LLF have been considered. Gamma distribution is thought to be the prior distribution of $\alpha$. Here, all the hyper-parameters are considered as $0.001$. To obtain Bayes estimates, an MCMC algorithm is simulated with 10,000 repetitions. Also the value of the statistic $G$ has been computed to check the convergence of the MCMC sample. Further, $95\%$ ACIs and HPD credible intervals for the unknown model parameters are also constructed. This simulation study has been evaluated using $R~4.0.4$ software via three recommended packages. These packages are `nleqslv' introduced by \cite{hasselman2018package} to compute classical estimates using Newton-Raphson method, `coda' introduced by \cite{plummer2015package} to obtain Bayes estimates using MCMC samples, and `matlib' introduced by \cite{friendly2021package}. The simulation results of point and interval estimates are tabulated in Tables $\ref{T1}$-$\ref{T4}$. From these tables, some observations can be summarized as follows: 
 \begin{itemize}
 	\item With the increase of $(n,m)$ and $T$ when $q$ is fixed, the AB and MSE of point estimates decrease. That is, the estimated results are more accurate when the effective sample size becomes larger.
 \item According to ABs and MSEs, Bayes estimates outperform than MLEs. Further, the performance of Bayes estimates under LLF is superior to that under SELF.
 \item With the increase of $(n,m)$ and $T$ when $q$ is fixed, the AW of the intervals decrease, which yields that the estimation has more accuracy. But no trends of CP have been observed throughout these simulation results.
 \item  HPD intervals perform better than ACIs based on AWs and CPs.
 \item With an increase of $q$, the point and interval estimation results based on AB, MSE, and AW become less accurate since the increasing value of $q$ indicates less available information.
 \item The values of $G$- statistic are almost $1$, which yields that the MCMC samples are convergent.
 \end{itemize}

 From the above results it has been concluded that the performance of point and interval estimates are quite satisfactory. Bayes estimates has been computed based on non-informative prior,and the results of the point estimates based on Bayesian estimation are more effective. For interval estimation one can prefer to HPD intervals over ACIs to get more accurate results.

\begin{table}[htbp!]
	\begin{center}
		\caption{Simulated ABs and MSEs (in parentheses) of the MLE, Bayes estimates under SELF and LLF for different values of $T$ with $q=0.1$.}
		\label{T1}
		\tabcolsep 7pt
		\small
		\scalebox{0.65}{
			\begin{tabular}{*{16}c*{15}{r@{}l}}
				\toprule
				\multicolumn{3}{c}{} &
				\multicolumn{4}{c}{MLE} & \multicolumn{4}{c}{SELF} & \multicolumn{4}{c}{LLF}  \\
				\cmidrule(lr){4-7}\cmidrule(lr){8-11} \cmidrule(lr){12-15}
				\multicolumn{1}{c}{$(n,m)$}& \multicolumn{1}{c}{$T$} & \multicolumn{1}{c}{Scheme} &
				 \multicolumn{1}{c}{$\lambda_0$} & \multicolumn{1}{c}{$\lambda_{1}$}& \multicolumn{1}{c}{$\lambda_{2}$}  & \multicolumn{1}{c}{$\alpha$} & \multicolumn{1}{c}{$\lambda_0$} & \multicolumn{1}{c}{$\lambda_1$}& \multicolumn{1}{c}{$\lambda_{2}$}  & \multicolumn{1}{c}{$\alpha$} & \multicolumn{1}{c}{$\lambda_0$} & \multicolumn{1}{c}{$\lambda_{1}$}& \multicolumn{1}{c}{$\lambda_{2}$}  & \multicolumn{1}{c}{$\alpha$} \\
				\midrule
				(50,30)& 0.5& I& 0.2284& 0.6347& 0.4445& 0.1482& 0.1090& 0.5578& 0.2442& 0.1171& 0.1014& 0.4556& 0.2149& 0.1106  \\
				& & & (0.0605)& (0.8188)& (0.4862)& (0.0382)& (0.0179)& (0.4086)& (0.1119)& (0.0203)& (0.0178)& (0.2752)& (0.1062)& (0.0196) \\
				& & II& 0.2319& 0.4854& 0.3017& 0.1160& 0.1119& 0.4250& 0.2740& 0.1102& 0.1044& 0.3553& 0.2686& 0.1077  \\
				& & & (0.0589)& (0.3670)& (0.1438)& (0.0230)& (0.0186)& (0.2862)& (0.1260)& (0.0203)& (0.0185)& (0.2031)& (0.1092)& (0.0192) \\
				& & III& 0.2345& 0.5686& 0.4120& 0.1360& 0.1118& 0.4593& 0.3269& 0.1138&  0.1043& 0.3778& 0.3077& 0.1116  \\
				& & & (0.0625)& (0.6372)& (0.3660)& (0.0317)& (0.0187)& (0.3545)& (0.1818)& (0.0212)& (0.0181)& (0.2383)& (0.1453)& (0.0202)  \\ [0.2 cm]
				& 1& I & 0.2259& 0.6277& 0.4259& 0.1476& 0.1243& 0.3910& 0.2385& 0.1250& 0.1176& 0.3483& 0.3054& 0.1222  \\
				& & & (0.0547)& (0.6246)& (0.3936)& (0.0358)& (0.0165)& (0.3668)& (0.0951)& (0.0192)& (0.0158)& (0.2407)& (0.0868)& (0.0187)  \\
				& & II& 0.2294& 0.4757& 0.2779& 0.1077& 0.1107& 0.4058& 0.2659& 0.1087& 0.1032& 0.3464& 0.2618& 0.1041  \\
				& & & (0.0551)& (0.3523)& (0.1385)& (0.0222)& (0.0185)& (0.2793)& (0.1261)& (0.0204)& (0.0194)& (0.1969)& (0.1104)& (0.0190) \\
				& & III& 0.2302& 0.5446& 0.3889& 0.1335& 0.1114& 0.4426& 0.3095& 0.1115&  0.1019& 0.3676& 0.2885& 0.1108 \\
				& & & (0.0602)& (0.6102)& (0.3595)& (0.0313)& (0.0186)& (0.3457)& (0.1693)& (0.0209)& (0.0179)& (0.2262)& (0.1314)& (0.0200) \\
				\midrule
				(50,40)& 0.5& I& 0.2098& 0.5228& 0.2813& 0.1373& 0.0999& 0.3203& 0.2123& 0.1160& 0.1010& 0.2756& 0.2043& 0.0986   \\
				& & & (0.0487)& (0.4148)& (0.1361)& (0.0232)& (0.0160)& (0.2243)& (0.1052)& (0.0198)& (0.0154)& (0.2181)& (0.1032)& (0.0193)  \\
				& & II& 0.2089& 0.4689& 0.2443& 0.1046& 0.1092& 0.4160& 0.2384& 0.1013& 0.1017& 0.3373& 0.2256& 0.0991  \\
				& & & (0.0475)& (0.3316)& (0.1036)& (0.0183)& (0.0175)& (0.2822)& (0.0944)& (0.0189)& (0.0184)& (0.2011)& (0.0935)& (0.0181)  \\
				& & III& 0.2245& 0.4630& 0.3309& 0.1188&  0.1053& 0.4261& 0.3060& 0.1073& 0.1018& 0.3671& 0.3010& 0.1061 \\
				& & & (0.0561)& (0.3658)& (0.2720)& (0.0269)& (0.0184)& (0.2925)& (0.1416)& (0.0198)& (0.0180)& (0.2158)& (0.1394)& (0.0191) \\ [0.2 cm]
				& 1& I& 0.2054& 0.5138& 0.2783& 0.1176&  0.0926& 0.3040& 0.1820& 0.0936& 0.1053& 0.3723& 0.1638& 0.0915 \\
				& & & (0.0497)& (0.4268)& (0.1454)& (0.0235)& (0.0158)& (0.2213)& (0.0866)& (0.0176)& (0.0154)& (0.1645)& (0.0848)& (0.0171)  \\
				& & II& 0.2042& 0.4659& 0.2412& 0.1038& 0.1089& 0.3931& 0.2137& 0.1012& 0.1004& 0.3245& 0.2107& 0.0979 \\
				& & & (0.0449)& (0.3275)& (0.0983)& (0.0181)& (0.0174)& (0.2765)& (0.0912)& (0.0180)& (0.0172)& (0.1932)& (0.0903)& (0.0176)  \\
				& & III& 0.2106& 0.4237& 0.2803& 0.1153&  0.1025& 0.4182& 0.2863& 0.1040& 0.1012& 0.3489& 0.2653& 0.1055 \\
				& & & (0.0491)& (0.3458)& (0.2358)& (0.0245)& (0.0175)& (0.2670)& (0.1144)& (0.0194)& (0.0173)& (0.2064)& (0.1121)& (0.0183) \\	
				\midrule
				(80,40)& 0.5& I& 0.2044& 0.5165& 0.2620& 0.1273& 0.0961& 0.2884& 0.1829& 0.1072& 0.0915& 0.2617& 0.1613& 0.0954	\\
				& & & (0.0478)& (0.4081)& (0.1283)& (0.0227)& (0.0152)& (0.2101)& (0.1044)& (0.0182)& (0.0151)& (0.2049)& (0.1017)& (0.0178)  \\
				& & II& 0.2059& 0.4517& 0.2394& 0.0975& 0.1074& 0.4052& 0.2186& 0.0955& 0.1012& 0.3264& 0.2064& 0.0936  \\
				& & & (0.0457)& (0.3225)& (0.1027)& (0.0178)& (0.0170)& (0.2726)& (0.0933)& (0.0176)& (0.0180)& (0.1952)& (0.0916)& (0.0173)  \\
				& & III& 0.2147& 0.4345& 0.3217& 0.1091& 0.1022& 0.4172& 0.2941& 0.0976&  0.0948& 0.3461& 0.2891& 0.0959 \\
				& & & (0.0525)& (0.3449)& (0.2014)& (0.0227)& (0.0181)& (0.2769)& (0.1385)& (0.0183)& (0.0176)& (0.2076)& (0.1288)& (0.0178)  \\ [0.2 cm]
				& 1& I& 0.2014& 0.4957& 0.2534& 0.1148& 0.0915& 0.2573& 0.1678& 0.0917& 0.0902& 0.2371& 0.1591& 0.0912  \\
				& & & (0.0459)& (0.3984)& (0.1225)& (0.0216)& (0.0156)& (0.2052)& (0.0868)& (0.0172)& (0.0151)& (0.1595)& (0.0824)& (0.0169) \\	
				& & II& 0.2016& 0.4442& 0.2318& 0.0960& 0.1030& 0.3876& 0.2116& 0.0940& 0.0986& 0.3180& 0.2050& 0.0934 \\
				& & & (0.0436)& (0.3109)& (0.0929)& (0.0175)& (0.0167)& (0.2671)& (0.0898)& (0.0164)& (0.0168)& (0.1906)& (0.0891)& (0.0165) \\	
				& & III& 0.2082& 0.4065& 0.2781& 0.1080& 0.1031& 0.3869& 0.2697& 0.0961& 0.0915& 0.3304& 0.2532& 0.0957\\
				& & & (0.0486)& (0.3280)& (0.1916)& (0.0198)& (0.0175)& (0.2567)& (0.1083)& (0.0176)& (0.0168)& (0.1987)& (0.1112)& (0.0173)  \\
				\midrule
				(80,60)& 0.5& I& 0.1991& 0.4874& 0.2477& 0.1144& 0.0950& 0.2646& 0.1670& 0.1032& 0.0898& 0.2560& 0.1500& 0.0937  \\
				& & & (0.0432)& (0.3299)& (0.1038)& (0.0218)&  (0.0137)& (0.2032)& (0.0974)& (0.0161)& (0.0142)& (0.1932)& (0.0928)& (0.0157) \\
				& & II& 0.2015& 0.4391& 0.2182& 0.0926&  0.1029& 0.3367& 0.1897& 0.0924& 0.0995& 0.3158& 0.1828& 0.0911 \\
				& & & (0.0430)& (0.2653)& (0.0975)& (0.0172)& (0.0167)& (0.2439)& (0.0889)& (0.0160)& (0.0179)& (0.1921)& (0.0846)& (0.0156) \\
				& & III& 0.2072& 0.4079& 0.3035& 0.0967& 0.0980& 0.3374& 0.2854& 0.0907& 0.0910& 0.3240& 0.2754& 0.0898 \\
				& & & (0.0514)& (0.2819)& (0.1907)& (0.0196)& (0.0175)& (0.2033)& (0.1174)& (0.0173)& (0.0171)& (0.1952)& (0.1116)& (0.0167)  \\ [0.2 cm]
				& 1& I& 0.1979& 0.4622& 0.2389& 0.1035& 0.0861& 0.2412& 0.1356& 0.0902& 0.0867& 0.2292& 0.1470& 0.0877  \\
				& & & (0.0427)& (0.3064)& (0.0947)& (0.0195)& (0.0136)& (0.1941)& (0.0869)& (0.0153)& (0.0133)& (0.1859)& (0.0838)& (0.0148) \\
				& & II& 0.1940& 0.4270& 0.2053& 0.0960& 0.1008& 0.3301& 0.1771& 0.0916& 0.0935& 0.3044& 0.1695& 0.0862 \\
				& & & (0.0424)& (0.2623)& (0.0861)& (0.0162)& (0.0164)& (0.2433)& (0.0670)& (0.0146)& (0.0160)& (0.1825)& (0.0628)& (0.0141)  \\
				& & III& 0.1985& 0.3863& 0.2397& 0.0918& 0.0897& 0.2761& 0.2287& 0.0802& 0.0823& 0.2647& 0.2144& 0.0785 \\
				& & & (0.0428)& (0.2704)& (0.1861)& (0.0189)& (0.0171)& (0.2020)& (0.1028)& (0.0154)& (0.0165)& (0.1911)& (0.0987)& (0.0151)  \\
				
			   \bottomrule
				
			\end{tabular}}
		\end{center}
	\vspace{-0.5cm}
\end{table}

\begin{table}[htbp!]
	\begin{center}
		\caption{Simulated AWs and CPs (in parentheses) of ACIs and HPD credible intervals along with corresponding $G$ values for different values of $T$ when $q=0.1$.}
		\label{T2}
		\tabcolsep 7pt
		\small
		\scalebox{0.75}{
			\begin{tabular}{*{12}c*{11}{r@{}l}}
				\toprule
				\multicolumn{3}{c}{} &
				\multicolumn{4}{c}{ACI} & \multicolumn{4}{c}{HPD}  \\
				\cmidrule(lr){4-7}\cmidrule(lr){8-11}
				\multicolumn{1}{c}{$(n,m)$}& \multicolumn{1}{c}{$T$} & \multicolumn{1}{c}{Scheme} &
				\multicolumn{1}{c}{$\lambda_0$} & \multicolumn{1}{c}{$\lambda_{1}$}& \multicolumn{1}{c}{$\lambda_{2}$}  & \multicolumn{1}{c}{$\alpha$} & \multicolumn{1}{c}{$\lambda_0$} & \multicolumn{1}{c}{$\lambda_1$}& \multicolumn{1}{c}{$\lambda_{2}$}  & \multicolumn{1}{c}{$\alpha$} & \multicolumn{1}{c}{$G$} \\
				\midrule
				(50,30)& 0.5& I& 0.8257& 2.5102& 2.5598& 0.6901& 0.2916& 1.5883& 1.5579& 0.4513 & 0.9864 \\
				& & & (0.8574)& (0.9399)& (0.9507)& (0.9516)& (0.9030)& (0.9541)& (0.9735)& (0.9752) \\
				& & II& 0.7068& 1.7502& 1.7670& 0.5555& 0.2918& 1.3613& 1.3650& 0.4364 & 0.9892 \\
				& & & (0.8646)& (0.9397)& (0.9409)& (0.9527)& (0.8923)& (0.9483)& (0.9672)& (0.9621)  \\
				& & III& 0.7657& 2.1742& 2.2108& 0.6089& 0.2938& 1.4627& 1.4585& 0.4311 & 0.9925 \\
				& & & (0.8154)& (0.9393)& (0.9412)& (0.9375)& (0.8925)& (0.9531)& (0.9632)& (0.9562) \\ [0.2 cm]
				& 1& I& 0.8224& 2.5012& 2.5497& 0.6889& 0.2841& 1.2439& 1.2842& 0.4373 & 0.9951\\
				& & & (0.8603)& (0.9402)& (0.9514)& (0.9535)& (0.9089)& (0.9685)& (0.9776)& (0.9794)  \\
				& & II& 0.7009& 1.7460& 1.7306& 0.5459& 0.2892& 1.3608& 1.3584& 0.4324 & 0.9963\\
				& & & (0.8436)& (0.9378)& (0.9436)& (0.9524)& (0.8913)& (0.9259)& (0.9646)& (0.9628) \\
				& & III& 0.7413& 2.0253& 2.1649& 0.5980&  0.2905& 1.4471& 1.4389& 0.4276 & 0.9927\\
				& & & (0.8587)& (0.9425)& (0.9378)& (0.9497)& (0.8912)& (0.9583)& (0.9592)& (0.9582)  \\
				\midrule
				(50,40)& 0.5& I& 0.6621& 1.7138& 1.6679& 0.5569&  0.2780& 1.3737& 1.2928& 0.4365 & 1.0102\\
				& & & (0.8795)& (0.9349)& (0.9438)& (0.9533)& (0.9117)& (0.9614)& (0.9625)& (0.9848)  \\
				& & II& 0.6313& 1.4962& 1.4572& 0.4942& 0.2869& 1.2400& 1.2026&  0.4056 & 0.9975\\
				& & & (0.8789)& (0.9380)& (0.9401)& (0.9504)& (0.8927)& (0.9577)& (0.9606)& (0.9655) \\
				& & III& 0.6236& 1.5862& 1.5432& 0.5229& 0.2935& 1.2601& 1.2235& 0.4090 & 0.9938\\
				& & & (0.8609)& (0.9391)& (0.9322)& (0.9142)& (0.9088)& (0.9504)& (0.9629)& (0.9583)   \\ [0.2 cm]
				& 1& I& 0.6430& 1.6976& 1.6523& 0.5451& 0.2726& 1.2782& 1.2770& 0.4056 & 1.0423 \\
				& & & (0.8694)& (0.9345)& (0.9625)& (0.9495)& (0.8907)& (0.9473)& (0.9884)& (0.9798) \\
				& & II& 0.6301& 1.4898& 1.4351& 0.4923& 0.2843& 1.2347& 1.2008& 0.4042 & 0.9942 \\
				& & & (0.8783)& (0.9385)& (0.9488)& (0.9495)& (0.8944)& (0.9443)& (0.9644)& (0.9651)  \\
				& & III& 0.6201& 1.5631& 1.5290& 0.5126& 0.2869& 1.2395& 1.2101& 0.4017 & 1.0028\\
				& & & (0.8735)& (0.9409)& (0.9426)& (0.9514)&  (0.8962)& (0.9672)& (0.9591)& (0.9583) \\
				\midrule
				(80,40)& 0.5& I& 0.6543& 1.6950& 1.6208& 0.5319& 0.2672& 1.2802& 1.2762& 0.3559 & 0.9880\\
				& & & (0.8621)& (0.9438)& (0.9480)& (0.9532)& (0.9084)& (0.9542)& (0.9623)& (0.9695)  \\
				& & II& 0.6287& 1.4871& 1.4485& 0.4711& 0.2864& 1.2333& 1.2019& 0.3897 & 0.9973 \\
				& & & (0.8615)& (0.9169)& (0.9453)& (0.9538)& (0.8975)& (0.9532)& (0.9609)& (0.9712)  \\
				& & III& 0.6272& 1.5765& 1.5460& 0.4960& 0.2907& 1.2534& 1.2105& 0.3780& 0.9905 \\
				& & & (0.8735)& (0.9372)& (0.9072)& (0.9332)& (0.8925)& (0.9676)& (0.9351)& (0.9623) \\ [0.2 cm]
				& 1& I& 0.6306& 1.6841& 1.6118& 0.5298& 0.2346& 1.1070& 1.1245& 0.3260 & 0.9961 \\
				& & & (0.8705)& (0.9480)& (0.9431)& (0.9530)& (0.9059)& (0.9897)& (0.9776)& (0.9830) \\
				& & II& 0.6247& 1.4847& 1.4402& 0.4695& 0.2781& 1.2197& 1.1974& 0.3829 & 1.0214\\
				& & & (0.8721)& (0.9317)& (0.9457)& (0.9497)& (0.8929)& (0.9584)& (0.9627)& (0.9686) \\
				& & III& 0.6167& 1.5581& 1.5155& 0.4844& 0.2853& 1.2370& 1.2072& 0.3704 & 1.0325\\
				& & & (0.8731)& (0.9373)& (0.9567)& (0.9499)& (0.9016)& (0.9609)& (0.9772)& (0.9616)  \\
				\midrule
				(80,60)& 0.5& I& 0.5458& 1.3901& 1.3323& 0.4551& 0.2673& 1.1507& 1.1418& 0.3583 & 1.0403\\
				& & & (0.8054)& (0.9465)& (0.9379)& (0.9524)& (0.9023)& (0.9745)& (0.9698)& (0.9889) \\
				& & II& 0.5145& 1.1850& 1.1413& 0.3953& 0.2808& 1.0497& 1.0054& 0.3424 & 0.9945\\
				& & & (0.8040)& (0.8675)& (0.9430)& (0.9508)& (0.8854)& (0.9299)& (0.9605)& (0.9733) \\
				& & III& 0.5056& 1.2538& 1.2033& 0.4179& 0.2880& 1.0839& 1.0395& 0.3476 & 1.0210\\
				& & & (0.8567)& (0.9421)& (0.9067)& (0.9057)& (0.8691)& (0.9654)& (0.9599)& (0.9629) \\ [0.2 cm]
				& 1& I& 0.5381& 1.3862& 1.3280& 0.4517& 0.2579& 1.1462& 1.1191& 0.3524 & 0.9978\\
				& & & (0.8154)& (0.9416)& (0.9442)& (0.9510)& (0.8948)& (0.9536)& (0.9794)& (0.9806) \\
				& & II& 0.5064& 1.1789& 1.1369& 0.3910& 0.2781& 1.0464& 1.0008& 0.3407 & 0.9953 \\
				& & & (0.8652)& (0.9049)& (0.9512)& (0.9497)& (0.8835)& (0.9559)& (0.9731)& (0.9775)   \\
				& & III& 0.5025& 1.2368& 1.2018& 0.4159& 0.2811& 1.0569& 1.0136& 0.3373 & 1.0089\\
				& & & (0.8126)& (0.9317)& (0.9411)& (0.9456)&  (0.8921)& (0.9556)& (0.9653)& (0.9663) \\
				
				 \bottomrule
				
			\end{tabular}}
		\end{center}
	\vspace{-0.5cm}
\end{table}

\begin{table}[htbp!]
	\begin{center}
		\caption{Simulated ABs and MSEs (in parentheses) of the MLE, Bayes estimates under SELF and LLF for different values of $T$ with $q=0.3$.}
		\label{T3}
		\tabcolsep 7pt
		\small
		\scalebox{0.65}{
			\begin{tabular}{*{16}c*{15}{r@{}l}}
				\toprule
				\multicolumn{3}{c}{} &
				\multicolumn{4}{c}{MLE} & \multicolumn{4}{c}{SELF} & \multicolumn{4}{c}{LLF}  \\
				\cmidrule(lr){4-7}\cmidrule(lr){8-11} \cmidrule(lr){12-15}
				\multicolumn{1}{c}{$(n,m)$}& \multicolumn{1}{c}{$T$} & \multicolumn{1}{c}{Scheme} &
				\multicolumn{1}{c}{$\lambda_0$} & \multicolumn{1}{c}{$\lambda_{1}$}& \multicolumn{1}{c}{$\lambda_{2}$}  & \multicolumn{1}{c}{$\alpha$} & \multicolumn{1}{c}{$\lambda_0$} & \multicolumn{1}{c}{$\lambda_1$}& \multicolumn{1}{c}{$\lambda_{2}$}  & \multicolumn{1}{c}{$\alpha$} & \multicolumn{1}{c}{$\lambda_0$} & \multicolumn{1}{c}{$\lambda_{1}$}& \multicolumn{1}{c}{$\lambda_{2}$}  & \multicolumn{1}{c}{$\alpha$} \\
				\midrule
				(50,30)& 0.5& I& 0.2309& 0.6494& 0.4589& 0.1495& 0.1387& 0.5705& 0.2984& 0.1282& 0.1332& 0.5167& 0.2551& 0.1214  \\
				& & & (0.0649)& (0.8394)& (0.5048)& (0.0391)& (0.0290)& (0.4389)& (0.1841)& (0.0235)& (0.0280)& (0.3760)& (0.1701)& (0.0216) \\
				& & II& 0.2408& 0.5043& 0.3132& 0.1259& 0.1246& 0.4494& 0.2867& 0.1221& 0.1218& 0.4084& 0.2789& 0.1113 \\
				& & & (0.0617)& (0.3891)& (0.1559)& (0.0252)& (0.0237)& (0.3047)& (0.1295)& (0.0214)& (0.0239)& (0.2394)& (0.1158)& (0.0204) \\
				& & III& 0.2459& 0.5719& 0.4255& 0.1452&  0.1313& 0.5080& 0.3850& 0.1239& 0.1131& 0.4723& 0.3172& 0.1166  \\
				& & & (0.0650)& (0.6559)& (0.3820)& (0.0332)& (0.0198)& (0.4242)& (0.3088)& (0.0242)& (0.0193)& (0.4072)& (0.1679)& (0.0219) \\  [0.2 cm]
				& 1& I& 0.2320& 0.6365& 0.4382& 0.1491& 0.1171& 0.4780& 0.3657& 0.1207& 0.1094& 0.4363& 0.3221&  0.1175 \\
				& & & (0.0580)& (0.7425)& (0.4486)& (0.0382)& (0.0234)& (0.3637)& (0.1888)& (0.0276)& (0.0212)& (0.2470)& (0.1673)& (0.0259)  \\
				& & II& 0.2378& 0.4858& 0.2808& 0.1177& 0.1186& 0.4251& 0.2925& 0.1127& 0.1101& 0.4052& 0.2684& 0.1086   \\
				& & & (0.0581)& (0.3763)& (0.1442)& (0.0237)& (0.0189)& (0.3027)& (0.1308)& (0.0219)& (0.0184)& (0.2890)& (0.1201)& (0.0206)  \\
				& & III& 0.2320& 0.5629& 0.3943& 0.1361& 0.1222& 0.4855& 0.3114& 0.1165& 0.1161& 0.3805& 0.3019& 0.1129   \\
				& & & (0.0630)& (0.6195)& (0.3608)& (0.0328)& (0.0252)& (0.3558)& (0.1762)& (0.0215)& (0.0238)& (0.3041)& (0.1496)& (0.0206)  \\
				\midrule
				(50,40)& 0.5& I& 0.2162& 0.5477& 0.2881& 0.1383& 0.1164& 0.3440& 0.2290& 0.1218& 0.1191& 0.2952& 0.2109& 0.1083  \\
				& & &  (0.0518)& (0.4387)& (0.1474)& (0.0237)& (0.0202)& (0.2511)& (0.1128)& (0.0212)& (0.0192)& (0.2298)& (0.1089)& (0.0205)   \\
				& & II& 0.2120& 0.4790& 0.2543& 0.1092& 0.1229& 0.4335& 0.2483& 0.1047&  0.1213& 0.3633& 0.2366& 0.1011 \\
				& & & (0.0489)& (0.3482)& (0.1059)& (0.0211)& (0.0215)& (0.3033)& (0.1235)& (0.0202)& (0.0207)& (0.2566)& (0.1154)& (0.0197)  \\
				& & III& 0.2286& 0.4868& 0.3513& 0.1289& 0.1230& 0.4488& 0.3404& 0.1163& 0.1205& 0.3739& 0.3349& 0.1125  \\
				& & & (0.0582)& (0.3826)& (0.2748)& (0.0268)& (0.0216)& (0.3160)& (0.1615)& (0.0206)& (0.0208)& (0.2393)& (0.1368)& (0.0197)  \\ [0.2 cm]
				& 1& I& 0.2194& 0.5265& 0.2854& 0.1221& 0.1207& 0.3227& 0.2011& 0.1161&  0.1169& 0.2959& 0.1873& 0.1126 \\
				& & & (0.0535)& (0.4489)& (0.1494)& (0.0241)& (0.0209)& (0.2641)& (0.1290)& (0.0203)& (0.0196)& (0.2422)& (0.1210)& (0.0193) \\
				& & II& 0.2200& 0.4807& 0.2526& 0.1142& 0.1294& 0.4207& 0.2332& 0.1009& 0.1238& 0.3931& 0.2343& 0.0985  \\
				& & & (0.0526)& (0.3401)& (0.1061)& (0.0199)& (0.0274)& (0.2809)& (0.0995)& (0.0186)& (0.0262)& (0.2369)& (0.0969)& (0.0180)   \\
				& & III& 0.2162& 0.4977& 0.2881& 0.1183& 0.1164& 0.4040& 0.2790& 0.1062& 0.1121& 0.3725& 0.2709& 0.1083  \\
				& & & (0.0518)& (0.3847)& (0.2474)& (0.0257)& (0.0202)& (0.3151)& (0.1428)& (0.0182)& (0.0195)& (0.2487)& (0.1389)& (0.0188) \\
				\midrule
				(80,40)& 0.5& I& 0.2140& 0.5848& 0.3923& 0.1261& 0.1150& 0.3033& 0.1959& 0.1021& 0.1091& 0.2922& 0.1807& 0.0995   \\
				& & & (0.0532)& (0.4333)& (0.1416)& (0.0267)& (0.0252)& (0.2494)& (0.1238)& (0.0215)& (0.0219)& (0.2244)& (0.1159)& (0.0188)  \\
				& & II& 0.2115& 0.4623& 0.2428& 0.0986& 0.1121& 0.4137& 0.2384& 0.0961& 0.1054& 0.3615& 0.2106& 0.0949 \\
				& & & (0.0486)& (0.3395)& (0.1081)& (0.0187)& (0.0212)& (0.2951)& (0.1037)& (0.0179)& (0.0205)& (0.2347)& (0.1008)& (0.0175)  \\
				& & III& 0.2214& 0.4943& 0.3426& 0.1125& 0.1185& 0.4279& 0.3082& 0.1017& 0.1121& 0.4026& 0.2923& 0.0984 \\
				& & & (0.0554)& (0.3691)& (0.2106)& (0.0235)& (0.0208)& (0.2937)& (0.1418)& (0.0211)& (0.0189)& (0.2168)& (0.1357)& (0.0191)  \\ [0.2 cm]
				& 1& I& 0.2143& 0.5918& 0.2996& 0.1262& 0.1082& 0.2830& 0.1783& 0.1013&  0.1027& 0.2461& 0.1694& 0.0955 \\
				& & & (0.0534)& (0.4302)& (0.1371)& (0.0231)& (0.0196)& (0.2218)& (0.1059)& (0.0189)& (0.0184)& (0.1837)& (0.0974)& (0.0182)  \\
				& & II& 0.2223& 0.4507& 0.2419& 0.1218& 0.1137& 0.4098& 0.2249& 0.1048& 0.1089& 0.3552& 0.2113& 0.0985   \\
				& & & (0.0495)& (0.3293)& (0.1051)& (0.0191)& (0.0183)& (0.2752)& (0.0938)& (0.0173)& (0.0176)& (0.2250)& (0.0902)& (0.0169)  \\
				& & III& 0.2191& 0.4751& 0.3055& 0.1145& 0.1108& 0.4115& 0.2811& 0.1049& 0.1047& 0.3863& 0.2653& 0.0993  \\
				& & & (0.0517)& (0.3560)& (0.2187)& (0.0206)& (0.0215)& (0.2893)& (0.1210)& (0.0184)& (0.0205)& (0.2439)& (0.1158)& (0.0179)  \\
				\midrule
				(80,60)& 0.5& I& 0.2212& 0.6078& 0.4124& 0.1343& 0.1309& 0.3315& 0.2060& 0.1113& 0.1238& 0.3176& 0.1962& 0.1078  \\
				& & & (0.0542)& (0.4427)& (0.1634)& (0.0286)& (0.0261)& (0.2819)& (0.1325)& (0.0229)& (0.0240)& (0.2430)& (0.1270)& (0.0208) \\
				& & II& 0.2069& 0.4424& 0.2577& 0.1281& 0.1161& 0.3696& 0.1930& 0.1091& 0.1090& 0.3237& 0.1875& 0.0989   \\
				& & & (0.0452)& (0.2943)& (0.1036)& (0.0194)& (0.0191)& (0.2634)& (0.0928)& (0.0178)& (0.0185)& (0.2218)& (0.0899)& (0.0171)  \\
				& & III& 0.2193& 0.4270& 0.3197& 0.0981& 0.1039& 0.3645& 0.2890& 0.0945& 0.0984& 0.3317& 0.2805& 0.0921 \\
				& & & (0.0525)& (0.3055)& (0.2108)& (0.0215)& (0.0195)& (0.2431)& (0.1325)& (0.0196)& (0.0182)& (0.2039)& (0.1250)& (0.0189)  \\ [0.2 cm]
				& 1& I& 0.2007& 0.4699& 0.2414& 0.1095& 0.1016& 0.2736& 0.1406& 0.0982& 0.0956& 0.2385& 0.1524& 0.0917 \\
				& & & (0.0459)& (0.3183)& (0.1094)& (0.0207)& (0.0169)& (0.2108)& (0.0985)& (0.0179)& (0.0154)& (0.2017)& (0.0934)& (0.0168) \\
				& & II& 0.2067& 0.4385& 0.2276& 0.0985& 0.1043& 0.3583& 0.1923& 0.0933& 0.0986& 0.3212& 0.1794& 0.0910  \\
				& & & (0.0450)& (0.2883)& (0.1025)& (0.0184)& (0.0199)& (0.2551)& (0.0821)& (0.0169)& (0.0186)& (0.2105)& (0.0783)& (0.0159)  \\
				& & III& 0.2009& 0.4282& 0.2652& 0.0964& 0.1031& 0.3106& 0.2360& 0.0912& 0.0966& 0.2883& 0.2254& 0.0869 \\
				& & & (0.0440)& (0.3014)& (0.1954)& (0.0207)& (0.0186)& (0.2118)& (0.1095)& (0.0186)& (0.0178)& (0.1994)& (0.1023)& (0.0175)  \\

				\bottomrule
					\end{tabular}}
			\end{center}
		\vspace{-0.5cm}
	\end{table}

\begin{table}[htbp!]
	\begin{center}
		\caption{Simulated AWs and CPs (in parentheses) of of ACIs and HPD credible intervals along with corresponding $G$ values for different values of $T$ when $q=0.3$.}
		\label{T4}
		\tabcolsep 7pt
		\small
		\scalebox{0.75}{
			\begin{tabular}{*{12}c*{11}{r@{}l}}
				\toprule
				\multicolumn{3}{c}{} &
				\multicolumn{4}{c}{ACI} & \multicolumn{4}{c}{HPD}  \\
				\cmidrule(lr){4-7}\cmidrule(lr){8-11}
				\multicolumn{1}{c}{$(n,m)$}& \multicolumn{1}{c}{$T$} & \multicolumn{1}{c}{Scheme} &
				\multicolumn{1}{c}{$\lambda_0$} & \multicolumn{1}{c}{$\lambda_{1}$}& \multicolumn{1}{c}{$\lambda_{2}$}  & \multicolumn{1}{c}{$\alpha$} & \multicolumn{1}{c}{$\lambda_0$} & \multicolumn{1}{c}{$\lambda_1$}& \multicolumn{1}{c}{$\lambda_{2}$}  & \multicolumn{1}{c}{$\alpha$}& \multicolumn{1}{c}{$G$}  \\
				\midrule
				(50,30)& 0.5& I& 0.9562& 2.5978& 2.6565& 0.6979& 0.3434& 1.6724& 1.6901& 0.4808  & 0.9893\\
				& & & (0.8527)& (0.9489)& (0.9374)& (0.9498)& (0.8986)& (0.9693)& (0.9529)& (0.9802) \\
				& & II& 0.7909& 1.8249& 1.8625& 0.5857& 0.3186& 1.4035& 1.4543& 0.5166  & 0.9931\\
				& & & (0.8585)& (0.9495)& (0.9487)& (0.9530)& (0.8743)& (1.0000)& (0.9795)& (0.9885)  \\
				& & III& 0.8859& 2.2575& 2.3049& 0.6290&  0.3071& 1.8277& 1.8676& 0.4478 & 1.0047\\
				& & & (0.8244)& (0.9494)& (0.9495)& (0.9394)& (0.8974)& (0.9763)& (0.9690)& (0.9715)  \\ [0.2 cm]
				& 1& I& 0.9466& 2.5678& 2.6272& 0.6900& 0.3137& 1.6128& 1.6698& 0.4589  & 1.0225\\
				& & & (0.8578)& (0.9499)& (0.9578)& (0.9565)&  (0.8858)& (0.9728)& (0.9805)& (0.9872) \\
				& & II& 0.7891& 1.7606& 1.7934& 0.5559& 0.3062& 1.3845& 1.4140& 0.5026  & 1.0239\\
				& & & (0.8798)& (0.9397)& (0.9472)& (0.9487)& (0.9066)& (0.9691)& (0.9851)& (0.9899) \\
				& & III& 0.8638& 2.1073& 2.1547& 0.6166&  0.2903& 1.2284& 1.2525& 0.4259 & 0.9959\\
				& & & (0.8654)& (0.9460)& (0.9403)& (0.9480)& (0.8973)& (0.9691)& (0.9707)& (0.9694) \\
				\midrule
				(50,40)& 0.5& I& 0.7263& 1.7879& 1.7905& 0.5775&  0.3151& 1.3893& 1.4403& 0.4453 & 1.0187\\
				& & & (0.8199)& (0.9496)& (0.9417)& (0.9506)& (0.8937)& (0.9708)& (0.9695)& (0.9786) \\
				& & II& 0.6813& 1.5456& 1.5306& 0.4989& 0.3141& 1.3024& 1.2532& 0.4202 & 1.0306\\
				& & & (0.8368)& (0.9390)& (0.9473)& (0.9514)& (0.8732)& (0.9501)& (0.9796)& (1.0000)  \\
				& & III& 0.6835& 1.6496& 1.6481& 0.5277& 0.3158& 1.3038& 1.3123& 0.4214 & 0.9915\\
				& & & (0.8204)& (0.9375)& (0.9409)& (0.9501)& (0.8776)& (0.9610)& (0.9649)& (0.9764) \\ [0.2 cm]
				& 1& I& 0.7286& 1.7961& 1.8036& 0.5870& 0.3123& 1.4514& 1.4383& 0.4506  & 0.9930\\
				& & & (0.8140)& (0.9392)& (0.9477)& (0.9517)& (0.8749)& (0.9591)& (0.9603)& (0.9699)  \\
				& & II& 0.6956& 1.5847& 1.5856& 0.4983& 0.3358& 1.2633& 1.2520& 0.4181 & 1.0355\\
				& & & (0.8348)& (0.9417)& (0.9476)& (0.9492)& (0.8723)& (0.9705)& (0.9721)& (0.9799) \\
				& & III& 0.7263& 1.7879& 1.7905& 0.5575&  0.3151& 1.3893& 1.4403& 0.4153 & 0.9961\\
				& & & (0.8199)& (0.9496)& (0.9417)& (0.9506)& (0.8937)& (0.9608)& (0.9695)& (0.9786) \\ [0.2 cm]
				\midrule
				(80,40)& 0.5& I& 0.8104& 2.3216& 2.3345& 0.6007& 0.3266& 1.3214& 1.2846& 0.4021 & 0.9893\\
				& & & (0.8294)& (0.9459)& (0.9472)& (0.9567)& (0.8538)& (0.9697)& (0.9687)& (0.9798)  \\
				& & II& 0.6828& 1.5502& 1.5361& 0.4812& 0.3141& 1.2960& 1.2824& 0.4046 & 0.9947\\
				& & & (0.8492)& (0.9486)& (0.9517)& (0.9512)& (0.8745)& (0.9699)& (0.9668)& (0.9805)  \\
				& & III& 0.7256& 1.8515& 1.8544& 0.5150&  0.3123& 1.7063& 1.6556& 0.4317 & 0.9981\\
				& & & (0.8199)& (0.9494)& (0.9327)& (0.9461)& (0.8874)& (0.9583)& (0.9647)& (0.9762)  \\ [0.2 cm]
				& 1& I& 0.7140& 2.3356& 2.3471& 0.6010& 0.2862& 1.1717& 1.2151& 0.4178 & 1.0414\\
				& & & (0.8197)& (0.9509)& (0.9567)& (0.9535)& (0.8843)& (0.9714)& (0.9695)& (0.9772) \\
				& & II& 0.6910& 1.5728& 1.5745& 0.4722& 0.3010& 1.2718& 1.2769& 0.4190  & 1.0228\\
				& & & (0.8278)& (0.9391)& (0.9499)& (0.9518)& (0.8825)& (0.9644)& (0.9598)& (0.9750)  \\
				& & III& 0.6713& 1.6960& 1.7040& 0.5028& 0.3247& 1.3028& 1.2559& 0.3881 & 0.9937\\
				& & & (0.8171)& (0.9447)& (0.9541)& (0.9496)& (0.8717)& (0.9670)& (0.9705)& (0.9645)  \\
				\midrule
				(80,60)& 0.5& I& 0.6053& 1.4517& 1.4229& 0.4651& 0.2821& 1.2037& 1.2129& 0.3659 & 0.9915\\
				& & & (0.8374)& (0.9473)& (0.9593)& (0.9541)& (0.8601)& (0.9683)& (0.9789)& (0.9894)  \\
				& & II& 0.5551& 1.2368& 1.2126& 0.4165& 0.3023& 1.2418& 1.1750& 0.3711 & 0.9837 \\
				& & & (0.8476)& (0.9315)& (0.9528)& (0.9508)& (0.8783)& (0.9604)& (0.9760)& (0.9718)  \\
				& & III& 0.5598& 1.3175& 1.2921& 0.4284& 0.2831& 1.1426& 1.0765& 0.3685 & 0.9945 \\
				& & & (0.8196)& (0.9456)& (0.9523)& (0.9518)& (0.8995)& (0.9672)& (0.9863)& (0.9905)  \\ [0.2 cm]
				& 1& I& 0.6070& 1.4553& 1.4261& 0.4549& 0.2823& 1.1215& 1.0667& 0.3543  & 1.0124 \\
				& & & (0.8289)& (0.9453)& (0.9513)& (0.9503)& (0.9055)& (0.9641)& (0.9787)& (0.9706) \\
				& & II& 0.5730& 1.2584& 1.2374& 0.4060& 0.2937& 1.1039& 1.0783& 0.3519 & 1.0309\\
				& & & (0.8357)& (0.9414)& (0.9539)& (0.9508)& (0.9032)& (0.9624)& (0.9759)& (0.9730)   \\
				& & III& 0.6032& 1.4234& 1.3960& 0.4371& 0.3034& 1.1356& 1.0987& 0.3423 & 0.9959\\
				& & & (0.8481)& (0.9477)& (0.9523)& (0.9561)& (0.8926)& (0.9678)& (0.9889)& (1.0000)  \\
				
				\bottomrule
			\end{tabular}}
		\end{center}
	\vspace{-0.5cm}
\end{table}

\section{Optimal censoring scheme}
In reliability analysis, to obtain sufficient information about unknown parameters, a practitioner seeks out the optimal censoring plan within a class of all feasible chosen schemes. In our study, in order to get enough details about model parameters, choosing ideal time $T$ with a progressive censoring scheme $(R_1,\cdots, R_D)$, where $D=m$ for case-I and $D=J$ for case-II, for prefixed values of $n,m$, may be of interest to an experimenter. However, comparing two (or more) different censoring plans has gained a lot of attention in past few years by several authors. For instance one can see \cite{ng2004optimal}, \cite{kundu2008bayesian} and \cite{singh2015estimating}.  Here, three commonly used criteria have been considered to compare between the chosen class of progressive censoring schemes, see Table $\ref{T5}$.\\
\begin{table}[htbp!]
	\begin{center}
		\caption{Different optimality criterion.}
		\label{T5}
		\tabcolsep 7pt
		\small
		\scalebox{0.8}{
			\begin{tabular}{*{3}c*{2}{r@{}l}}
				\toprule
				\multicolumn{1}{c}{Criterion}&&& & &\multicolumn{1}{c}{Goal}  \\
				\midrule
				A-optimality &&& & & minimum trace $(I^{-1}(\widehat{\Theta}))$&\\
				D-optimality &&& & &minimum det $(I^{-1}(\widehat{\Theta}))$~~~& \\
				F-optimality &&&&& maximum  trace $(I(\widehat{\Theta}))$~~~&\\
				\bottomrule
			\end{tabular}}
		\end{center}
	\vspace{-0.5cm}
\end{table}
 $A$-optimality and $D$-optimality have been considered as criterion I and II, respectively. These two criteria are widely used in the statistical literature and intend to minimize the trace and determinant of the variance-covariance (V-C) matrix based on the observed Fisher information matrix associated with $\widehat{\Theta}$, respectively. The trace of the V-C matrix equals to the sum of the diagonal elements of $I^{-1}(\widehat{\Theta})$. According to $F$-optimality, which has been considered as criterion III, the trace of the observed Fisher information matrix has been minimized. The related optimal progressive censoring plans have been taken into consideration based on the minimum values of the $A$- and $D$- optimality criteria, and maximum value of the $F$-optimality criterion (see Table $9$).

\section{Real data analysis}
To illustrate the usefulness of the proposed methods, a real life soccer game data set from \cite{meintanis2007test} has been considered in this section. This data set represents that at least one goal was scored by home team, and at least one goal was scored directly by a kick (i.e., penalty, foul, or any other kick) by any team. Let $Y_1$ be the time of the first goal scored by any team, and $Y_2$ be the time of the first goal scored by the home team. Since the soccer game is 90 minutes long, the data is divided by 90. The complete failure data with the observed failure risks $\delta_{i}$ is tabulated in Table $\ref{T6}$.

\begin{table}[htbp!]
	\begin{center}
		\caption{Complete failure data with corresponding risks of the soccer game.}
		\label{T6}
		\tabcolsep 7pt
		\small
		\scalebox{0.8}{
			\begin{tabular}{*{13}c*{11}{r@{}l}}
				\toprule
				\mbox{Original soccer game data ($Y_1$,$Y_2$)}\\
				\midrule
				(0.0222, 0.0222)~~~(0.4556, 0.0333)~~~(0.4667, 0.0333)~~~(0.6000, 0.0778)~~~(0.0889, 0.0889)~~~(0.2778, 0.1000)\\
				(0.6111, 0.1222)~~~(0.4889, 0.1444)~~~(0.2444, 0.1556)~~~(0.2778, 0.1556)~~~(0.7111, 0.1667)~~~(0.1778, 0.1778)\\
				(0.1778, 0.8333)~~~(0.7000, 0.2000)~~~(0.2000, 0.2000)~~~(0.2111, 0.2111)~~~(0.2889, 0.2222)~~~(0.2667, 0.2667)\\
				(0.2889, 0.5333)~~~(0.3000, 0.5222)~~~(0.3111, 0.3111)~~~(0.5667, 0.3111)~~~(0.4889, 0.3333)~~~(0.3778, 0.3778)\\
				(0.4000, 0.5778)~~~(0.4333, 0.4333)~~~(0.5889, 0.4333)~~~(0.4444, 0.4444)~~~(0.4667, 0.4667)~~~(0.9111, 0.5333)\\
				(0.5444, 0.5444)~~~(0.5444, 0.5444)~~~(0.7333, 0.6889)~~~(0.8444, 0.7111)~~~(0.7333, 0.9444)~~~(0.7667, 0.7889)\\
				(0.8000, 0.8000)~~~~~~~~~~~~~~~~~~~~~~~~~~~~~~~~~~~~~~~~~~~~~~~~~~~~~~~~~~~~~~~~~~~~~~~~~~~~~~~~~~~~~~~~~~~~~~~~~~~~~~~~~~~~~~~~~~~~~~~ \\
				\midrule
				Competing risks data ($Y_i$,$\delta_{i}$)\\
				\midrule
				(0.0222,0)~~~~~~~~~~(0.0333,2)~~~~~~~~~~(0.0333,2)~~~~~~~~~~(0.0778,2)~~~~~~~~~~(0.0889,0)~~~~~~~~~~(0.1000,2)\\
				(0.1222,2)~~~~~~~~~~(0.1444,2)~~~~~~~~~~(0.1556,2)~~~~~~~~~~(0.1556,2)~~~~~~~~~~(0.1667,2)~~~~~~~~~~(0.1778,0)\\
				(0.1778,1)~~~~~~~~~~(0.2000,2)~~~~~~~~~~(0.2000,0)~~~~~~~~~~(0.2111,0)~~~~~~~~~~(0.2222,2)~~~~~~~~~~(0.2667,0)\\
				(0.2889,1)~~~~~~~~~~(0.3000,1)~~~~~~~~~~(0.3111,0)~~~~~~~~~~(0.3111,2)~~~~~~~~~~(0.3333,2)~~~~~~~~~~(0.3778,0)\\
				(0.4000,1)~~~~~~~~~~(0.4333,0)~~~~~~~~~~(0.4333,2)~~~~~~~~~~(0.4444,0)~~~~~~~~~~(0.4667,0)~~~~~~~~~~(0.5333,2)\\
				(0.5444,0)~~~~~~~~~~(0.5444,0)~~~~~~~~~~(0.6889,2)~~~~~~~~~~(0.7111,2)~~~~~~~~~~(0.7333,1)~~~~~~~~~~(0.7667,1)\\
				(0.8000,0)~~~~~~~~~~~~~~~~~~~~~~~~~~~~~~~~~~~~~~~~~~~~~~~~~~~~~~~~~~~~~~~~~~~~~~~~~~~~~~~~~~~~~~~~~~~~~~~~~~~~~~~~~~~~~~~~~~~~\\		
				\bottomrule
		\end{tabular}}
	\end{center}
	\vspace{-0.5cm}
\end{table}

\begin{table}[htbp!]
	\begin{center}
		\caption{ Three censoring schemes for AT-II PHC data with competing risks are used as an illustration.}
		\label{T7}
		\tabcolsep 7pt
		\small
		\scalebox{0.8}{
			\begin{tabular}{*{13}c*{11}{r@{}l}}
				\toprule
				\mbox{Data I: $n = 37$, $m = 28$, $T= 0.4$ and $R=(0*27,9)$}\\
				\midrule
				(0.0222,0)~~~~~~~~~~(0.0333,2)~~~~~~~~~~(0.0333,2)~~~~~~~~~~(0.0778,2)~~~~~~~~~~(0.0889,0)~~~~~~~~~~(0.1000,2)\\ (0.1222,2)~~~~~~~~~~(0.1444,2)~~~~~~~~~~(0.1556,2)~~~~~~~~~~(0.1556,2)~~~~~~~~~~(0.1667,2)~~~~~~~~~~(0.1778,0)\\ (0.1778,1)~~~~~~~~~~(0.2000,2)~~~~~~~~~~(0.2000,0)~~~~~~~~~~(0.2111,0)~~~~~~~~~~(0.2222,2)~~~~~~~~~~(0.2667,0)\\ (0.2889,1)~~~~~~~~~~(0.3000,1)~~~~~~~~~~(0.3111,0)~~~~~~~~~~(0.3111,2)~~~~~~~~~~(0.3333,2)~~~~~~~~~~(0.3778,0)\\
				(0.4000,1)~~~~~~~~~~(0.4333,0)~~~~~~~~~~(0.4333,2)~~~~~~~~~~(0.4444,0)~~~~~~~~~~~~~~~~~~~~~~~~~~~~~~~~~~~~~~~~~~~~~~~\\
				\midrule
				\mbox{Data II: $n = 37$,  $m = 28$,  $T = 0.4$ and $R=(9,0*27)$}\\
				\midrule
				(0.0222,0)~~~~~~~~~~(0.0333,2)~~~~~~~~~~(0.0333,2)~~~~~~~~~~(0.0778,0)~~~~~~~~~~(0.1000,2)~~~~~~~~~~(0.1222,2)\\ (0.1444,2)~~~~~~~~~~(0.1556,2)~~~~~~~~~~(0.1667,2)~~~~~~~~~~(0.1778,1)~~~~~~~~~~(0.2000,2)~~~~~~~~~~(0.2000,0)\\
				(0.2111,0)~~~~~~~~~~(0.2222,2)~~~~~~~~~~(0.2667,0)~~~~~~~~~~(0.2889,1)~~~~~~~~~~(0.3000,1)~~~~~~~~~~(0.3111,0)\\
			    (0.3111,2)~~~~~~~~~~(0.3333,2)~~~~~~~~~~(0.4333,0)~~~~~~~~~~(0.4333,2)~~~~~~~~~~(0.4444,0)~~~~~~~~~~(0.4667,0)\\
				(0.5333,2)~~~~~~~~~~(0.5444,0)~~~~~~~~~~(0.6889,2)~~~~~~~~~~(0.7111,2)~~~~~~~~~~~~~~~~~~~~~~~~~~~~~~~~~~~~~~~~~~~~~~~\\
				\midrule
				\mbox{Data III: $n = 37$,  $m = 28$,  $T= 0.4$ and $R=(1*9,0*19)$}\\
				\midrule
				(0.0333,2)~~~~~~~~~~(0.0333,2)~~~~~~~~~~(0.0778,0)~~~~~~~~~~(0.1000,2)~~~~~~~~~~(0.1222,2)~~~~~~~~~~(0.1444,2)\\
				(0.1556,2)~~~~~~~~~~(0.1667,2)~~~~~~~~~~(0.1778,1)~~~~~~~~~~(0.2000,2)~~~~~~~~~~(0.2000,0)~~~~~~~~~~(0.2111,0)\\
				(0.2222,2)~~~~~~~~~~(0.2667,0)~~~~~~~~~~(0.2889,1)~~~~~~~~~~(0.3000,1)~~~~~~~~~~(0.3111,0)~~~~~~~~~~(0.3111,2)\\
				(0.3333,2)~~~~~~~~~~(0.4333,0)~~~~~~~~~~(0.4333,2)~~~~~~~~~~(0.4444,0)~~~~~~~~~~(0.4667,0)~~~~~~~~~~(0.5333,2)\\
				(0.5444,0)~~~~~~~~~~(0.6889,2)~~~~~~~~~~(0.7111,2)~~~~~~~~~~(0.7333,1)~~~~~~~~~~~~~~~~~~~~~~~~~~~~~~~~~~~~~~~~~~~~~~~\\
				\bottomrule
		\end{tabular}}
	\end{center}
	\vspace{-0.5cm}
\end{table}

\begin{table}[htbp!]
	\begin{center}
		\caption{Point and interval estimates of the unknown parameters based on generated samples from the real life data set.}
		\label{T8}
		\tabcolsep 7pt
		\small
		\scalebox{0.9}{
			\begin{tabular}{*{12}c*{11}{r@{}l}}
				\toprule
				\multicolumn{1}{c}{Data} & \multicolumn{1}{c}{$\Theta$}&
				\multicolumn{1}{c}{MLE}& \multicolumn{1}{c}{ACI} & \multicolumn{1}{c}{Bayes}& \multicolumn{1}{c}{HPD}  \\
				\midrule
				I& $\lambda_{0}$& 1.2739& (0.4204,2.5274)[2.1070]& 1.5833& (1.2779,1.8847)[0.6068] \\
				& $\lambda_{1}$& 0.5159& (0.0317,1.1037)[1.0720]& 0.6482& (0.5628,0.7054)[0.1426] \\
				& $\lambda_{2}$& 2.0098& (0.7108,3.3089)[2.5981]& 2.2199& (1.7334,2.7056)[0.9722]   \\
				& $\alpha$& 1.3846& (0.9609,1.8082)[0.8473]& 1.5704& (1.3746,1.8544)[0.4798]  \\	
				\midrule
				II& $\lambda_{0}$& 1.5688& (0.4711,2.6665)[2.1954]& 1.9522& (1.5872,2.3524)[0.7652]  \\
				& $\lambda_{1}$& 0.6275& (0.0203,1.2753)[1.2550]& 0.5812& (0.5253,0.6364)[0.1111] \\
				& $\lambda_{2}$& 2.5101& (1.0347,3.9856)[2.9509]& 2.3674& (1.8621,2.8271)[0.9650] \\
				& $\alpha$&  1.4685&  (1.0578,1.8793)[0.8215]& 1.5056& (1.3367,1.6168)[0.2801]   \\
				\midrule
				III& $\lambda_{0}$& 1.5100& (0.4596,2.5605)[2.1009]& 1.5457& (1.4224,1.6723)[0.2499] \\
				& $\lambda_{1}$& 0.6040& (0.0179,1.2260)[1.2081]& 0.4944& (0.2626,0.6696)[0.4070]    \\
				& $\lambda_{2}$& 2.4161& (1.0076,3.8246)[2.8170]& 2.5246& (2.1011,2.9496)[0.8485]  \\
				& $\alpha$&  1.6001& (1.1654,2.0346)[0.8692]& 1.6663& (1.4724,1.8089)[0.3365]      \\
			\bottomrule
			\end{tabular}}
		\end{center}
	\vspace{-0.5cm}
\end{table}

\noindent Prior to conducting data analysis, we examine how well this set of data can be fitted with the MOBW distribution using the Kolmogorov-Smirnov (K-S) statistic. Since there is no such
goodness-of-fit test for bivariate distributions as univariate distributions, here we have investigated whether the marginals $Y_1$, $Y_2$ and $min(Y_1,Y_2)$ fit Weibull distribution or not. The K-S distances and the corresponding p-values (given in bracket) of  $Y_1$, $Y_2$ and $min(Y_1,Y_2)$ are $0.0834$ $(0.9602)$, $0.1055$ $(0.8051)$ and $0.0689$ $(0.9947)$, respectively. Further, the empirical CDF (ECDF), probability-probability (P-P) and quantile-
quantile (Q-Q) plots along with associated fitted Weibull distribution are provided in Figures $2-4$. From these observations, we conclude that this data set can be considered to model Weibull distribution for the marginals $Y_1$, $Y_2$ and their minimum. This yields that MOBW distribution can be used for analyzing this bivariate data set given in Table $\ref{T6}$.

\begin{figure}[htbp!]
	\begin{center}
		\subfigure[]{\label{c1}\includegraphics[height=1.5in,width=2.2in]{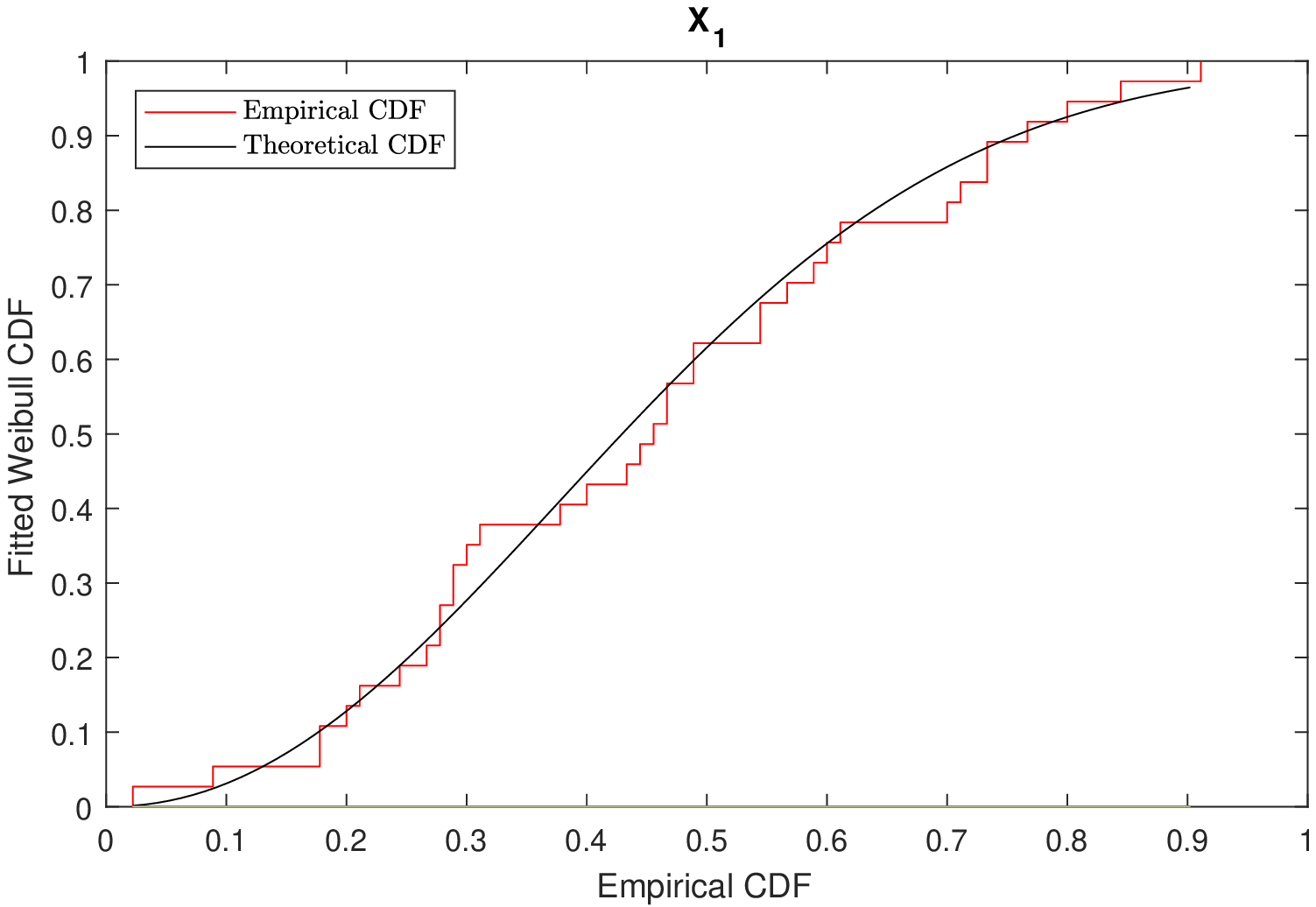}}
		\subfigure[]{\label{c1}\includegraphics[height=1.5in,width=2.2in]{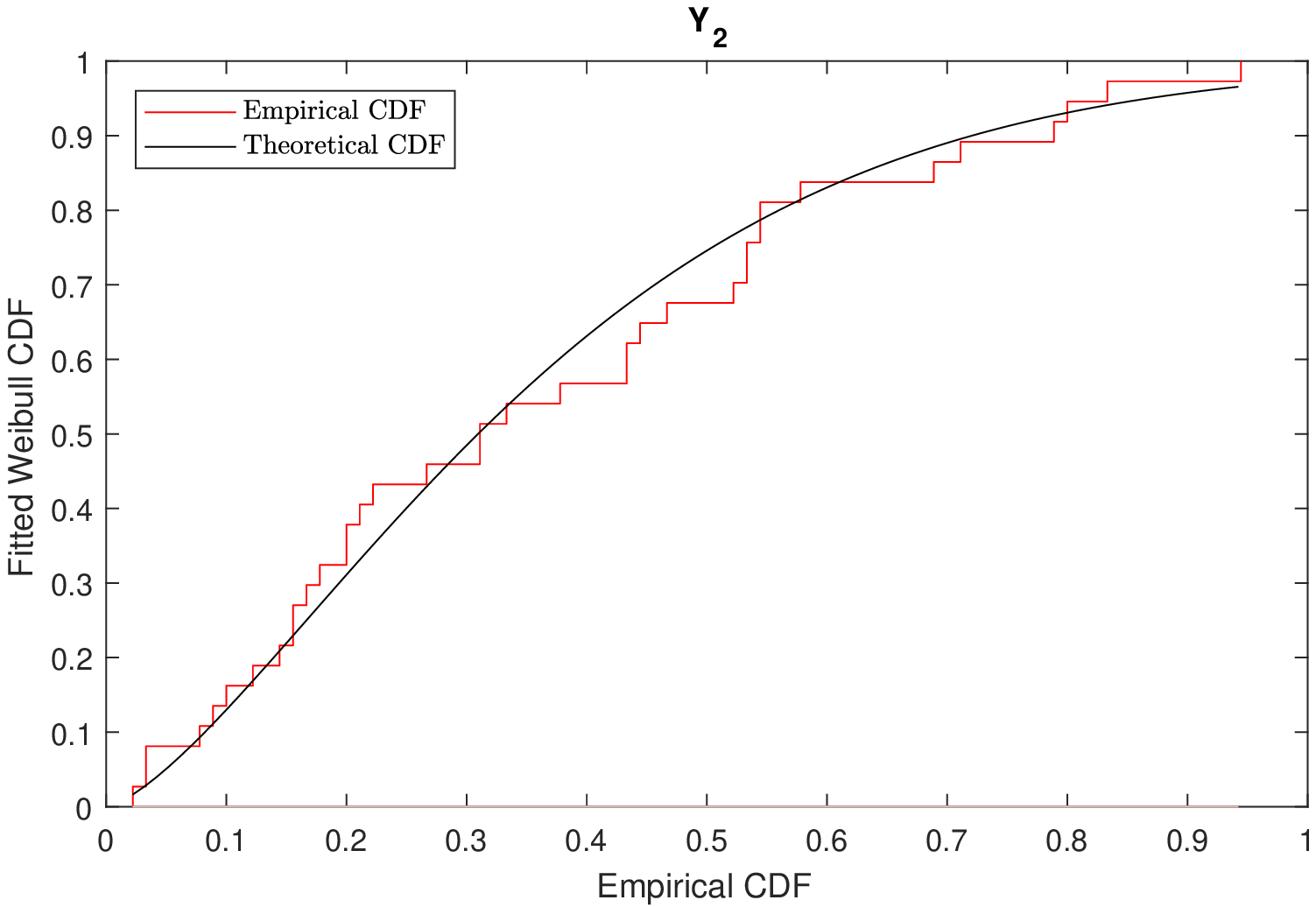}}
		\subfigure[]{\label{c1}\includegraphics[height=1.5in,width=2.2in]{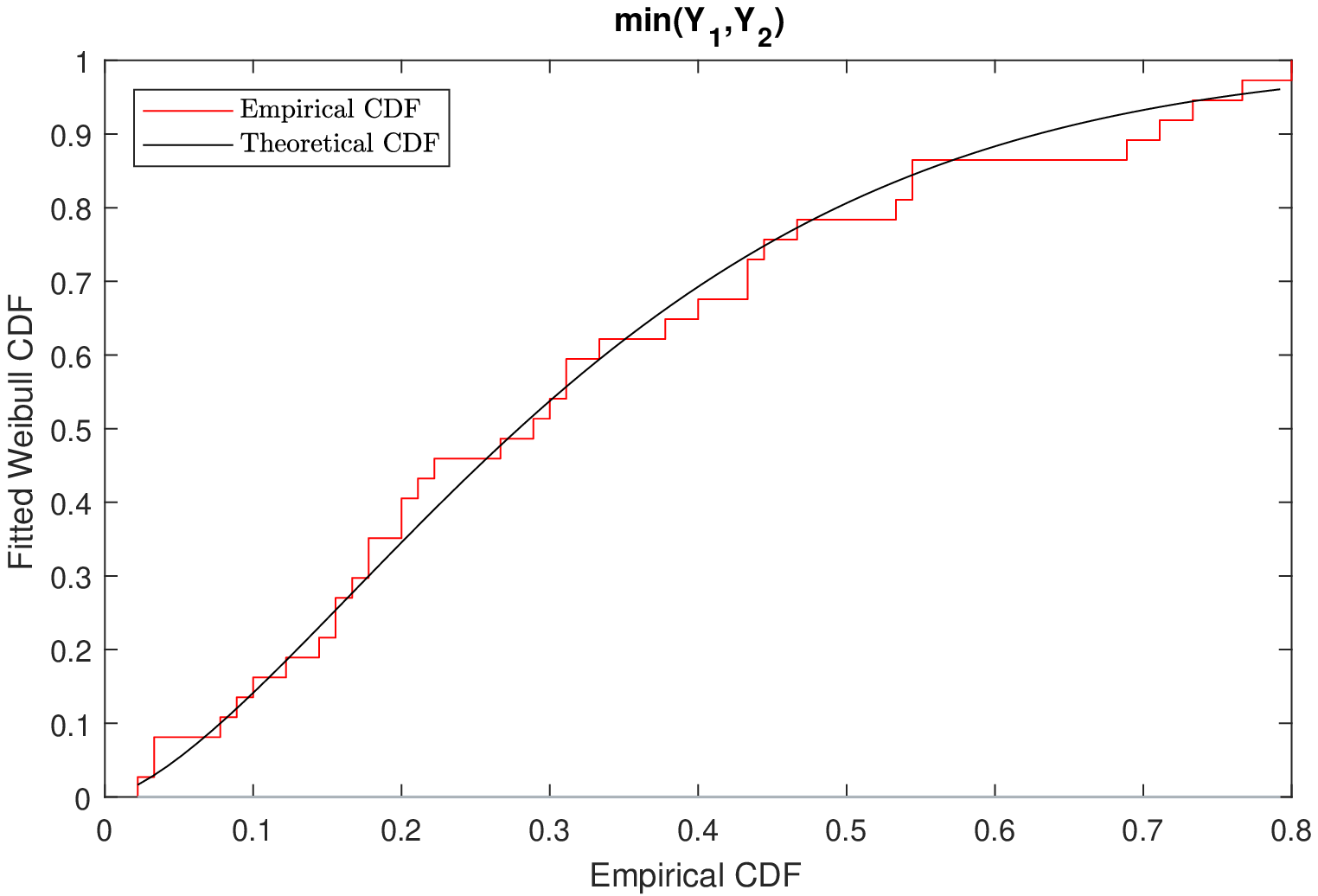}}
		\caption{ECDF plots along with fitted Weibull models for real data set.}
	\end{center}
\end{figure}

\begin{figure}[htbp!]
	\begin{center}
		\subfigure[]{\label{c1}\includegraphics[height=1.5in,width=2.2in]{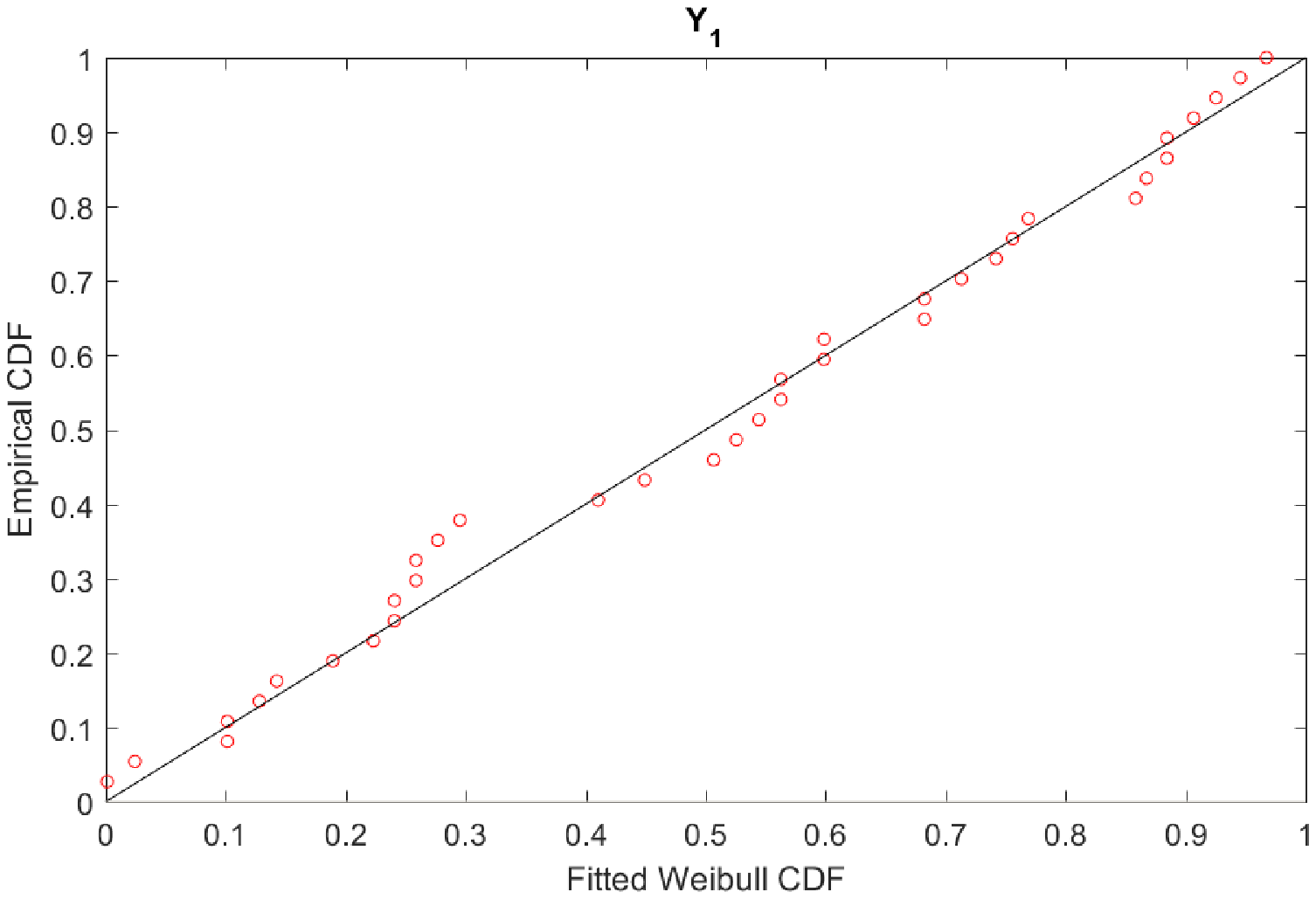}}
		\subfigure[]{\label{c1}\includegraphics[height=1.5in,width=2.2in]{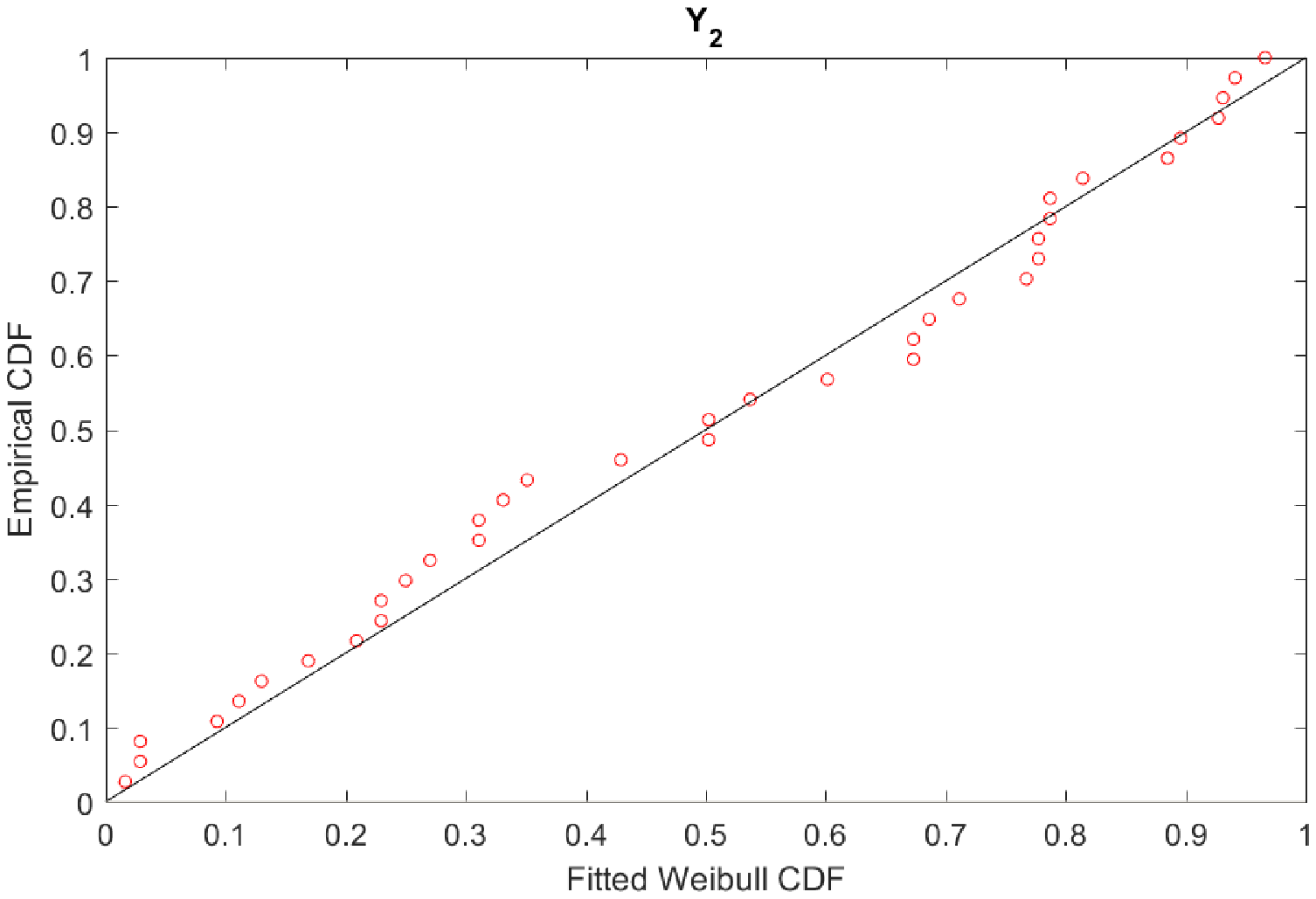}}
		\subfigure[]{\label{c1}\includegraphics[height=1.5in,width=2.2in]{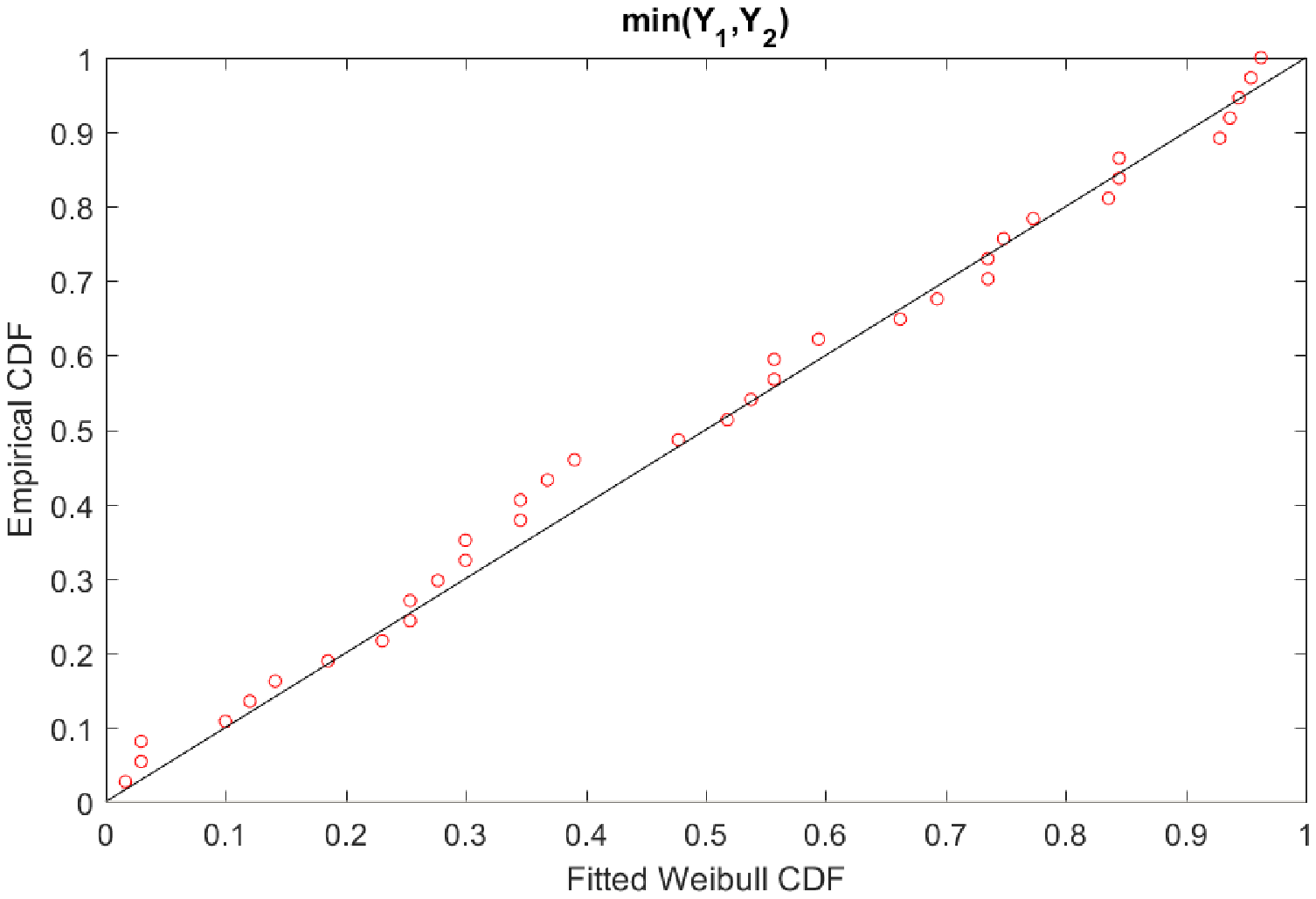}}
		\caption{P-P plots along with fitted Weibull models for real data set.}
	\end{center}
\end{figure}

\begin{figure}[htbp!]
	\begin{center}
		\subfigure[]{\label{c1}\includegraphics[height=1.5in,width=2.2in]{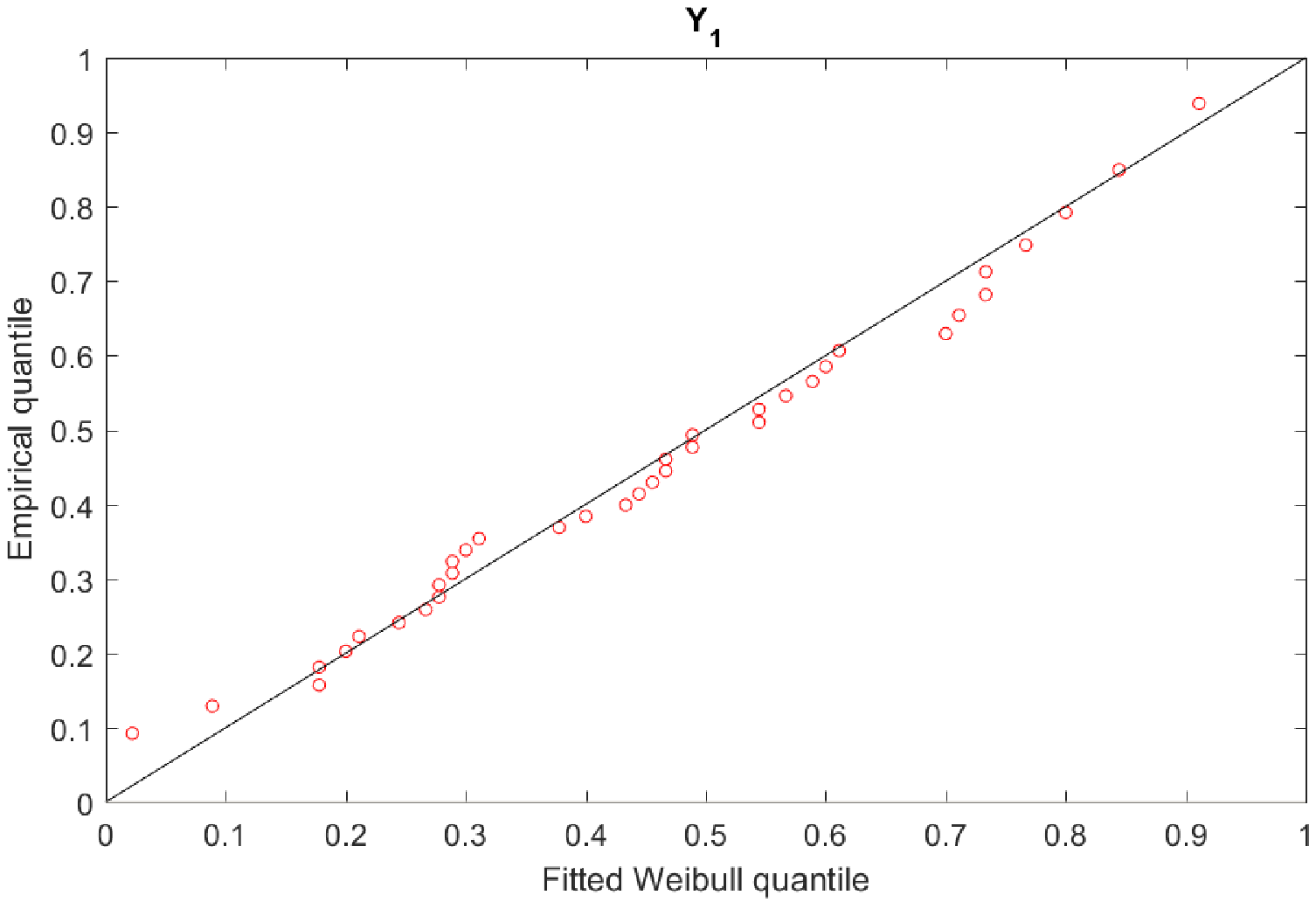}}
		\subfigure[]{\label{c1}\includegraphics[height=1.5in,width=2.2in]{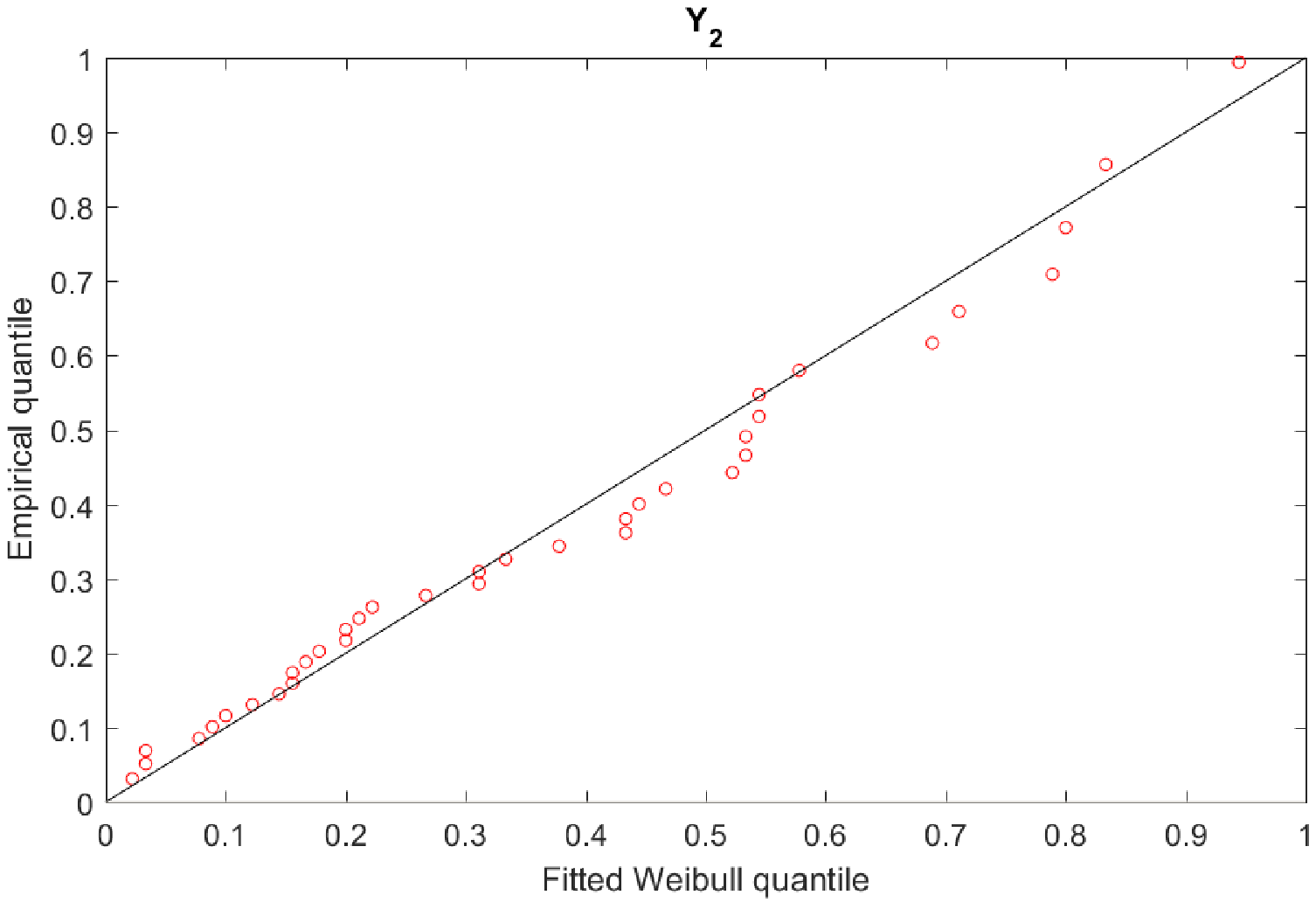}}
		\subfigure[]{\label{c1}\includegraphics[height=1.5in,width=2.2in]{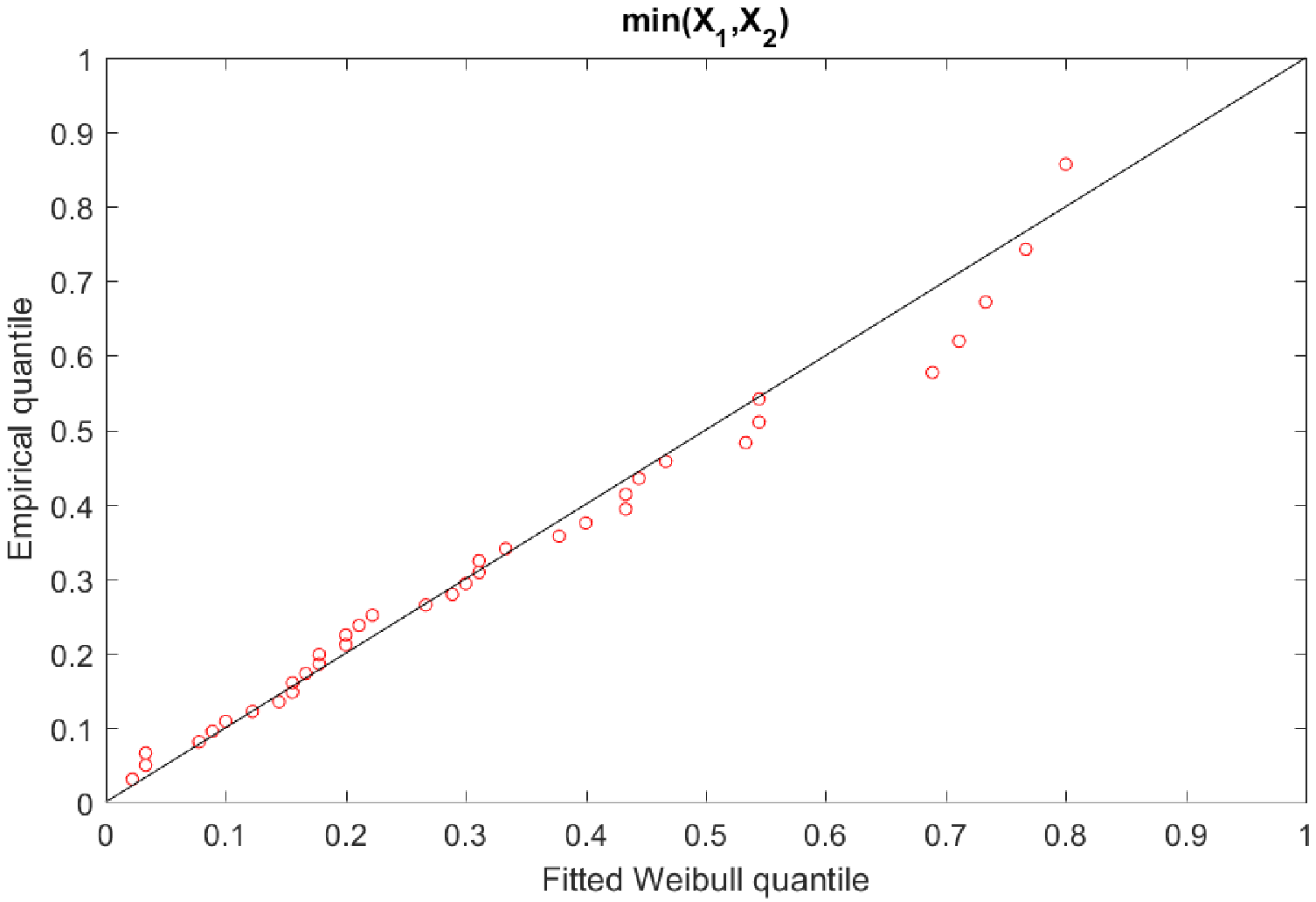}}
		\caption{Q-Q plots along with fitted Weibull models for real data set.}
	\end{center}
\end{figure}

Based on these data set, three different censored data sets have been generated under different censoring schemes for min $(Y_1,Y_2)$ with $q=0$, see Table $\ref{T7}$. The point and interval estimates are calculated and tabulated in Table $\ref{T8}$, where the interval widths are reported in square brackets. Since there is no prior information, the Bayes estimates have been computed under non-informative prior with hyperparameters $0.0001$. Based on three above mentioned data sets, MCMC samples for model parameters are plotted in Figures $5-7$. It has been noticed from Table $\ref{T8}$ that the MLEs and Bayes estimates are extremely comparable. HPD intervals performs better than ACIs in terms of width, which yields that the results are consistent with the previously simulated results. From Table $\ref{T9}$, it has been observed that the censoring scheme $R=(0*29,7)$ is optimal corresponding to the other proposed censored schemes according to $A$-, $D$-, and $F$-optimality criteria given in Table $\ref{T5}$. It is apparent that the suggested algorithm is adequate for processing AT-II PHC  dependent competing risks data.

\begin{figure}[htbp!]
	\begin{center}
		\subfigure[]{\label{c1}\includegraphics[height=1.2in,width=1.6in]{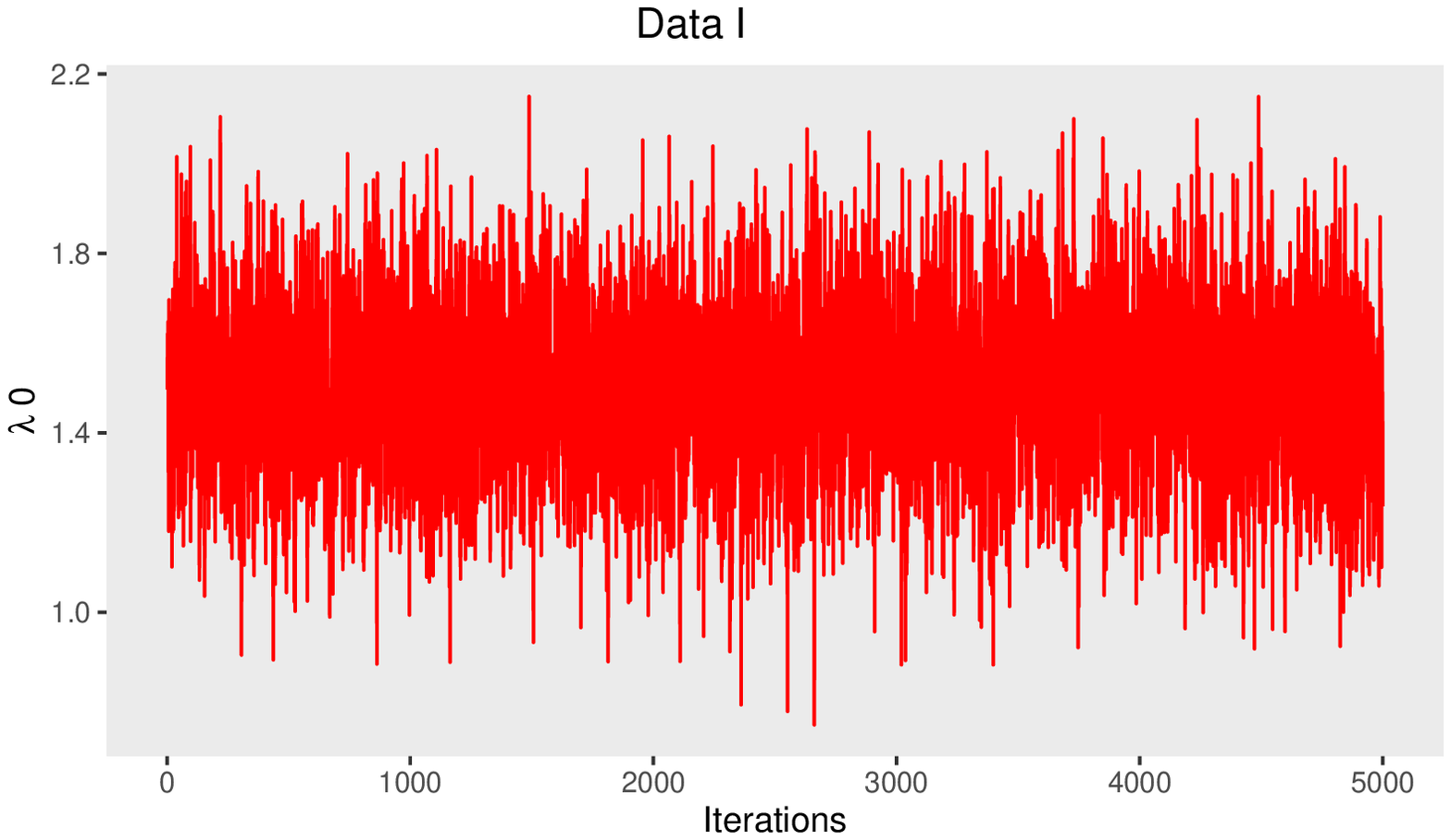}}
		\subfigure[]{\label{c1}\includegraphics[height=1.2in,width=1.6in]{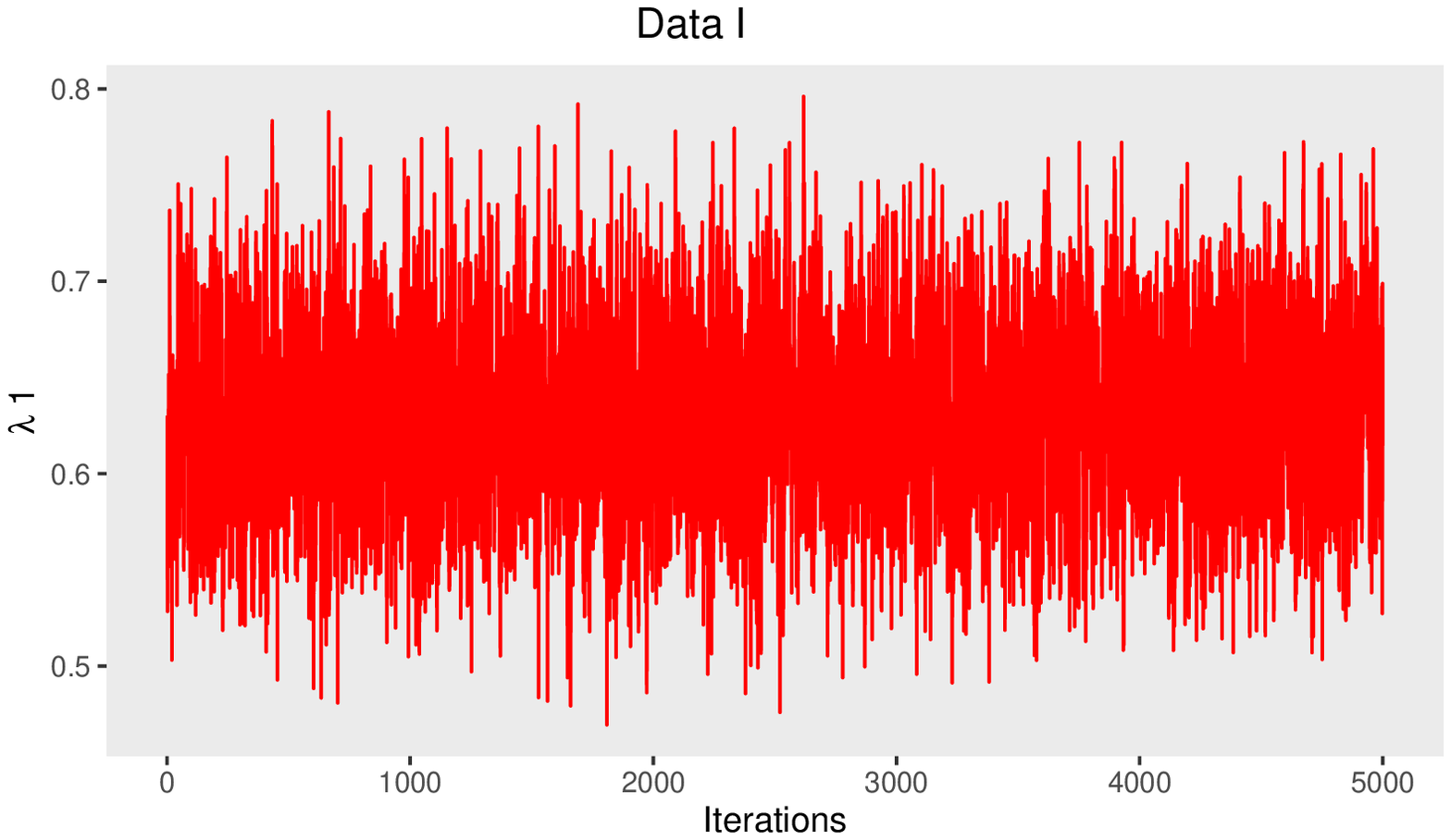}}
		\subfigure[]{\label{c1}\includegraphics[height=1.2in,width=1.6in]{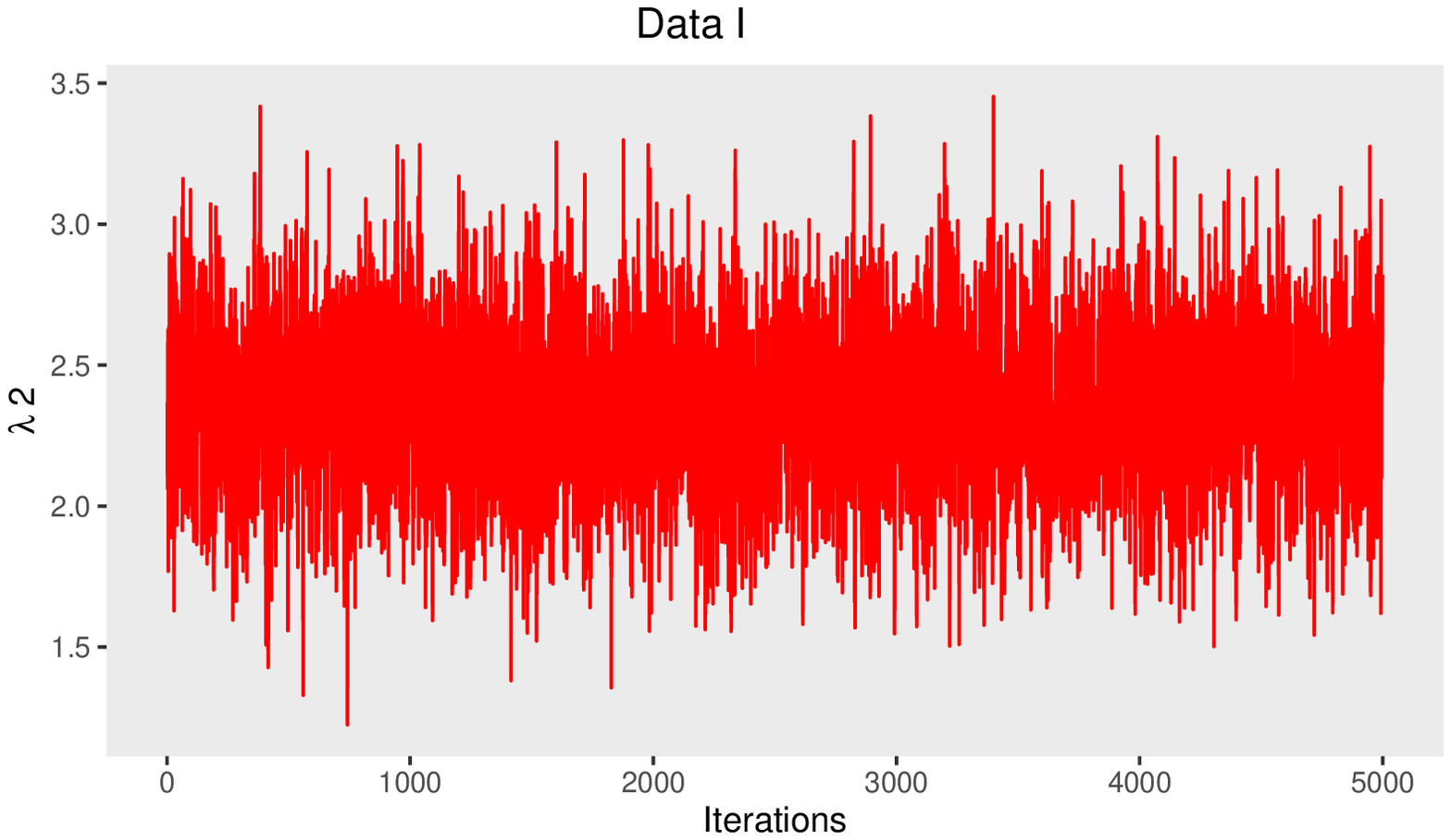}}
		\subfigure[]{\label{c1}\includegraphics[height=1.2in,width=1.6in]{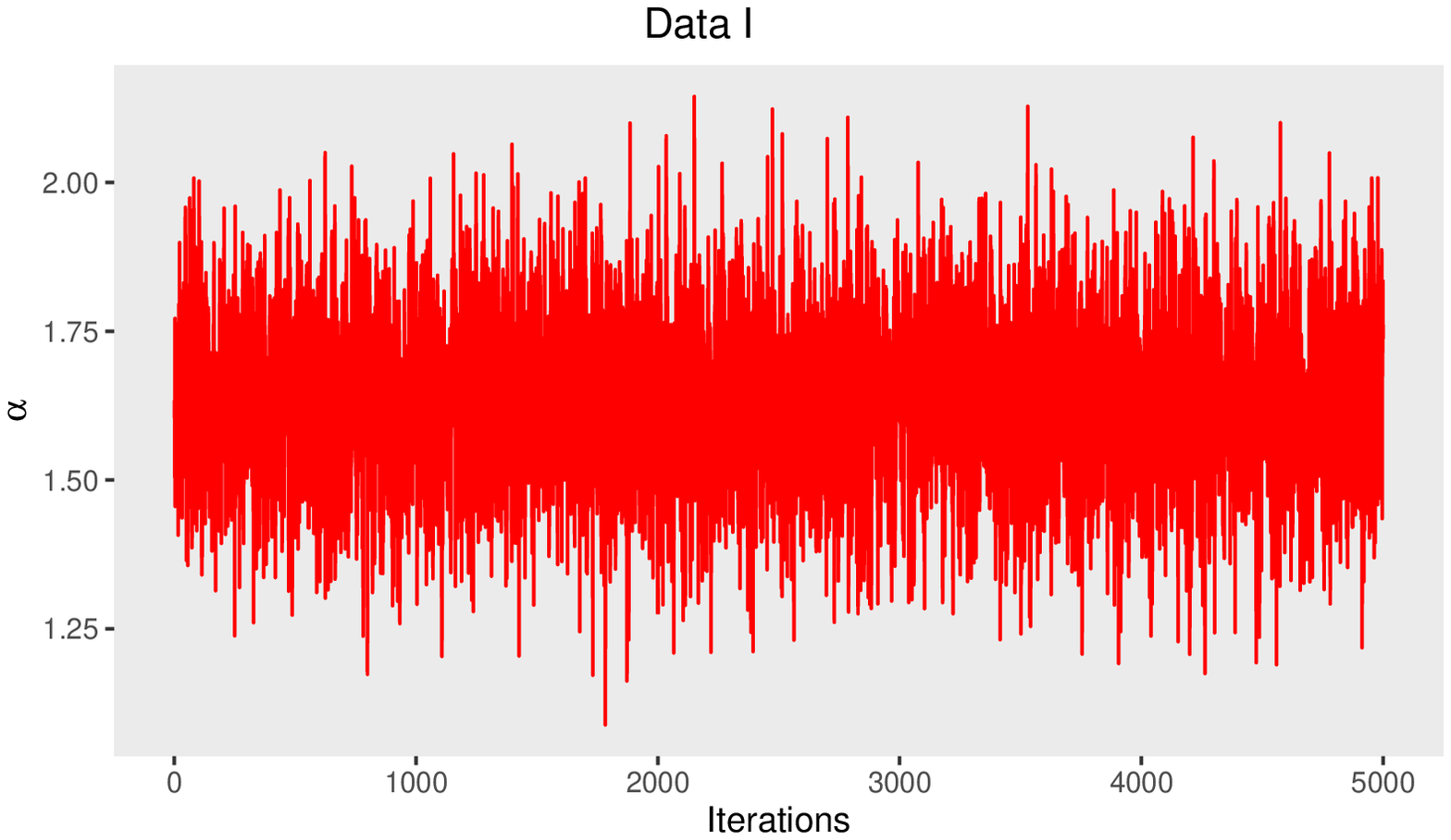}}
		\caption{MCMC samples of (a) $\lambda_{0}$, (b) $\lambda_{1}$, (c) $\lambda_2$ and (d) $\alpha$ for Data I.}
	\end{center}
\end{figure}

\begin{figure}[htbp!]
	\begin{center}
		\subfigure[]{\label{c1}\includegraphics[height=1.2in,width=1.6in]{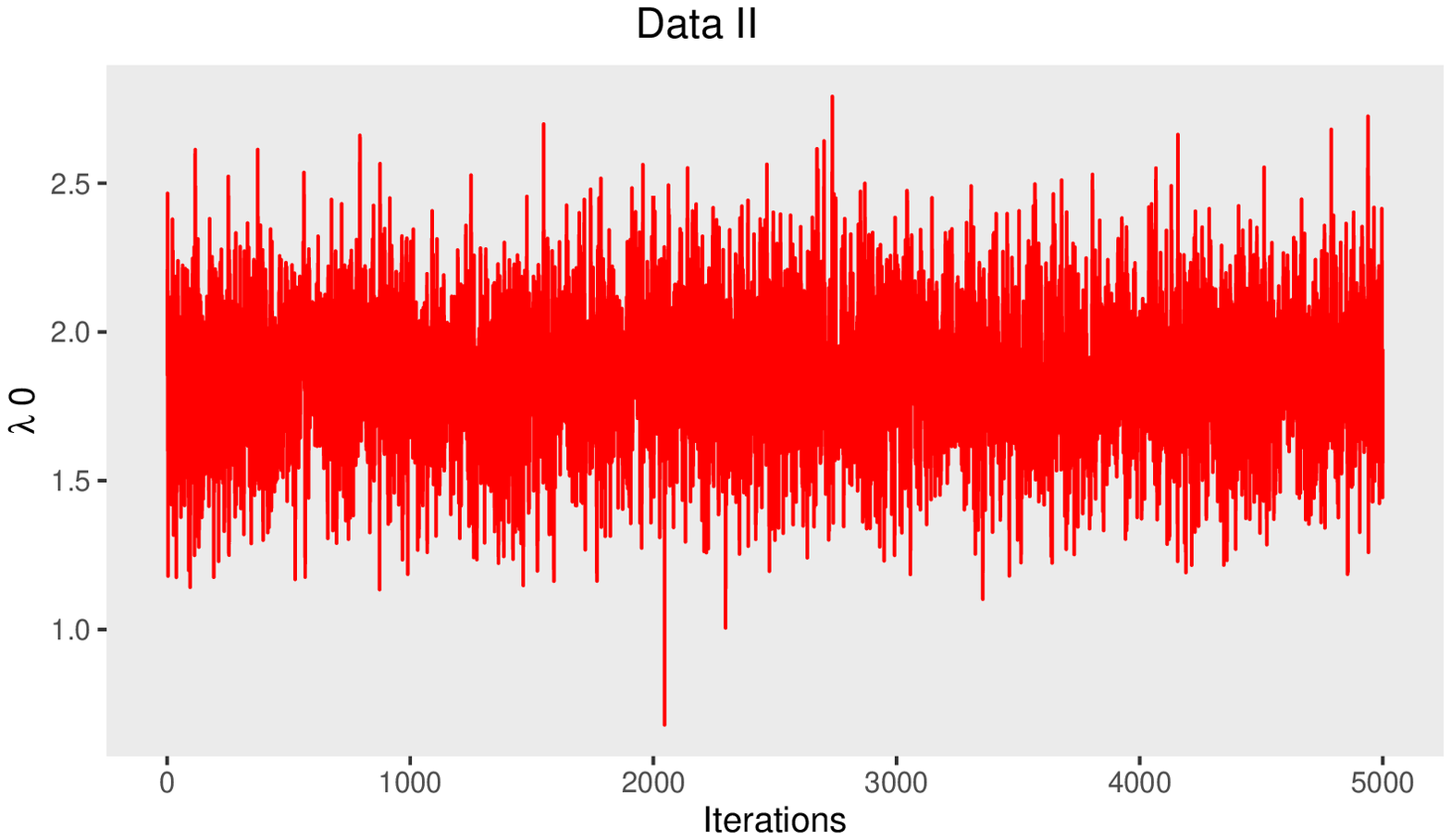}}
		\subfigure[]{\label{c1}\includegraphics[height=1.2in,width=1.6in]{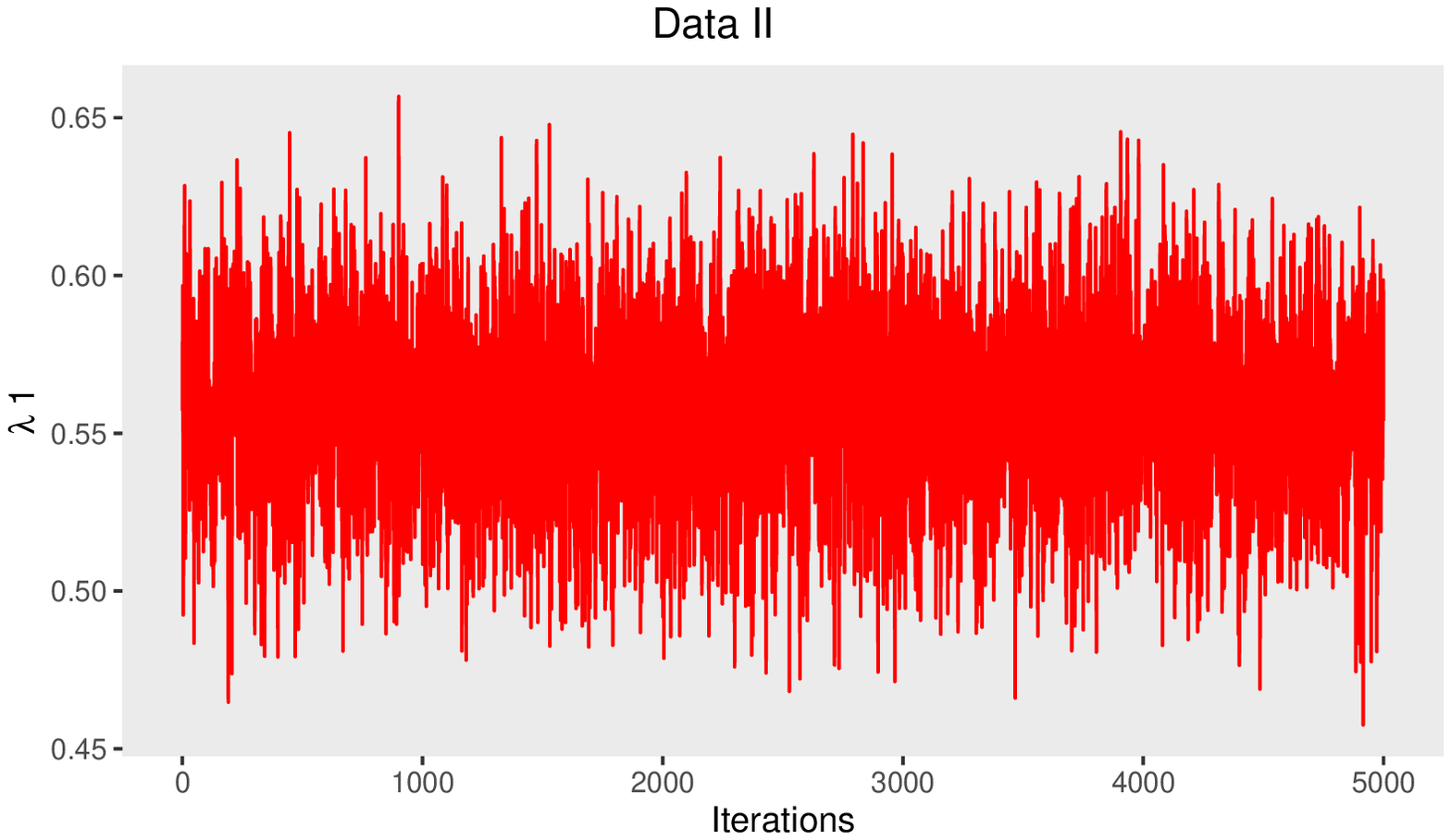}}
		\subfigure[]{\label{c1}\includegraphics[height=1.2in,width=1.6in]{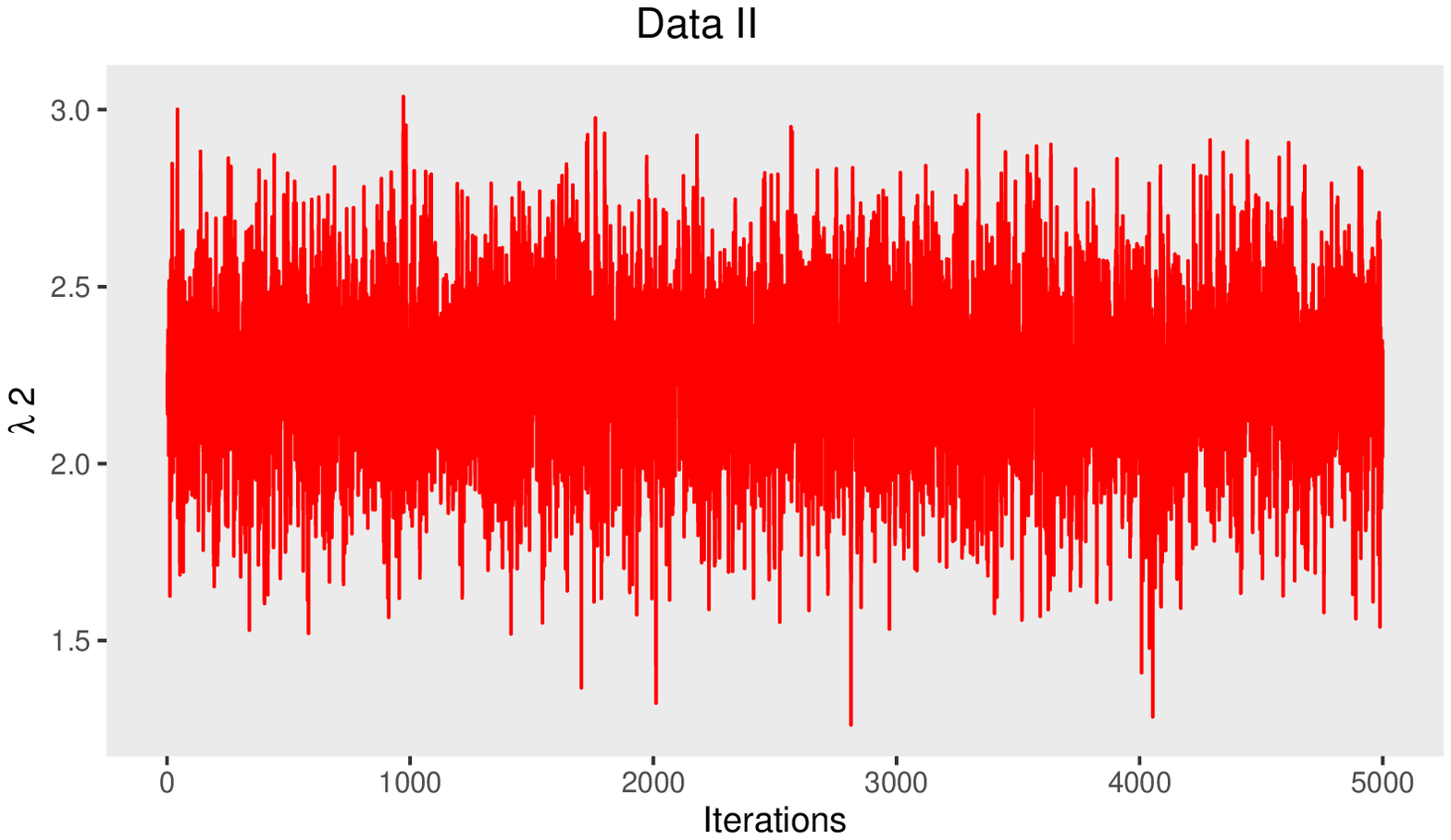}}
		\subfigure[]{\label{c1}\includegraphics[height=1.2in,width=1.6in]{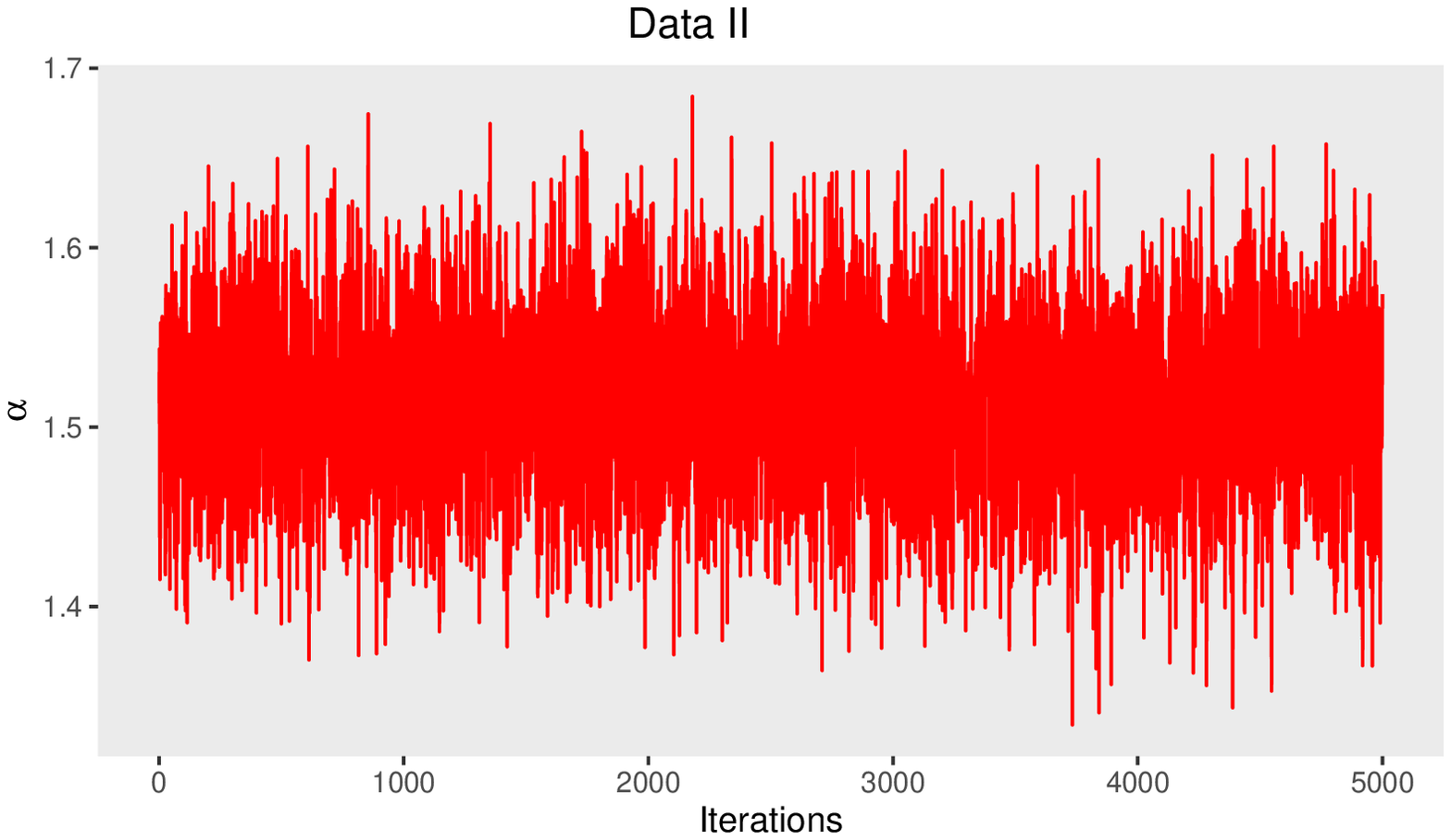}}
		\caption{MCMC samples of (a) $\lambda_{0}$, (b) $\lambda_{1}$, (c) $\lambda_2$ and (d) $\alpha$ for Data II.}
	\end{center}
\end{figure}

\begin{figure}[htbp!]
	\begin{center}
		\subfigure[]{\label{c1}\includegraphics[height=1.2in,width=1.6in]{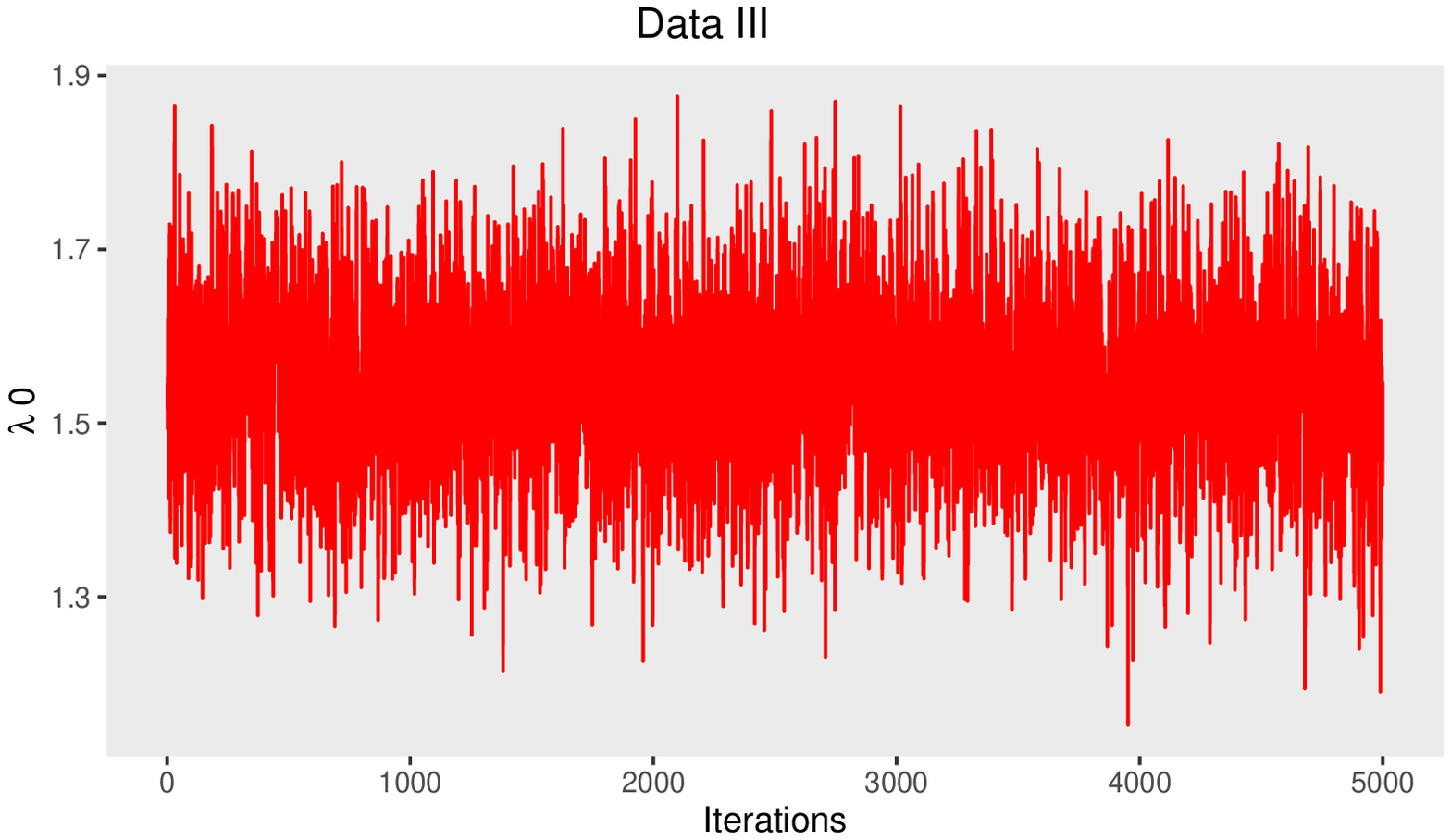}}
		\subfigure[]{\label{c1}\includegraphics[height=1.2in,width=1.6in]{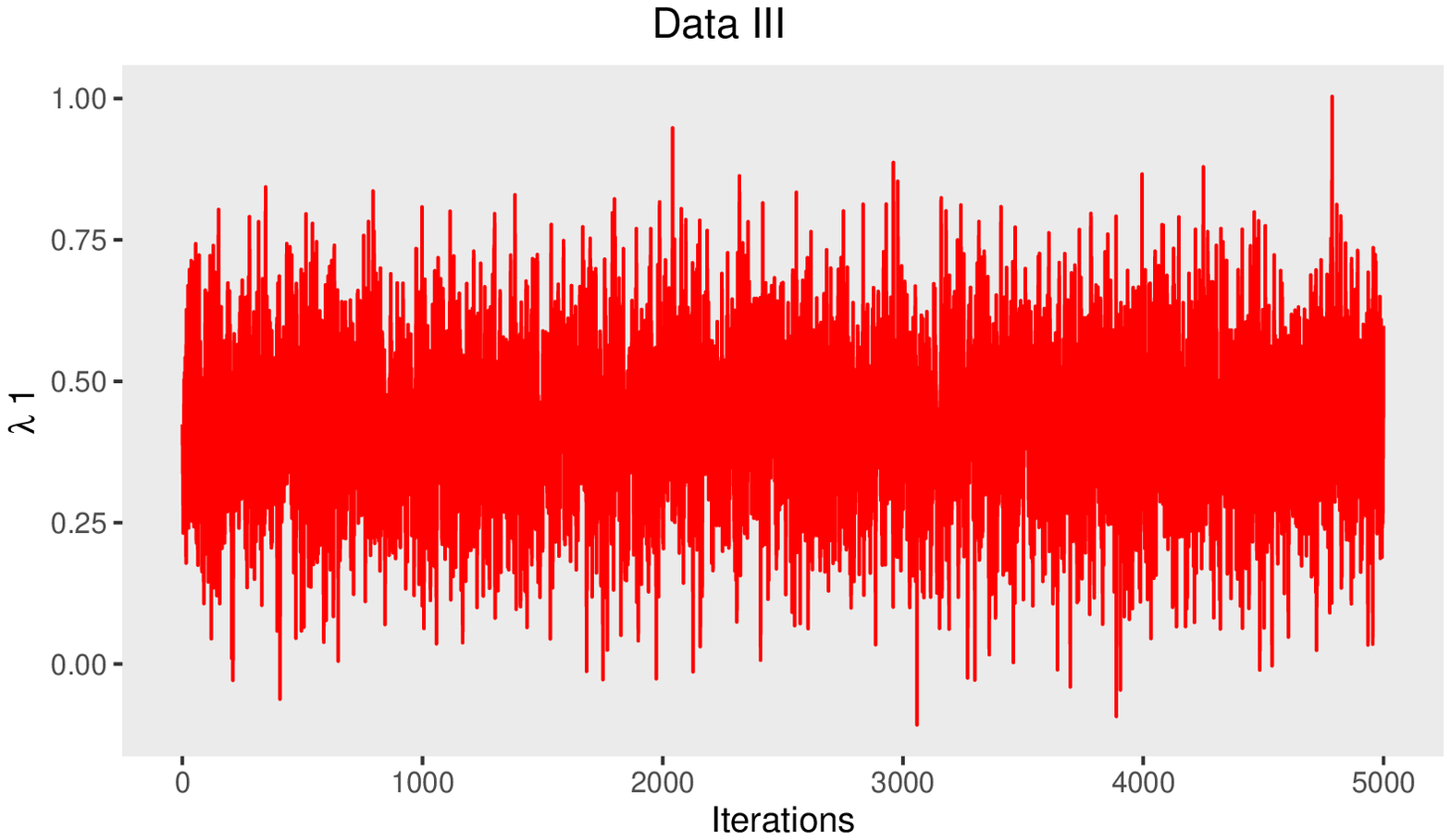}}
		\subfigure[]{\label{c1}\includegraphics[height=1.2in,width=1.6in]{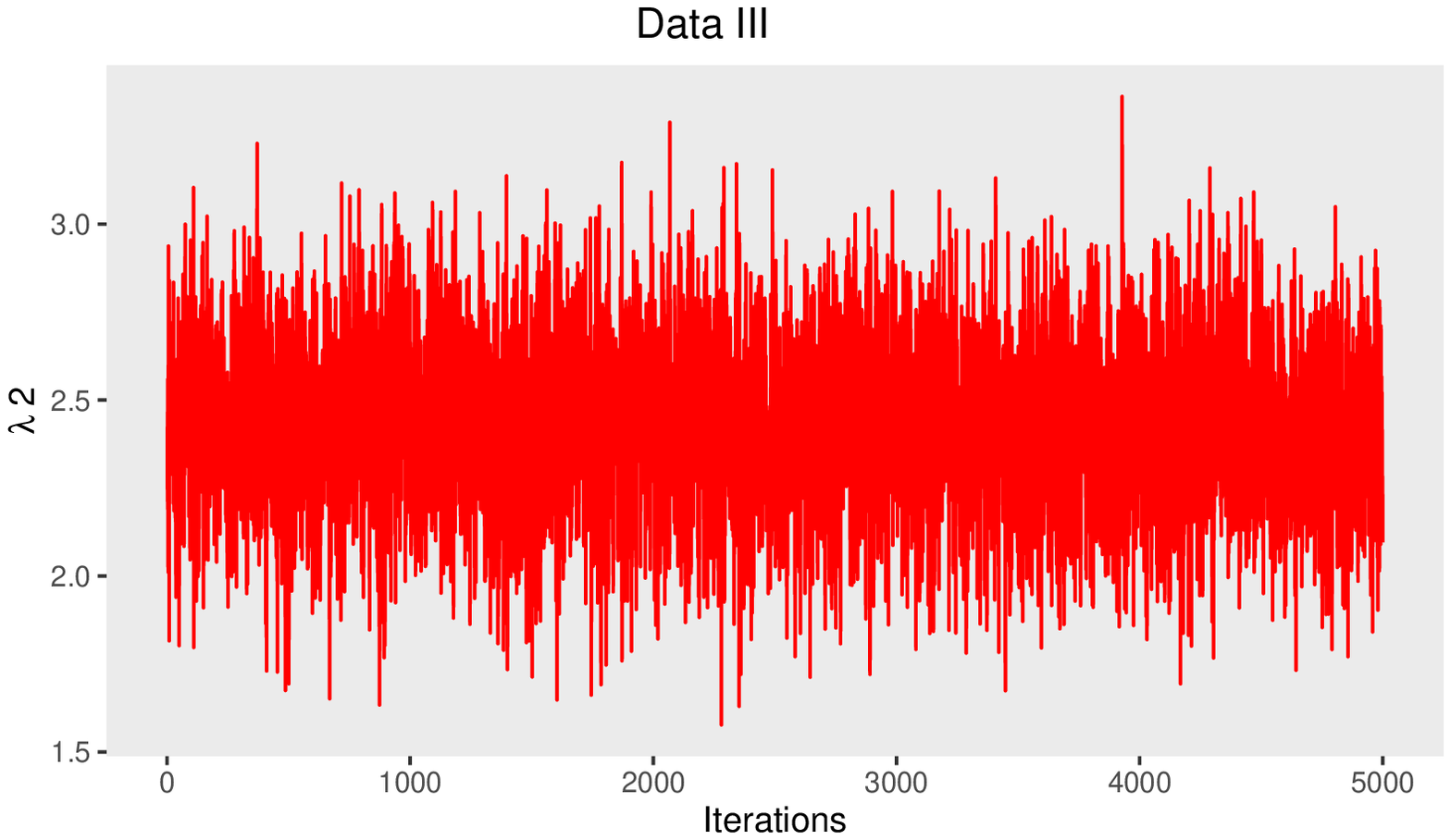}}
		\subfigure[]{\label{c1}\includegraphics[height=1.2in,width=1.6in]{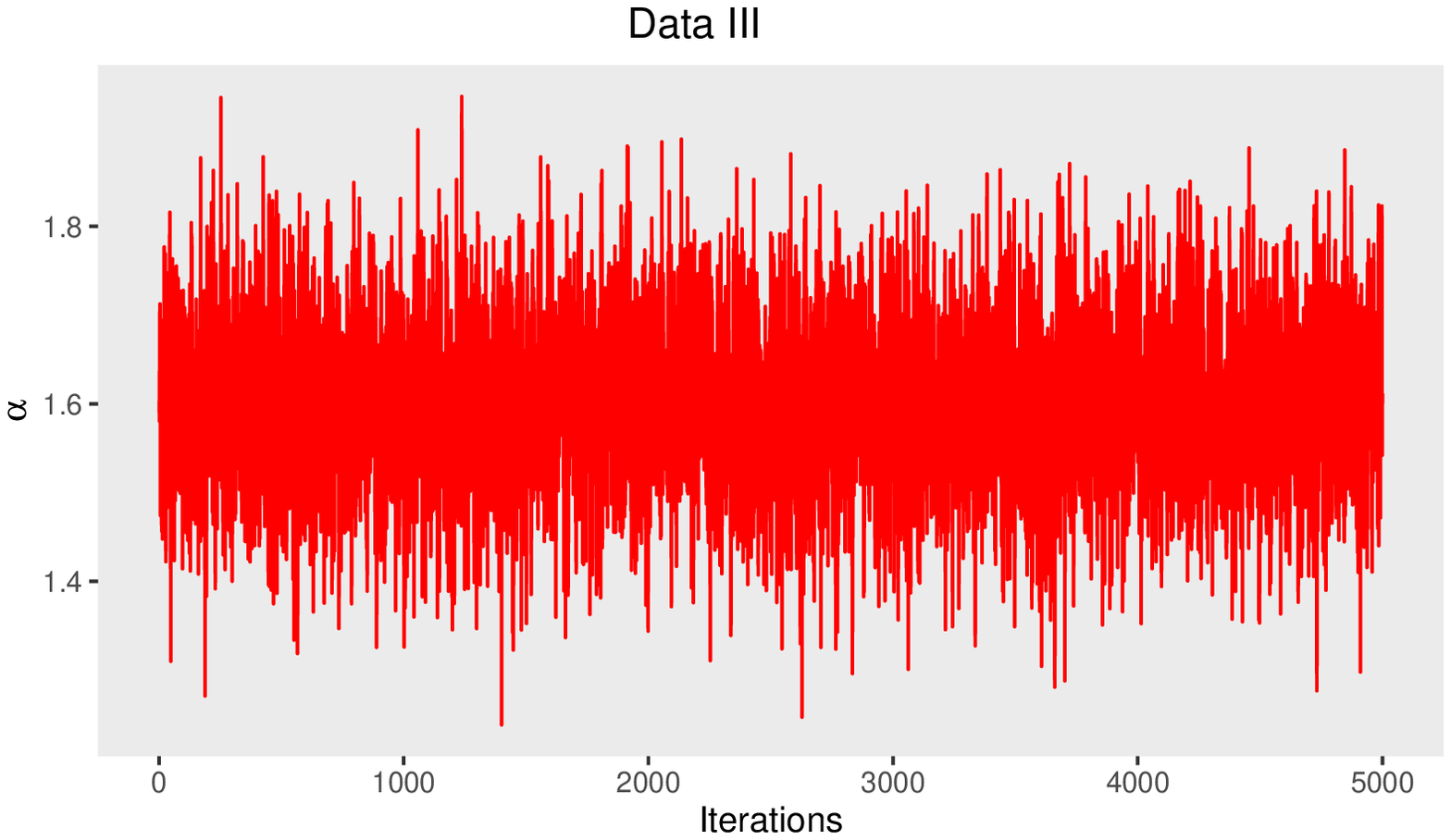}}
		\caption{MCMC samples of (a) $\lambda_{0}$, (b) $\lambda_{1}$, (c) $\lambda_2$ and (d) $\alpha$ for Data III.}
	\end{center}
\end{figure}

\begin{table}[htbp!]
	\begin{center}
		\caption{Optimal censoring scheme using three AT-II PHCS based on real life data set.}
		\label{T9}
		\tabcolsep 7pt
		\small
		\scalebox{1}{
			\begin{tabular}{*{12}c*{11}{r@{}l}}
				\toprule
				\multicolumn{1}{c}{Data} & \multicolumn{1}{c}{Criterion I}&
				\multicolumn{1}{c}{Criterion II}& \multicolumn{1}{c}{Criterion III}   \\
				\midrule
				I& \textbf{0.8587}& \textbf{1.7845} $\times 10^{-4}$& \textbf{71.1219} \\
				II& 1.0333& 4.1915 $\times 10^{-4}$& 58.2624 \\
				III& 0.9533& 3.7313 $\times 10^{-4}$& 54.2736   \\
				\bottomrule
			\end{tabular}}
		\end{center}
	\vspace{-0.5cm}
\end{table}

\section{Conclusion}
In this article, statistical inference for a MOBW distribution based on adaptive Type II progressive hybrid censored data under a dependent competing risk model is developed. Based on both frequentist and Bayesian approaches, point and interval estimation procedures are proposed. Since MLEs of the unknown parameters cannot be derived explicitly, Newton's iterative method has been implemented for this purpose. The existence of MLEs of the unknown parameters are also established. Similarly, due to complex form, Bayes estimates have been obtained by using the MCMC method. Convergence of this MCMC technique has been tested. To evaluate the effectiveness of the suggested techniques, a Monte Carlo simulation has been carried out. It has been discovered that the Bayesian technique produces superior outcomes to the frequentist approach. In the case of interval estimation, HPD credible intervals perform better than the ACIs in terms of their average width. A real-life data set has been analyzed in order to demonstrate the usefulness of the offered methodologies in practice. An optimal progressive censoring plan has been suggested by using different optimality criteria.\\
\\
\textbf{Acknowledgement:}
The author S. Dutta, expresses gratitude to the Council of Scientific and Industrial Research (C.S.I.R.
Grant No. 09/983(0038)/2019-EMR-I), India, for providing financial support received for this research project. Both authors thank the Department of Mathematics, National Institute of Technology Rourkela, India, for providing research facilities.\\\\
\textbf{Conflicts of interest:} Both authors declare that they have no conflict of interest. \\\\

 \bibliography{myref1}
 \renewcommand{\theequation}{A.1.\arabic{equation}}
 \section*{Appendix A.1 Proof of Theorem \ref{th2.1}}
 When $(Y_1,Y_2)\sim MOBW(\lambda_0,\lambda_1,\lambda_2,\alpha)$, then the joint survival function of $(Y_1,Y_2)$ can be expressed as \\
\begin{align}
	\nonumber S_{Y_1,Y_2}(y_1,y_2)= &~ P(Y_1>y_1,Y_2>y_2)\\
	\nonumber =&~P(V_1>y_1,V_2>y_2,V_0>max\{y_1,y_2\})\\
		\nonumber =&~  S_{WE}(y_1;\alpha, \lambda_1) S_{WE}(y_2;\alpha,\lambda_2) S_{WE}(max\{y_1,y_2\};\alpha,\lambda_0),
\end{align}
 where $max\{y_1,y_2\} \in (0,\infty)\times(0,\infty)$. Furthermore, the joint survival function can be expressed as 
 \begin{align*}
 	S_{Y_1,Y_2}(y_1,y_2)
 	= \begin{cases}
 		S_{WE}(y_1;\alpha, \lambda_1)~ S_{WE}(y_2;\alpha, \lambda_{02}), ~~~\mbox{if}~~ y_1<y_2\\
 		S_{WE}(y_1;\alpha, \lambda_{01})~ S_{WE}(y_2;\alpha, \lambda_2),~~~\mbox{if}~~ y_2<y_1 \\
 		S_{WE}(y;\alpha, \lambda_{012}),~~~~~~~~~~~~~~~~~~~~~~~\mbox{if}~~y=y_1=y_2.
 	\end{cases} 	
 \end{align*} 
\section*{Appendix A.2 Proof of Corollary \ref{cor1}}
Using the joint survival function from Theorem $\ref{th2.1}$, the joint PDF can be derived from the expression $-\frac{\partial^2 S_{(Y_1,Y_2)}(y_1,y_2)}{\partial y_1 \partial y_2}$ when $y_1<y_2$ and $y_2<y_1$, respectively.\\
We know that, 
\begin{align*}
	\int_{0}^{\infty} \int_{0}^{y_2} f_{WE}(y_1;\alpha,\lambda_{1}) f_{WE}(y_2;\alpha,\lambda_{02})~dy_1 dy_2=&~ \frac{\lambda_{1}}{\lambda_{012}},\\
	\int_{0}^{\infty} \int_{0}^{y_2} f_{WE}(y_1;\alpha,\lambda_{01}) f_{WE}(y_2;\alpha,\lambda_{2})~dy_1 dy_2=&~ \frac{\lambda_{2}}{\lambda_{012}},
\end{align*}
and further using the full probability formula, when $y_1=y_2=y$ the joint PDF $f_{(Y_1,Y_2)}(y_1,y_2)$ can be derived as $\frac{\lambda_{0}}{\lambda_{012}}f_{WE}(y;\alpha,\lambda_{012})$. Hence the result is provided.  
 \section*{Appendix A.3 Proof of Theorem \ref{th3.1}}
 From $(\ref{3.4})$, we already have
 \begin{align}
 \frac{\partial \log L}{\partial \lambda_j}= &~ \frac{m_j}{\lambda_j} + \frac{m_3}{\lambda_{012}}-A(\alpha), ~~\mbox{for}~ j=0,1,2. \label{A.1.1}
 \end{align}
 For a fixed $\alpha$ equating (\ref{A.1.1}) to zero, one can easily obtain that $\lambda_j= \frac{m_j}{m_{012}} \frac{m}{A(\alpha)}$, for $j=0,1,2$. Furthermore, it can be proved that the associated conditional MLEs maximize the log-likelihood function $\log L (\alpha, \lambda_0,\lambda_1,\lambda_2)$.
 
 We know that, $\log v \leq v-1$ and using these for $v=\frac{\lambda_j}{\widehat{\lambda}_j}$, where $j=0,1,2$ and $v= \frac{\lambda_{012}}{\widehat{\lambda}_{012}}$ we have
 \begin{align}
 m_{j} \log \lambda_j \leq m_j \frac{\lambda_j}{\widehat{\lambda}_j} - m_j + m_j \log \widehat{\lambda}_j = \frac{m_{012}\lambda_j A(\alpha)}{m}- m_j + m_j \log \widehat{\lambda}_j, \label{A.1.2}
 \end{align}
 and
 \begin{align}
 m_3 \ln \lambda_{012} \leq m_3 \frac{\lambda_{012}}{\widehat{\lambda}_{012}} -m_3 + m_3 \log \widehat{\lambda}_{012} = \frac{m_3 \lambda_{012} A(\alpha)}{m} -m_3+ m_3 \log \widehat{\lambda}_{012}. \label{A.1.3}
 \end{align}
 From Equation $(\ref{3.3})$, we can write that
 \begin{align}
 \log L = & ~ m \log \alpha + \sum_{j=0}^{2} m_j \log \lambda_j+ m_3 \log \lambda_{012} +(\alpha-1) \sum_{i=1}^{m}\log y_i - \lambda_{012} A(\alpha).\label{A.1.4}
 \end{align}
 Using (\ref{A.1.2}), (\ref{A.1.3}) and (\ref{A.1.4}) we have
 \begin{align}
 \log L (\lambda_0,\lambda_1,\lambda_2,\alpha) \leq m \log \alpha + \sum_{j=0}^{2} m_j \log \widehat{\lambda}_j + m_3 \log \widehat{\lambda}_{012} +(\alpha-1) \sum_{i=1}^{m}\log y_i -m.
 \end{align}
 Since $m=\widehat{\lambda}_{012}A(\alpha)$, then we have
 \begin{align}
 \nonumber \log L (\lambda_0,\lambda_1,\lambda_2,\alpha) \leq &~ m \log \alpha + \sum_{j=0}^{2} m_j \log \widehat{\lambda}_j + m_3 \log \widehat{\lambda}_{012} +(\alpha-1) \sum_{i=1}^{m}\log y_i - \widehat{\lambda}_{012}A(\alpha)\\
 \nonumber =&~ \log L (\widehat{\lambda}_0,\widehat{\lambda}_1,\widehat{\lambda}_2,\alpha),
 \end{align}
 and equality holds iff $(\lambda_0,\lambda_1,\lambda_2)= (\widehat{\lambda}_0,\widehat{\lambda}_1,\widehat{\lambda}_2)$.

 \renewcommand{\theequation}{A.2.\arabic{equation}}
 \section*{Appendix A.4 Proof of Theorem \ref{th4.1}}
 The marginal posterior density of $\alpha$ from $(\ref{4.5})$ can be obtained as follows
 \begin{align}
 \nonumber \pi_{1}^{*}(\alpha|data) &= \int_{0}^{\infty}\int_{0}^{\infty}\int_{0}^{\infty} \pi(\lambda_{0},\lambda_{1},\lambda_{2},\alpha|data) ~d\lambda_{0}~d\lambda_{1}~d\lambda_{2}\\
  &\nonumber \propto \pi_2(\alpha) \alpha^m \bigg(\prod_{i=1}^{m}y_{i}^{\alpha-1}\bigg)\int_{0}^{\infty}\int_{0}^{\infty}\int_{0}^{\infty} \lambda_{012}^{m_3+a-d_{012}} \bigg(\prod_{j=0}^{2} \lambda_{j}^{m_j+d_j-1}\bigg)\\
  & ~~\times e^{-\lambda_{012}\big(b+A(\alpha)\big)}~d\lambda_{0}~d\lambda_{1}~d\lambda_{2}.
 \end{align}
 Let us assume that $u_1=\lambda_{012}$, $u_2=\frac{\lambda_{1}}{\lambda_{012}}$, and $u_3=\frac{\lambda_{2}}{\lambda_{012}}$. The transformation from $(\lambda_{0},\lambda_{1},\lambda_{2})$ to $(u_1,u_2,u_3)$ is one-to-one transformation such that
 \begin{equation*}
 \lambda_{1}= u_1u_2,~~\lambda_{2}=u_1u_3,~~ \mbox{and}~ \lambda_{0}= u_1(1-u_2-u_3),
 \end{equation*}
 where $0<u_1<\infty$, $0<u_2+u_3<1$. The Jacobian of the above transformation is
 \begin{equation*}
 G=\frac{\partial (\lambda_{0},\lambda_{1},\lambda_{2})}{\partial (u_1,u_2,u_3)}= {\begin{bmatrix}
 	1-u_2-u_3 & -u_1 & -u_1\\
 	u_2 & u_1 & 0\\
 	u_3 & 0 & u_3\\
 	\end{bmatrix}}, ~~ \mbox{and}~~ det(G)= u_1^{2}.
 \end{equation*}
 Therefore, we have
 \begin{align*}
 \int_{0}^{\infty}&\int_{0}^{\infty}\int_{0}^{\infty} \lambda_{012}^{m_3+a-d_{012}} \bigg(\prod_{j=0}^{2} \lambda_{j}^{m_j+d_j-1}\bigg)e^{-\lambda_{012}\big(b+A(\alpha)\big)}~d\lambda_{0}~d\lambda_{1}~d\lambda_{2}\\
 =&\int_{0}^{\infty}\int_{0<u_2+u_3<1} u_{1}^{a+m-1} u_{2}^{m_1+d_1-1} u_{3}^{m_2+d_2-1} (1-u_2-u_3)^{m_0+d_0-1} e^{-u_1(b+A(\alpha))} ~du_1~du_2~du_3\\
 =& \int_{0}^{\infty} u_{1}^{a+m-1} e^{-u_1(b+A(\alpha))} du_1 \times\int_{0<u_2+u_3<1} u_{2}^{m_1+d_1-1} u_{3}^{m_2+d_2-1} (1-u_2-u_3)^{m_0+d_0-1} ~du_2~du_3 \\
 =& \Gamma(a+m) \big(b+A(\alpha)\big)^{-(a+m)}\cdot B(m_1+d_1,m_0+d_0+m_2+d_2)\cdot B(m_2+d_2,m_0+d_0),
 \end{align*}
 where $B(\cdot,\cdot)$ represents the Beta function. This completes the proof of the first part. The joint posterior density of $(\lambda_{0},\lambda_{1},\lambda_{2})$ for given data and $\alpha$, can be written as follows
 \begin{align*}
 \pi_{2}^{*}(\lambda_{0},\lambda_{1},\lambda_{2}|\alpha,data) \propto  \pi(\lambda_{0},\lambda_{1},\lambda_{2},\alpha|data)/ \pi(\alpha|data)
 \end{align*}
 which completes the proof of second part of the theorem.

 \renewcommand{\theequation}{A.3.\arabic{equation}}
 \section*{Appendix A.5 Proof of Theorem \ref{th4.2}}
 The marginal posterior density of $\alpha$ from $(\ref{4.13})$ can be written as
  \begin{align}
 	\pi_1^{*}(\alpha|data) \propto \alpha^{m+a_1-1} e^{-\alpha\big(b_1-\sum_{i=1}^{m}\log y_i\big)} \big(b+A(\alpha)\big)^{-(a+m)}. \label{A.3.7}
 \end{align}
Now taking logarithm on both sides of (\ref{A.3.7}) we get
\begin{align}
	\log \pi_1^{*}(\alpha|data) = (m+a_1-1) \log \alpha -\bigg(b_1-\sum_{i=1}^{m}\log y_i\bigg) \alpha -(a+m) \log \big(b+A(\alpha)\big). \label{A.3.8}
\end{align}
After differentiating (\ref{A.3.8}) partially with respect to $\alpha$ twice we get
\begin{align*}
	\frac{\partial^{2} \log \pi_1^{*}}{\partial \alpha^2} = -\frac{m+a_1-1}{\alpha^2} - (a+m) \frac{(b+A(\alpha))A^{\prime\prime}(\alpha)- \big(A^{\prime}(\alpha)\big)^2}{(b+A(\alpha))^2},
\end{align*}
 where $A^{\prime\prime}(\alpha)=\sum_{i=1}^{m} y_i^{\alpha} \log^2 y_i + \sum_{i=1}^{J}R_i y_i^{\alpha} \log^2 y_i + R^{*} y_m^{\alpha} \log^2 y_m \geq 0$. 
 According to Cauchy-Schwartz inequality, we know that $A(\alpha)A^{\prime\prime}(\alpha)-\big(A^{\prime}(\alpha)\big)^2 \geq 0$ and this implies that  $\frac{\partial^{2} \log \pi_1^{*}}{\partial \alpha^2} \leq 0$. Hence the result is proved.

\end{document}